 \def\11{\mathbf{1}}
\def\setof#1{\left\{{\let\st\colon #1 }\right\}}
 \def\calT{\mathcal{T}}
 \def\BOO={=^{\hspace{0.06cm}\epsilon}_B}
  \def\00{\mathbf{0}}
\def\aa{\mathbf{a}} \def\bb{\mathbf{b}}  
\def\uu{\mathbf{u}} \def\vv{\mathbf{v}}
\def\pp{\mathbf{p}} \def\qq{\mathbf{q}}
 \def\calR{\mathcal{R}} \def\calU{\mathcal{U}}
  \newtheorem{theo}{Theorem}
 \def\opt{\text{\sf OPT}}
\newtheorem{lemma}{Lemma}[section]
\newtheorem{property}{Property}[section]
\newtheorem{corollary}{Corollary}[section]
\newtheorem{definition}{Definition}[section]
\newcommand {\N} {{\rm I\kern-2.5pt N}}
\newcommand {\R} {{\rm I\kern-2.5pt R}}
\def\nil{\text{nil}}
\renewcommand{\paragraph}[1]{\noindent {\bf #1}}
\def\pp{\mathbf{p}}
\newcommand{\eps}{\epsilon}
  \def\aa{\mathbf{a}} \def\bb{\mathbf{b}}
\def\qq{\mathbf{q}}
\newcommand{\ignore}[1]{}
 \def\vv{\mathbf{v}} \def\uu{\mathbf{u}}
\def\opt{\textsf{OPT}}  \def\Uti{\textsf{Uti}}
\def\topU{\textsf{topU}}
\begin{document}

\title{The Complexity of Optimal Multidimensional Pricing\vspace{1cm}}
\author{Xi Chen\thanks{Columbia University. Email: 
  \texttt{\{\hspace{0.03cm}xichen, xiaoruisun\hspace{0.03cm}\}\hspace{0.03cm}@cs.columbia.edu}. Research supported by NSF grant CCF-1149257 and a Sloan fellowship.} \and
Ilias Diakonikolas\thanks{University of Edinburgh. Email: \texttt{ilias.d@ed.ac.uk}. Research supported in part by a startup from the University of Edinburgh and a SICSA PECE grant.} \and
Dimitris Paparas\thanks{Columbia University. Email: \texttt{\{\hspace{0.03cm}paparas, 
  mihalis\hspace{0.03cm}\}\hspace{0.03cm}@cs.columbia.edu}. Research supported by NSF grant CCF-1017955.} \and
Xiaorui Sun$^{\ast}$ \and
Mihalis Yannakakis$^{\ddagger}$
}

\date{}

\setcounter{page}{0}

\maketitle 

\thispagestyle{empty}

\begin{abstract}
We resolve the complexity of revenue-optimal deterministic auctions in 
  the unit-demand single-buyer Bayesian setting, i.e., the optimal item 
  pricing problem, when the buyer's values for the items are inde\-pendent. 
We show that the problem of computing a revenue-optimal pricing can be solved
  in polynomial time for distributions of support size $2$, 
  and {its decision version} is NP-complete for distributions of support size $3$. 
We also show that the problem remains NP-complete for the case of identical distributions.
\end{abstract}
\newpage


\section{Introduction}
 
Consider the following natural pricing scenario: We have a set of $n$ items for sale and a single {\em unit-demand} buyer, i.e., 
a consumer interested in obtaining at most one of the items. 
The goal of the seller is then to set prices for the items in order to maximize her revenue
by exploiting stochastic information about~the buyer's preferences. More specifically, the seller is given
access to a distribution $\mathcal{F}$ from which the buyer's valuations $\vv = (v_1, \ldots, v_n)$ for
  the items are drawn, 
i.e.,  $\vv \sim \mathcal{F}$, and wants to assign a price $p_i$ to each~item~in 
  order to maximize her expected revenue. 
We assume, as is commonly the case, that the buyer here is~{\em quasi-linear}, i.e., her utility 
  for item $i\in [n]$ is $v_i - p_i$, and 
  she will select an item with the maximum nonnegative utility or nothing if no such item exists. 
This is known as the {\em Bayesian Unit-demand Item-Pricing Pro\-blem (BUPP)}~\cite{ChK07}, 
  and has received considerable attention
in the CS literature during the past few years \cite{GHK+05, ChK07, Briest08, ChHMS10, CD11, DDT12a}.

Throughout this paper we focus on the well-studied case~\cite{ChK07, ChHMS10, CD11}
that $\mathcal{F} = \times_{i=1}^n \mathcal{F}_i$ is a {\em product distribution}, 
i.e., the valuations of the buyer for the items are mutually independent random variables. We assume that the $n$ (marginal) distributions
$\mathcal{F}_i$ are discrete and are known to the seller (i.e., the values of the support and the corresponding probabilities are rational numbers
given explicitly in the input). This seemingly simple computational problem appears to exhibit a very rich structure. Prior to our work, even the (very special) case
that the distributions $\mathcal{F}_i$ have support $2$ was 
not well understood: First note that the search space is apparently exponential, 
since the support size of $\mathcal{F}$ is $2^n$. What makes things trickier is that the optimal prices are not necessarily in the support of $\mathcal{F}$ 
(see~\cite{CD11} for a simple example with two items with distributions of support $2$). So, a priori, it was not even clear whether the optimal prices can be described
with polynomially many bits in the size of the input description.

Revenue-optimal pricing is well-studied by economists (see, e.g.,~\cite{W96} for a survey and~\cite{MMW89} for a simple additive case with two items). 
The pricing problem studied in this work fits in the general frame\-work of {\em optimal multi-dimensional mechanism design}, 
a central question in mathematical economics (see~\cite{ManelliV07} and references therein). 
Finding the optimal deterministic mechanism in our setting is equivalent to finding the optimal item-pricing. 
A randomized mechanism, on the other hand, would allow the seller to price lotteries over items~\cite{BCKW10, ChMS10}, albeit this may be less natural in this context.

Optimal mechanism design is well-understood in single-parameter settings for which Myerson~\cite{M81} gives a closed-form characterization for the optimal mechanism. 
Chawla, Hartline and Kleinberg~\cite{ChK07} show that techniques
from Myerson's work can be used to obtain an analogous closed-form characterization 
  (and also an efficient algorithm) for pricing in our setting, albeit with a constant factor
loss in the revenue. In particular, they obtain a factor $3$ approximation to the optimal
  expected revenue (subsequently improved to $2$ in~\cite{ChHMS10}).
Cai and Daskalakis~\cite{CD11} obtain a polynomial-time approximation scheme for distributions with monotone hazard-rate
(and a quasi-polynomial time approximation scheme for the broader class of regular distributions). That is, prior to this work, closed-form characterizations
(and efficient algorithms) were known for {\em approximately} optimal pricing.
The question of whether such a characterization exists for the {\em optimal} pricing has remained open and was posed as an open problem in these works~\cite{ChK07, CD11}.

\medskip 

\noindent {\bf Our Results.} In this paper, we take a principled complexity-theoretic look at the BUPP with independent (discrete) distributions. 
We start by showing (Theorem~\ref{theo:membership}) that the general decision 
  problem is in NP~(and 
  as a corollary, the optimal prices can be described with polynomially many bits). We note that the membership proof is non-trivial because the
  optimal prices may not be in the support. Our proof proceeds by partitioning the space of price-vectors into a set of (exponentially many) cells (defined by the value distributions $\mathcal{F}_i$), so that the optimal revenue within each cell can be found efficiently by a shortest path computation. One consequence of the analysis is that the optimal pricing problem has the integrality property: 
  if the values in the supports are
  integer then the optimal prices are also integer (though they may not belong to the support).

We then proceed to show (Theorem~\ref{theo:supp-2}) that the case in which each marginal distribution has support
   at most  $2$ can be solved in polynomial time. 
Indeed, by exploiting the underlying structure of the problem, we show that 
  it suffices to consider $O(n^2)$ price-vectors to compute the optimal revenue in this case.

{\em Our main result is that the problem is NP-hard, even for distributions of support $3$ 
  \emph{(Theorem~\ref{theo:np-hard-3})} or distributions that are identical but have
  large support \emph{(Theorem~\ref{theo:np-hard-iid})}.} This answers an open problem~first posed in~\cite{ChK07} and also asked in~\cite{CD11, DDT12a}.
The main difficulty in the reductions stems from the fact that, for a general instance of the pricing problem, the expected revenue is a highly complex nonlinear function of the prices.
The challenge is to construct an instance such that the revenue can be well-approximated by a simple function 
  and is also general enough to encode an NP-hard problem.


\medskip

\noindent {\bf Previous Work.} We have already mentioned the main algorithmic works for the independent distributions case 
with approximately-optimal revenue guarantees~\cite{ChK07, ChHMS10, CD11}.
On the lower bound side, Guruswami et al.~\cite{GHK+05} and subsequently Briest~\cite{Briest08} studied the complexity of the problem when the buyer's values for the items 
are {\em correlated}, respectively obtaining APX-hardness and $\Omega(n^{\eps})$  inapproximability, for some constant $\eps>0$.
More recently, Daskalakis, Deckelbaum and Tzamos~\cite{DDT12a} showed that the pricing problem with independent distributions is SQRT-SUM-hard when either the support values or the probabilities are irrational. We note that 
  their reduction relies
  on the fact that, for certain carefully constructed instances,
  it is SQRT-SUM-hard to compare the revenue of two price-vectors.
This has no bearing on the complexity of the problem under the standard discrete model we consider,
for which the exact revenue of a price-vector can be computed efficiently. 

\medskip

\noindent {\bf Related Work.} 
The optimal mechanism design problem (i.e., the problem of finding a revenue-maximizing mechanism in a Bayesian setting) has received considerable attention in the CS community during the past few years. The vast majority of the work so far is algorithmic~\cite{ChK07, ChHMS10, BGGM10, Alaei11, DFK11, HN12, CDW12a, CDW12b}, providing approximation or exact algorithms for various versions of the problem. Regarding lower bounds, Papadimitriou and Pierrakos~\cite{PP11}
show that computing the optimal deterministic single-item auction is APX-hard, even for the case of $3$ bidders. We remark that, if randomization is allowed, this problem 
can be solved exactly in polynomial time via linear programming~\cite{DFK11}.
In very recent work,  Daskalakis, Deckelbaum and Tzamos~\cite{DDT12b} show $\#P$-hardness for computing the optimal randomized mechanism for the case of additive buyers.
We remark that their result does not have any implication for the unit-demand case
due to the very different structures of the two problems.
\medskip

The rest of the paper is organized as follows.
In Section 2 we first define formally the problem, state our main results,
and prove some preliminary basic properties. 
In Section 3 we show that the decision problem is in NP.
In Section 4 we give a polynomial-time algorithm for distributions
with support size 2. Section 5 shows NP-hardness for the case of support size 3,
and Section 6 for the case of identical distributions.
We conclude in Section 7.

\def\itemp{\textsc{Item-Pricing }}
\def\iidp{\textsc{Iid-Item-Pricing }}
\def\knap{\textsc{Integer Knapsack with repetitions }}
\def\Revv{\mathcal{R}}
\def\prob{\Pr}
\def\Util{\mathcal{U}}
\def\g{\gamma}
\def\G{\Gamma}

\section{Preliminaries}

\subsection{Problem Definition and Main Results}

In our setting, there are one buyer and one seller with $n$ items, indexed by 
  $[n]=\{1,2,\ldots,n\}$. 
The buyer~is interested in buying at most one item {(unit demand)}, and
  her valuation of the items are 
  drawn from $n$ inde\-pendent discrete distributions, one for each item.
In particular, we use $V_i=\{v_{i,1},\ldots,v_{i,|V_i|}\}$, 
  {$i \in [n]$,} to denote the support of 
the value distribution of item $i$, where $0\le v_{i,1}<\cdots<v_{i,|V_i|}$.
We also use $q_{i,j}>0$, {$j \in [|V_i|]$,} to denote the
probability of item $i$ having value $v_{i,j}$, with $\sum_{j} q_{i,j}=1$.
Let $V= {\times_{i=1}^n  V_i}$. We use $\prob[\vv]$ to denote
  the probability of the valuation vector being $\vv=(v_1,\ldots,v_n)\in V$,
  i.e., the product of $q_{i,j}$'s over $i,j$ such that $i\in [n]$
  and $v_i=v_{i,j}$.

In the problem, all the $n$ distributions, i.e., $V_i$ and $q_{i,j}$, 
  are given to the seller explicitly. 
The seller then assigns a price $p_i\ge 0$ to each item.
Once the price vector $\pp=(p_1,\ldots,p_n)\in \mathbb{R}_+^n$ is fixed,
the buyer draws her values $\vv=(v_1,\ldots,v_n)$ from
the $n$ distributions independently, i.e., $\vv\in V$ with probability $\prob[\vv]$.
{We assume that the buyer is quasi-linear, i.e., her utility for item $i$ equals $v_i - p_i$.}
Let 
\begin{eqnarray*}
&\Util(\vv,\pp)=\max_{i\in [n]} \left(v_i-p_i\right).&
\end{eqnarray*}
If $\Util(\vv,\pp)\ge 0$, the buyer selects an item $i \in [n]$ that  
maximizes her utility $v_i-p_i$, and the revenue of the seller is $p_i$.
If $\calU(\vv,\pp)<0$, the buyer does not select any item, 
  and the revenue of the seller is $0$. 

Knowing the value distributions as well as the behavior of the buyer described above, 
the seller's objective is to {compute} a price vector $\pp\in \mathbb{R}_+^n$ that maximizes the expected revenue 
\[
\Revv(\pp)=\sum_{i\in[n]} p_i\cdot \Pr\big[\hspace{0.06cm}
\text{buyer selects item $i$}\hspace{0.05cm}\big].
\]
We use $\itemp$ to denote the following decision problem:
The input consists of $n$ discrete distributions, with $v_{i,j}$
and $q_{i,j}$ all being rational and encoded in binary, and a rational number $t \ge 0$. 
The problem asks whether the supremum of the expected revenue $\Revv(\pp)$ over all 
  price vectors $\pp\in \mathbb{R}_{+}^n$ is at least $t$, 
  where we use $\mathbb{R}_+$ to denote the set of nonnegative real numbers. 

{We note that} the aforementioned decision problem is not well-defined without a tie-breaking rule,
i.e., {a rule that specifies} which item the buyer {selects} when there are multiple 
items with {maximum nonnegative utility.}
Throughout the paper, we will use the following \emph{maximum 
  price}\footnote{It may also be called the \emph{maximum value} tie-breaking rule,
  since an item with the maximum price among a set of items with\\ the 
  same utility must also have the maximum value.} 
tie-breaking rule {(which is convenient for our arguments)}: 
when there are multiple items {with maximum nonnegative utility}, the buyer
{selects the item with the smallest index among items with the highest price.
(We note that the critical part is that an item with the highest price is selected.
Selecting the item with the smallest index among them is arbitrary --- and does not affect  the revenue;
however we need to make such a choice so that it makes sense to talk about ``the'' item selected by the buyer in the proofs.)}
We show in Section \ref{tie} that our choice of the tie-breaking rule does not affect the supremum of the expected revenue
(hence,  the complexity of the problem).

We are now ready to state our main results.
First,  we show in Section \ref{npmember-sec}
  that \itemp\ is in NP.

\begin{theo}\label{theo:membership}
\itemp\ is in NP.
\end{theo}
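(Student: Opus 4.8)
The plan is to show that the supremum $\sup_{\pp\in\mathbb{R}_+^n}\Revv(\pp)$ is actually \emph{attained}, and attained at one of a family of rational points each admitting a polynomial-size \emph{implicit} description; an NP machine then guesses such a point and verifies the revenue there in polynomial time. The starting point is a hyperplane arrangement $\mathcal H$ in $\mathbb{R}^n$ consisting of the coordinate hyperplanes $p_i=0$; the ``value'' hyperplanes $p_i = v_{i,j}$ for every item $i$ and $j\in[|V_i|]$; the ``comparison'' hyperplanes $p_i - p_k = v_{i,j}-v_{k,l}$ for every ordered pair $i\neq k$ and all $j,l$; and the ``price-equality'' hyperplanes $p_i=p_k$ for $i<k$. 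There are polynomially many of these, each with normal vector in $\{0,\pm 1\}^n$ and rational constant term of polynomial bit-length, and the (relatively open) faces of $\mathcal H$ partition $\mathbb{R}_+^n$.

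The key structural fact is that on each face $F$, and for every valuation vector $\vv\in V$, the item the buyer selects is the same at every $\pp\in F$: whether item $i$ is affordable (the sign of $p_i-v_{i,j}$), how two utilities compare (the sign of $(p_i-p_k)-(v_{i,j}-v_{k,l})$), and how the maximum-price rule breaks a tie (the sign of $p_i-p_k$) are all constant across a face. Writing $\sigma_F(\vv)$ for this common choice (with $p_{\mathrm{none}}:=0$), the revenue on $F$ is the affine function $\ell_F(\pp)=\sum_{\vv\in V}\prob[\vv]\,p_{\sigma_F(\vv)}$, which extends affinely to the closure $\bar F$. Since a selected item is affordable, $p_{\sigma_F(\vv)}\le v_{\sigma_F(\vv)}\le M:=\max_{i,j}v_{i,j}$ on $F$, hence on $\bar F$, so $0\le\ell_F\le M$ on $\bar F$; as $\bar F\subseteq\mathbb{R}_+^n$ is a pointed polyhedron, the bounded affine function $\ell_F$ attains its maximum over $\bar F$ at a vertex. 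I would also prove that passing to a vertex never loses revenue: for any $w\in\bar F$ one has $\Revv(w)\ge\ell_F(w)$. Indeed, for each $\vv$ with $i=\sigma_F(\vv)$, letting $\pp\to w$ inside $F$ shows item $i$ still has nonnegative, utility-maximal value at $w$, so by the maximum-price rule the buyer at $w$ picks a utility-maximal item of price at least $p_i(w)$ (and contributes $\ge 0$ when $\sigma_F(\vv)=\mathrm{none}$); summing over $\vv$ gives the claim. This is the only step that uses the specific tie-breaking rule.

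Putting these together: for an arbitrary $\pp$, taking the face $F$ containing it in its relative interior, $\Revv(\pp)=\ell_F(\pp)\le\max_{w\in V(\bar F)}\ell_F(w)\le\max_{w\in V(\bar F)}\Revv(w)$, so $\sup_{\pp}\Revv(\pp)$ equals $\max_{w}\Revv(w)$ over the finitely many vertices $w$ of $\mathcal H$, the reverse inequality being trivial. Each such vertex is the unique solution of a system $A\pp=b$ where $A$ is a nonsingular $n\times n$ submatrix of the $\{0,\pm1\}$-coefficient matrix of $\mathcal H$ and $b$ the corresponding constants, so by Cramer's rule it is rational with polynomially many bits. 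The NP algorithm therefore guesses $n$ hyperplanes of $\mathcal H$, solves for the candidate $w$, and checks $w\ge\mathbf{0}$ and $\Revv(w)\ge t$. The value $\Revv(w)$ is computed in polynomial time \emph{without} the exponential sum over $V$, by exploiting independence: $\Revv(w)=\sum_{i}p_i\sum_{j:\,v_{i,j}\ge p_i}q_{i,j}\prod_{k\neq i}\big(\sum_{l}q_{k,l}\big)$, where the inner sum ranges over those $l$ for which item $k$ at value $v_{k,l}$ does not beat item $i$ at value $v_{i,j}$ under the price-then-index rule --- a condition decided by rational comparisons. (As a byproduct, since the normals have the form $\mathbf{e}_i$ or $\mathbf{e}_i-\mathbf{e}_k$, these linear systems are totally unimodular, so when the support values are integers every vertex, hence an optimal price vector, is integral.)

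The main obstacle is conceptual rather than technical: as the paper notes, the revenue is a non-linear, non-continuous function of $\pp$ whose optimum need not lie in the support, so a priori it is unclear the optimum even has a short description. The arrangement pins all the ``breakpoints'' of the revenue to a polynomial-size family of hyperplanes with very simple normals, reducing the optimum to a vertex of that arrangement; the delicate point is the interaction with the tie-breaking rule, namely the boundary inequality $\Revv(w)\ge\ell_F(w)$, which is exactly what licenses replacing a within-cell optimum by a vertex of the full arrangement without decreasing the revenue.
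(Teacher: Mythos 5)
Your proof is correct and follows essentially the same route as the paper: the same hyperplane arrangement (value hyperplanes $p_i=v_{i,j}$ and difference hyperplanes $p_i-p_k=v_{i,j}-v_{k,l}$), the same key facts that the buyer's choice is constant on each cell so the revenue is affine there and computable in polynomial time via independence, and the same boundary inequality $\Revv(w)\ge \ell_F(w)$ driven by the maximum-price tie-breaking rule. The only substantive difference is in the endgame: you certify a vertex of the arrangement directly via generic pointed-polyhedron/Cramer's-rule arguments (with integrality from the $\{0,\pm1\}$ network structure), whereas the paper certifies a cell and recovers its optimal point by casting the cell's constraints as a difference-constraint system solved by shortest paths from a source node --- both yield polynomial-size certificates and the same integrality corollary.
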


Second, we present in Section \ref{support2-sec} a polynomial-time
  algorithm for \itemp\ when all the distributions have support size
  at most $2$. 

\begin{theo} \label{theo:supp-2}
\itemp\ is in {P} when every distribution has support size
  at most $2$.
\end{theo}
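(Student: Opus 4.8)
The plan is to reduce \itemp\ to evaluating a polynomially large, explicitly described family of candidate price vectors, one of which is guaranteed to be revenue‑optimal. The enabling fact is that for \emph{any} fixed $\pp$ the exact revenue $\Revv(\pp)$ is computable in polynomial time: conditioned on each realized value of item $i$, the event ``the buyer selects item $i$'' is a conjunction over the other items of events each depending on a single item's realized value (``item $j$'s utility is smaller, or tied but losing the maximum‑price tie‑break''), so $\Pr[\text{buyer picks }i]$ factors as a sum over item $i$'s two values of products over the other items, and $\Revv(\pp)=\sum_i p_i\Pr[\text{buyer picks }i]$. Hence it suffices to pin down a polynomial‑size candidate set containing a maximizer; a rational maximizer exists by the analysis behind Theorem~\ref{theo:membership}.

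Write $V_i=\{a_i,b_i\}$ with $a_i\le b_i$, $\delta_i=b_i-a_i$ the ``gap'', and $r_i=q_{i,|V_i|}$. A clean preliminary (which also holds in general support) is that at any optimal $\pp$ one has $p_i\ge a_i$ for all $i$: otherwise there is an item $i$ with $p_i<a_i$, which is then available with utility at least $\mu:=a_i-p_i>0$ in every configuration, so every configuration's winner has utility at least $\mu$, and raising \emph{all} prices simultaneously by any $\epsilon\in(0,\mu)$ preserves all availabilities, utility orderings and ties, hence leaves every winner the winner while increasing the revenue by exactly $\epsilon$ --- contradicting optimality. The heart of the proof is then a structural lemma asserting that there is an optimal $\pp$ of a \emph{canonical} form, in which every item's price lies in $\{a_i,b_i\}\cup\{b_i-\delta_j:j\in[n]\}$ and, more importantly, the per‑item choices are coordinated by only $O(n^2)$ global parameters: concretely, a distinguished ``pivot'' item $j$ together with a threshold $\beta$ on the high values, so that items with small $b_i$ are dropped (priced above $b_i$), the pivot and its companions are priced at $b_i-\delta_j$ (offering utility exactly $\delta_j$ at value $b_i$, the pivot itself exactly at $a_j$), and the remaining items are priced at $b_i$ (offering utility $0$). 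Given the lemma, one enumerates the $O(n^2)$ canonical vectors, evaluates $\Revv$ on each, and answers ``yes'' iff the maximum is $\ge t$.

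To establish the canonical form I would start from an optimal $\pp$ (with $p_i\ge a_i$) and apply a sequence of exchange/perturbation steps, each analyzed configuration by configuration: (a) \emph{rounding} --- within a cell (in the sense of Theorem~\ref{theo:membership}) the revenue is affine, so one may slide each price to the nearest member of $\{a_i,b_i\}\cup\{b_i-\delta_j\}$ without changing the buyer's selected item in any configuration; (b) a \emph{cannibalization} argument, showing a cheap always‑available item may be raised or dropped without loss; (c) an \emph{equalization} argument, showing that items sold only at value $b_i$ may be made to offer utilities from a \emph{bounded} set of levels $\{0\}\cup\{\delta_j\}$ --- lowering a non‑maximal such item's offered utility toward a level only reroutes sales to weakly higher‑revenue items, because an equal‑utility tie is won by the largest‑$b_i$ item; and (d) a \emph{monotonicity} argument pinning the assignment of items to ``dropped / priced $b_i$ / priced $b_i-\delta_j$'' to thresholds on $b_i$, which is what cuts the surviving canonical vectors down to $O(n^2)$.

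The main obstacle is precisely this structural lemma. The revenue is a genuinely nonlinear (only piecewise‑affine, over exponentially many cells) function of $\pp$, and the buyer's chosen item is a delicate combinatorial function of the $2^n$ value configurations, so a single change to one price can reroute sales in exponentially many configurations at once; certifying that each exchange step is revenue‑nondecreasing therefore demands careful global accounting, and correctly identifying the small set of ``utility levels'' and the monotone threshold structure (steps (c)--(d)) is where essentially all the difficulty sits. By contrast, the polynomial‑time evaluation of $\Revv(\pp)$, the bound $p_i\ge a_i$, and the final enumeration are routine.
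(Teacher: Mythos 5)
Your overall strategy --- reduce to a polynomial-size family of canonical price vectors indexed by a pivot item and a threshold, then evaluate the (efficiently computable) revenue on each --- is exactly the paper's, and your guessed candidate set $\{a_i,b_i\}\cup\{b_i-\delta_j\}$ with a single pivot and a monotone threshold on the $b_i$'s is essentially the correct canonical form (the paper's sets $P_k^*$ plus a monotonicity lemma; one correction: the items outside the pivot's class are priced \emph{at} $b_i$, not above it --- they are still sold when their value is $b_i$, so ``dropping'' them strictly loses revenue). The preliminary reduction to $p_i\ge a_i$ via a uniform price increase is fine, as is the polynomial-time evaluation of $\Revv(\pp)$.

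The gap is that the structural lemma, which you yourself identify as carrying ``essentially all the difficulty,'' is not proved. The paper's proof of this theorem consists precisely of five exchange lemmas (its Lemmas 4.1--4.5), each established by constructing an explicit perturbation $\pp'$ (raise by $\epsilon$ the prices of a carefully chosen set $S$ of items sharing a common offered utility, or swap one price between $b_c-t_k$ and $b_c$) and verifying, by a case analysis over all valuation vectors, that $\Revv(\vv,\pp')\ge\Revv(\vv,\pp)$ with strict inequality somewhere; the final monotonicity lemma even requires a two-sided argument (if $\Revv(\pp)\ge\Revv(\pp')$ then $\Revv(\pp^*)>\Revv(\pp)$). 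None of that accounting appears in your proposal, and steps (b)--(d) are only named. Moreover, the one concrete mechanism you do offer, step (a), does not work as stated: within a cell of the NP-membership partition the revenue is affine, but its maximizer over the closure is a vertex whose coordinates are sums of a value and a \emph{chain} of value differences (the shortest-path characterization), not necessarily members of $\{a_i,b_i\}\cup\{b_i-\delta_j\}$, and passing to the closure can change the buyer's selection in some configurations, so ``sliding without changing the selected item'' is unjustified. Finally, the exchange arguments need non-degeneracy (distinct $b_i$'s, $a_i$'s and gaps, and $0<q_i<1$) to identify a unique pivot and make the strict improvements go through; the paper handles general instances by a symbolic perturbation of the values and a limiting argument, a step your outline omits entirely.
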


{As our main result, we resolve the computational complexity of the problem.} 
We show that it is NP-hard even when all distributions have
support size at most $3$ (Section \ref{nphard-sec}), or when they are identical (Section~\ref{iid-sec}).
\begin{theo} \label{theo:np-hard-3}
\itemp\ is NP-hard even when every distribution has
  support size at most $3$.
\end{theo}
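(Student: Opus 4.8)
The plan is to reduce from a known NP-hard problem — I would try PARTITION (or SUBSET-SUM, or perhaps a variant of KNAPSACK, which the macro \knap hints the authors use) — and build an instance of \itemp\ in which every marginal distribution has support size exactly $3$, such that the optimal revenue crosses a threshold $t$ if and only if the combinatorial instance is a YES-instance. The central design constraint is that the expected revenue $\Revv(\pp)$ is a wildly nonlinear function of $\pp$, so the gadget items must be engineered so that, in the relevant range of price vectors, $\Revv(\pp)$ collapses to (or is sandwiched between tight bounds by) an affine/separable function of a few discrete ``decision'' quantities. Concretely, I would introduce one item (or a small bundle of items) per element of the ground set, each with a $3$-point support $\{0 \text{ (or a tiny baseline)}, a_i, b_i\}$ chosen so that the seller's only sensible choice for that item is between two candidate prices, encoding the binary decision ``put $i$ on the left / right side of the partition.''

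**The key steps, in order, would be:** (1) Fix a discretization scale so that all $v_{i,j}$ and $q_{i,j}$ are rationals of polynomial bit-length, and pick the supports so that an optimal pricing can be assumed (via Theorem~\ref{theo:membership}'s integrality-type structure and the structural lemmas of Section~2) to lie in a small explicit grid of ``canonical'' price vectors — this kills the continuum and reduces the analysis to finitely many combinatorial cases. (2) Prove a \emph{decoupling / locality} lemma: for the constructed instance, because a unit-demand buyer buys only the single best item, one can argue that in any near-optimal pricing each gadget item is priced at one of its two intended candidate values, and the contribution of item $i$ to $\Revv(\pp)$ depends (up to a controlled additive error) only on $p_i$ and on one global aggregate — e.g. the sum $\sum_i w_i x_i$ of chosen weights, which the NP-hard problem wants to hit exactly. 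The interaction between items is what makes unit demand hard to analyze; the trick is to make the ``second-best'' item almost never the winner, so cross terms vanish. (3) Show the revenue function, restricted to the canonical grid, equals $C + \Phi\!\left(\sum_i w_i x_i\right)$ for an explicitly computable constant $C$ and a function $\Phi$ that is uniquely maximized when the argument equals the PARTITION target; a standard choice is to make $\Phi$ concave-then-decreasing (a ``tent'') peaked at the target, realizable by adding a couple of auxiliary items whose revenue penalizes imbalance. (4) Set $t := C + \max \Phi$ and check both directions of the reduction, using the error bounds from step (2) to ensure no ``cheating'' pricing off the canonical grid can beat $t$.

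**The main obstacle** I anticipate is step (2)–(3): controlling the error terms so that the affine approximation of $\Revv(\pp)$ is \emph{tight enough} to separate YES from NO instances while the supports stay size $3$ and the numbers stay polynomially bounded. With support $2$ the problem is in P (Theorem~\ref{theo:supp-2}), so the third support point must be doing essential work — presumably it is what lets a single item simultaneously encode a binary choice \emph{and} a weight, and it is exactly this third point that creates nonlinear cross-interactions one must tame. I would spend most of the effort proving that any pricing deviating from the canonical grid (e.g. pricing some gadget item ``in between,'' or setting a price outside its support) is strictly dominated, so that the optimum is genuinely attained on the grid where $\Revv$ is affine in the aggregate. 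For the identical-distributions variant (Theorem~\ref{theo:np-hard-iid}), I expect the same skeleton but with all items sharing one large support; the encoding of element-specific weights then has to be done by the \emph{prices the seller chooses} rather than by the distributions, so the combinatorial structure moves from the input distributions into the price vector, and the hard problem to reduce from is likely an \emph{integer} knapsack-type problem with repetitions (matching the \knap\ macro), where multiplicities of a common item type play the role of the subset.
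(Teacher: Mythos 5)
Your skeleton matches the paper's: reduce from \textsc{Partition}, give each item support $\{0,a,b\}$ with almost all of the probability mass at $0$ so that valuations with many nonzero entries are negligible, prove a lemma restricting attention to a finite canonical grid of price vectors (the paper's Lemma~\ref{lem:prices-in-support} shows an optimal $\pp$ can be taken in $\{a,b\}^n$, so each item's price is a binary choice encoding which side of the partition it goes to), and set a threshold separating YES from NO. Your reading of the identical-distributions case (reduce from integer knapsack with repetitions, with the multiplicities of the chosen prices playing the role of the solution) is also essentially what the paper does.

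The genuine gap is in your steps (2)--(3), and it is the heart of the construction. You propose to ``make the second-best item almost never the winner, so cross terms vanish,'' leaving a contribution that is essentially separable across items, and then to recover a nonlinear objective by bolting on ``a couple of auxiliary items whose revenue penalizes imbalance.'' This cannot work as stated: if the pairwise cross terms vanish and each item's contribution depends only on its own price, the revenue is a sum of independent per-item terms, which is maximized greedily item by item --- no NP-hard problem can be encoded there. The paper does the opposite. It tunes the two free probabilities per item ($q_i=\Pr[v_i=b]$ and $r_i=\Pr[v_i=a]$) so that the identity $bq_i = a(q_i+r_i)-ar_it_i$ makes the \emph{first-order} (single-nonzero-entry) contribution of each item the same whether it is priced at $a$ or at $b$; the entire signal then lives in the \emph{second-order} cross terms, i.e., valuations with exactly two nonzero items, where which of the two the buyer takes depends on both prices. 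After cancellation these cross terms sum to $\frac{1}{M^2}\bigl(\sum_{i\in S}c_i\bigr)\bigl(\sum_{j\in T}c_j\bigr)$ up to $O(n^3/N^3)$ error --- a concave quadratic in $\sum_{i\in S}c_i$ peaked exactly at the balanced partition, which is the ``tent'' $\Phi$ you wanted, obtained with no auxiliary items at all. Only the third- and higher-order terms are suppressed. This also explains why the third support point is essential, which you flagged but did not resolve: the point $0$ carries the bulk mass that makes the perturbative expansion valid, while having \emph{two} nonzero values gives two tunable probabilities per item, which is exactly what is needed to cancel the first-order dependence on the pricing; with support~$2$ there is only one free probability and this cancellation is impossible. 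Without this cancellation-plus-quadratic-form mechanism (or a worked-out substitute for your unspecified penalty gadgets), the reduction does not go through.
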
 

\begin{theo} \label{theo:np-hard-iid}
\itemp\ is NP-hard even when the distributions are identical.
\end{theo}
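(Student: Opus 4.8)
The plan is to reduce from \knap\ (Integer Knapsack with repetitions: given types $j\in[k]$ with positive integer sizes $a_j$ and profits $c_j$, a capacity $B$ and a target $T$, is there $x\in\Z_{\ge 0}^k$ with $\sum_j a_j x_j\le B$ and $\sum_j c_j x_j\ge T$?), which is NP-complete; together with Theorem~\ref{theo:membership} this also yields NP-completeness of \itemp\ for identical distributions. Since we only need hardness, we may further assume (by the standard positional trick, or simply by reducing from $0/1$ Knapsack) that every optimal solution uses a polynomially bounded number of copies in total, so that the number of items $n$ in the constructed instance can be taken polynomial.

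Given such a \knap\ instance, I would construct a single distribution $\calD$ — shared by all $n$ items — together with a revenue threshold $\theta=\gamma T$, where $\gamma$ is a large common scaling parameter. The support of $\calD$ is $\{0\}\cup\{u_1<\dots<u_k\}$ (possibly with $O(1)$ auxiliary atoms), where $0$ carries almost all the probability mass and each $u_j$ carries a tiny probability $\rho_j$. The atoms $u_j$, the probabilities $\rho_j$ and $\gamma$ are engineered so that: (a) for a near-optimal price vector $\pp$ it never helps to price an item above $u_k$ or strictly inside one of the bands $(u_{j-1},u_j]$ (with $u_0:=0$), so the only relevant datum is the vector $x=x(\pp)$ recording how many of the $n$ prices fall in each band, with $\sum_j x_j\le n$; (b) if at most one item ever realizes a positive value, the expected revenue contributed by an item in band $j$ is $\approx\gamma c_j$, so the first-order revenue is $\approx\gamma\sum_j c_j x_j$; and (c) the unit-demand competition that occurs when two or more items simultaneously realize high values subtracts a correction term that the parameters arrange to be negligible when $\sum_j a_j x_j\le B$ but of order $\gamma M$ (for a huge multiplier $M$) once $\sum_j a_j x_j>B$. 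Altogether $\Revv(\pp)$ lies within an error $\eps$ of the idealized objective $\Phi(x)=\gamma\sum_j c_j x_j-\gamma M\cdot\max\{0,\ \sum_j a_j x_j-B\}$, where $\gamma$ is taken large enough that $\eps$ is below half the minimum spacing between distinct values of $\Phi$.

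The proof then has four parts. (1) Write down $\calD$, $n$, $\gamma$, $M$, $\theta$ explicitly and verify that all numbers have bit-length polynomial in the \knap\ input. (2) Upper bound: show $\Revv(\pp)\le\Phi(x(\pp))+\eps$ for every $\pp\in\mathbb{R}_+^n$, which requires the monotonicity/band reductions of (a) together with a lower bound on the competition loss reproducing the $\max\{0,\cdot\}$ penalty. (3) Lower bound: for each feasible $x$, exhibit a concrete $\pp$ with $\Revv(\pp)\ge\gamma\sum_j c_j x_j-\eps$. (4) Conclude: some $\pp$ achieves $\Revv(\pp)\ge\theta=\gamma T$ iff some feasible $x$ has $\sum_j c_j x_j\ge T$, i.e.\ iff the \knap\ instance is a yes-instance.

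The main obstacle is exactly step~(2)(c) together with its error bookkeeping. The true revenue is a genuinely complicated function of $\pp$, with contributions from the rare events in which several items have high value at once, and one must choose the $u_j$'s and $\rho_j$'s so that these second-order ``stealing'' effects aggregate into something provably negligible below capacity yet provably catastrophic above it, while simultaneously keeping every remaining lower-order error term (from three or more simultaneous high values, from prices not sitting exactly at the intended band endpoints, from the discrepancy between $1-(1-\rho_j)^{x_j}$ and $x_j\rho_j$, from the auxiliary atoms, and so on) below the gap $\eps$. Finding a single parameter setting that satisfies all of these inequalities at once is the delicate heart of the argument; the rest should be routine.
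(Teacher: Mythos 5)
Your reduction has the right general shape (a single shared distribution with almost all mass at $0$, item prices confined to a few ``bands,'' first-order terms linear in the band counts $x$, second-order terms coming from two items simultaneously realizing high values), but step (2)(c) — the capacity penalty — cannot be implemented the way you describe, and this is not a bookkeeping issue but a structural one. The competition loss from pairs of high-value items is, up to your error terms, a fixed quadratic form $\sum_{j\le j'} c_{jj'}\,x_j x_{j'}$ in the band counts, with coefficients $c_{jj'}$ determined once and for all by the atoms and probabilities of $\calD$. Such a form is a polynomial in $x$; it cannot reproduce the piecewise-linear threshold $\gamma M\cdot\max\{0,\sum_j a_j x_j-B\}$, which is identically zero on the feasible region and suddenly of order $\gamma M$ outside it. There is also a magnitude contradiction: for the penalty to reach $\gamma M\gg\gamma\sum_j c_j x_j$ above capacity, the second-order terms must dominate the first-order terms for some $x$, yet your construction (mass concentrated at $0$) forces the second-order terms to be uniformly smaller than the first-order terms by a factor of roughly the total high-value probability, for every $x$ with $\sum_j x_j\le n$. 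So no choice of $u_j,\rho_j,\gamma,M$ satisfies both requirements, and the claimed approximation $\Revv(\pp)=\Phi(x(\pp))\pm\eps$ is unattainable.

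The paper sidesteps exactly this obstacle by reducing from an \emph{equality}-constrained version of Integer Knapsack ($\sum_i x_i=n$ and $\sum_i x_i a_i=L$) rather than the optimization version with a capacity constraint. It arranges the atoms $v_i=m^{n+i}$ and probabilities so that the \emph{first}-order revenue is the same ($1/N$) for every band — carrying no information — and then engineers the unavoidable quadratic second-order term to equal (a constant minus) $\frac{1}{N^2m^{3n}}\left(\sum_i x_i a_i-L\right)^2$, which is a genuine quadratic in $x$ and is maximized precisely at the knapsack solutions. Achieving the required cross-coefficients $(a_i-a_j)^2$ forces the delicate part of the paper's proof: perturbing the atom probabilities by $t_i$ and attaching small ``blocks'' of auxiliary atoms near each $v_i$ so that the pairwise win probabilities $p(i,j)$ take prescribed values close to $1/2$. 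If you want to rescue your plan, the fix is to change the source problem to one whose answer is decided by maximizing a quadratic form over the band counts (as the paper does, or as the support-$3$ proof does with \textsc{Partition}), not to hope that the quadratic can simulate a hard capacity constraint.
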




\def\topp{\mathcal{T}}
\def\best{\mathcal{I}}

\subsection{Tie-Breaking Rules}\label{tie}

In this section, we show that the supremum of the expected revenue
over $\pp\in \mathbb{R}_{+}^n$ is invariant to tie-breaking rules.
{Formally}, a tie-breaking rule is a {mapping from the set of pairs} $(\vv,\pp)$
  with $\Util(\vv,\pp)\ge 0$ to an item $k$ such that
  $v_k-p_k=\Util(\vv,\pp)$.

{We will need some notation. Let $B$ be the maximum price tie-breaking rule
   described earlier.
We will denote by $\Revv(\pp)$ the expected revenue of $\pp$ under $B$, and 
by $\Revv(\vv,\pp)$ the seller's revenue under $B$ when the valuation vector is $\vv\in V$.
Given a price vector $\pp$ and a valuation vector $\vv\in V$, we also denote by
$\topp(\vv,\pp)$ the set of items with  maximum nonnegative utility
  (so $\topp(\vv,\pp)=\emptyset$\hspace{0.03cm} iff \hspace{0.03cm}$\Util(\vv,\pp)<0$).
}

We show the following:

\begin{lemma}\label{tie-breaking-ha}
The supremum of the expected revenue over $\pp\in \mathbb{R}_{+}^n$
  is invariant to tie-breaking rules.
\end{lemma}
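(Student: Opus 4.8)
The plan is to show that for any tie-breaking rule, the supremum of the expected revenue equals the supremum under the maximum-price rule $B$. The key observation is that ties happen only on a measure-zero set of price vectors, so by perturbing prices slightly we can eliminate all ties without decreasing the revenue by more than an arbitrarily small amount. More precisely, I would argue two inequalities. First, for any tie-breaking rule $T$ and any price vector $\pp$, I claim the revenue under $T$ at $\pp$ is at most the revenue under $B$ at a nearby price vector; this gives that the supremum under $T$ is at most the supremum under $B$. Second, the symmetric argument (or a direct argument) gives the reverse inequality, so the two suprema coincide.

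For the first inequality, fix $\pp$ and a tie-breaking rule $T$. The idea is to break ties in the "$T$-direction" by an infinitesimal price bump: for each valuation vector $\vv \in V$ with $\topp(\vv,\pp) \neq \emptyset$, the rule $T$ selects some item $k(\vv) \in \topp(\vv,\pp)$. We want to construct $\pp'$ close to $\pp$ so that under $B$, for every $\vv$, the buyer still buys an item whose price is close to what $T$ would have collected at $\pp$. The clean way: for each $\vv$, raising the price of the $T$-selected item $k(\vv)$ by a tiny $\delta$ would make it the unique utility-maximizer among the tied set (assuming $k(\vv)$ had the strict-max price already it is fine; if not, we need the bump to exceed the gap to the current max price in the tied set, but the bump also must not exceed the utility gap to non-tied items). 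Since there are only finitely many valuation vectors and finitely many items, one can choose a single small $\delta > 0$ that simultaneously works for all $\vv$: increase a carefully chosen subset of prices by multiples of $\delta$ so that, for each $\vv$, the previously $T$-selected item becomes the strict maximum-utility (hence $B$-selected) item, and its new price $p_{k(\vv)} + O(\delta)$ differs from $p_{k(\vv)}$ by $O(\delta)$. Then $\Revv(\pp') \ge \Revv_T(\pp) - O(\delta)$, and letting $\delta \to 0$ shows $\sup_\pp \Revv_T(\pp) \le \sup_\pp \Revv(\pp)$.

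For the reverse inequality, the same perturbation technique applies with the roles of $B$ and $T$ swapped: given $\pp$, bump the prices of the $B$-selected items by infinitesimals tuned so that the $T$-rule is forced to pick (essentially) the same items. Alternatively, one observes that the set of price vectors where some tie occurs is a finite union of hyperplanes $\{p_i - p_j = v_{i,j'} - v_{j,j''}\} \cup \{p_i = v_{i,j'}\}$, and off this measure-zero set all tie-breaking rules agree, so the supremum of $\Revv_T$ — which by the continuity/perturbation argument above is not affected by the hyperplane set — equals the supremum of $\Revv = \Revv_B$.

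\textbf{The main obstacle} is the bookkeeping in choosing a single perturbation $\pp'$ that simultaneously handles all $2^n$-many valuation vectors: the bump applied to make item $k$ win for one valuation vector $\vv$ could inadvertently change which item is utility-maximizing for a different valuation vector $\vv'$, or could push a tied item's utility below that of a previously-untied item. Handling this requires choosing the perturbations in a hierarchical/lexicographic fashion — e.g. perturb prices by amounts $\delta, \delta^2, \delta^3, \ldots$ along a fixed ordering of items, so that lower-order terms dominate and the relative ordering of utilities is controlled — and then arguing that for $\delta$ sufficiently small every $\vv$ behaves as desired and the revenue loss is $O(\delta)$. This is routine but needs care; everything else is straightforward.
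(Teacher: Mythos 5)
Your overall architecture (two inequalities, tie-breaking by a small perturbation, then a limit) matches the paper's, but two of your concrete steps do not work as stated. First, the sign of your perturbation is backwards: \emph{raising} the price of the $T$-selected item $k(\vv)$ lowers its utility $v_{k(\vv)}-p_{k(\vv)}$, so it becomes the strict utility \emph{minimizer} of the tied set, not the maximizer. To force an item to win its ties you must decrease its price relative to the others; the paper does exactly this, setting $p_i-r_i\epsilon$ where $r_i$ is the rank of $p_i$ in increasing price order, so that higher-priced items receive larger decreases and win every tie, at a revenue cost of only $O(n\epsilon)$. Second, and more seriously, your plan for the inequality $\sup\Revv_T\le\sup\Revv_B$ --- choose one perturbation $\pp'$ under which, for \emph{every} $\vv$, the $T$-selected item becomes the strict utility maximizer --- is impossible for a general rule $T$. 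With two items, $V_1=V_2=\{1,2\}$ and $p_1=p_2=1$, the rule $T$ may select item $1$ for $\vv=(1,1)$ and item $2$ for $\vv=(2,2)$; but for any $\pp'$ the utility gap $(v_1-p'_1)-(v_2-p'_2)=-(p'_1-p'_2)$ is the same for both valuations, so no single perturbation reproduces both choices. This direction needs no perturbation at all: since $B$ selects a maximum-price member of $\topp(\vv,\pp)$ and $T$ selects some member of it, $\Revv_B(\vv,\pp)\ge\Revv_T(\vv,\pp)$ pointwise, and the inequality of suprema is immediate. The paper uses exactly this one-line observation.

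For the nontrivial direction $\sup\Revv_T\ge\sup\Revv_B$, your perturbation idea is the right one, but the "hierarchical perturbation along a fixed ordering of items" is not enough: the ordering must be by \emph{price}. If ties are broken toward a lower-priced item, the buyer pays strictly less and the loss does not vanish as $\delta\to 0$ (the paper's two-item example after the lemma illustrates precisely this failure mode). Likewise, the measure-zero/hyperplane remark does not close the argument on its own, because the supremum of $\Revv_B$ may be attained only on the tie hyperplanes; one still has to approach such a point from the specific direction in which the maximum-price tied item remains the strict winner. Once the perturbation is oriented by price rank (decreasing higher-priced items more), a single $\epsilon$ works simultaneously for all valuation vectors because $B$'s selections are induced by one global order on items, and the proof goes through.
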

\begin{proof}
Let $v_{i,j}$ and $q_{i,j}$ denote the numbers that specify the distributions.
Let $B'$ be a tie-breaking rule.
We will use $\Revv'(\pp)$ to denote the expected revenue
of $\pp$ under $B'$ and use $\Revv'(\vv,\pp)$ to denote 
the seller's revenue under $B'$ when the valuation vector is $\vv\in V$.


It is clear that for any $\pp\in \mathbb{R}_{+}^n$
  and $\vv\in V$, we have $\Revv(\vv,\pp)\ge \Revv'(\vv,\pp)$ 
since $B$ picks an item with the highest price among those that maximize the utility.
Hence, it follows that $\sup_\pp \Revv(\pp)\ge \sup_\pp \Revv'(\pp)$.

On the other hand, given any price vector $\pp\in \mathbb{R}_{+}^n$, we 
  consider 
$$\pp_{\epsilon}=\big(\max(0,p_1-r_1\epsilon),\ldots,\max(0,p_n-r_n\epsilon)\big)\in \mathbb{R}_{+}^n,$$
  where $\epsilon>0$ and $r_i$ is the rank of $p_i$ sorted in increasing order
  (when there are ties, the item with the smaller index is ranked higher).
We claim that 
\begin{equation}
\label{eq:1}\lim_{\epsilon\rightarrow 0+} \Revv'(\pp_{\epsilon})=\Revv(\pp).
\end{equation}
It then follows {from (\ref{eq:1})} that $\sup_\pp \Revv'(\pp)\ge \sup_\pp \Revv(\pp)$,
 {which gives the proof of the lemma}.

To prove (\ref{eq:1}), we show that the following holds for
   any valuation vector $\vv\in V$:
\begin{equation}\label{eq:2}
\lim_{\epsilon\rightarrow 0+} \Revv'(\vv,\pp_\epsilon)=\Revv(\vv,\pp).
\end{equation}
{Observe that} (\ref{eq:1}) follows from (\ref{eq:2}) since 
$$
\calR(\pp)=\sum_{\vv\in V} \calR(\vv,\pp)\cdot \Pr[\vv]\ \ \ \ \text{and}\ \ \ \
\calR'(\pp_\epsilon)=\sum_{\vv\in V} \calR'(\vv,\pp_\epsilon)\cdot \Pr[\vv].
$$
  
To prove (\ref{eq:2}), we consider two cases. 
If $\Util(\vv,\pp)<0$, then we have $\Util(\vv,\pp_\epsilon)<0$ 
  when $\epsilon$ is sufficiently small, and thus, $\Revv(\vv,\pp)=\Revv'(\vv,\pp_\epsilon)=0$.
When $\Util(\vv,\pp)\ge 0$,
we make the following three observations.
First, the utility of an item $i\in [n]$ under $\pp_\epsilon$ is at least as high as that under $\pp$.
Second, if $v_i-p_i>v_j-p_j$ for some items $i,j\in [n]$, then under $\pp_\epsilon$ the utility
of item $i$ remains strictly higher than that of item $j$, for $\epsilon$ sufficiently small. 
Third, if 
  $v_i-p_i=v_j-p_j$ and $p_i>p_j$ (in particular, $p_i>0$) 
  for some $i,j\in [n]$, then under $\pp_{\epsilon}$
  the utility of item $i$ is strictly higher than that 
  of item $j$ when $\epsilon\ll p_i$, as $r_i>r_j$.
It follows from these observations that when $\epsilon$ is sufficiently
  small, $B'$ must pick, given $\vv$ and $\pp_\epsilon$,
  an item $k\in [n]$ such that $p_k=\Revv(\vv,\pp)$.
(\ref{eq:2}) then follows from the definition of $\pp_\epsilon$.
\end{proof}

We will {henceforth} always adopt the maximum price tie-breaking rule,  
  and use $\calR(\vv,\pp)$ to denote the revenue of the seller with respect to this rule.
One of the advantages of this rule is that the supremum of 
  the expected revenue $\calR(\pp)$ is always achievable, so 
  it makes sense to talk about whether $\pp$ is optimal or not.
In the following example, we point out that this does not hold for general tie-breaking rules.\vspace{0.07cm}

\medskip

\noindent {\bf Example:} Suppose item $1$ has value $10$ with probability $1$, 
item $2$ has value $8$ with probability $1/2$ and value $12$ with probability $1/2$, 
and in case of tie the buyer prefers item $1$. 
The supremum in this example is $11$: set $p_1=10$ for item $1$ 
and $p_2=12-\epsilon$ for item $2$. 
The buyer will buy item $1$ with probability $1/2$ (if her value for item $2$ is $8$) 
and item $2$ with probability $1/2$ (if her value for item $2$ is $12$). 
However, {an expected revenue} of $11$ is not achievable: if we give price $12$ to item $2$, 
then the  buyer will always buy item $1$ and the revenue is $10$. 
Note that the expected revenue for this tie-breaking rule is not a continuous function of the prices.\vspace{0.07cm}

\medskip

Before proving that the supremum is indeed always achievable under the maximum
  price rule, we start by showing that 
without loss of generality, we may focus the search for an optimal price vector in the set
\begin{eqnarray*}
&P= \times_{i=1}^n [a_i,b_i],
\ \ \ \ \text{where\hspace{0.05cm}
  $a_i=\min_{j} v_{i,j}$ \hspace{0.05cm}{and}\hspace{0.05cm} $b_i=\max_j v_{i,j}$}&
\end{eqnarray*}
denote the minimum and
  maximum values in the support $V_i$, respectively.

\begin{lemma}\label{searchspace}
For any price vector  $\pp \in  \mathbb{R}^n_{+}$, 
  there exists a $\pp' \in  P$ such that $\Revv (\pp') \ge \Revv(\pp).$ 
\end{lemma}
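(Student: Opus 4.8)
The plan is to reduce an arbitrary price vector $\pp \in \mathbb{R}^n_+$ to one in the box $P = \times_{i=1}^n [a_i, b_i]$ by modifying the coordinates one at a time, never decreasing the revenue. I would prove the statement by showing two separate ``clipping'' claims: (1) if some $p_i > b_i$, we may lower it to $b_i$ without loss; and (2) if some $p_i < a_i$, we may raise it to $a_i$ without loss. Iterating over all coordinates then lands us in $P$. Because there are only finitely many coordinates and each step is a monotone move toward the box, this terminates, and the final vector $\pp'$ satisfies $\Revv(\pp') \ge \Revv(\pp)$.

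For claim (1), suppose $p_i > b_i = \max_j v_{i,j}$. Then for every valuation vector $\vv \in V$ we have $v_i - p_i < 0$, so item $i$ is never the utility-maximizing choice and is never purchased; it contributes nothing to $\Revv(\pp)$. Lowering $p_i$ to $b_i$ still keeps $v_i - p_i \le 0$ for all $\vv$, with equality only when $v_i = b_i$; in the equality case the utility of item $i$ is exactly $0$, which ties or loses against any other item with nonnegative utility, and under the maximum-price tie-breaking rule item $i$ is selected only if its price $b_i$ is the largest among the tied items — but then the buyer pays $b_i$ rather than $0$, which is still nonnegative. So for each fixed $\vv$, the revenue $\Revv(\vv, \pp')$ under the lowered price is at least $\Revv(\vv, \pp)$: the behavior on all other items is unchanged except possibly that item $i$ now captures some ties it previously could not, and when it does it yields $b_i \ge 0$. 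Summing over $\vv$ weighted by $\Pr[\vv]$ gives $\Revv(\pp') \ge \Revv(\pp)$.

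For claim (2), suppose $p_i < a_i = \min_j v_{i,j}$, so $v_i - p_i > 0$ strictly for every $\vv$. I would raise $p_i$ to $a_i$. The key observation is that for any fixed $\vv$, increasing $p_i$ to $a_i$ can only increase the utility-to-revenue conversion on item $i$: if item $i$ was purchased at the old price, its utility was $v_i - p_i \ge v_i - a_i \ge 0$, and after the change item $i$ still has nonnegative utility $v_i - a_i \ge 0$; whether or not item $i$ is still the buyer's choice, the set of available purchases only shrinks in a revenue-favorable direction. More carefully: I would argue that for each $\vv$, either the same item is bought at a price $\ge$ the old one, or a different item is bought at a price that is still at least $\Revv(\vv,\pp)$ — because raising $p_i$ only removes item $i$ from contention in favor of items whose prices we have not touched, and the only way item $i$ loses the sale is if some other item now has strictly higher utility or wins the tie, in which case (since item $i$'s price went up and the losing comparison involves item $i$'s new utility) one checks the winner's price dominates. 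This monotonicity per $\vv$, then averaging over $\vv$, yields $\Revv(\pp') \ge \Revv(\pp)$.

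The main obstacle is handling the tie-breaking rule correctly in case (2): raising $p_i$ changes both item $i$'s utility (it drops) and item $i$'s price (it rises), so one must verify that whenever item $i$ stops being the chosen item, the replacement purchase still yields revenue at least $\Revv(\vv,\pp)$, and whenever item $i$ remains chosen it is now at the higher price $a_i$. I expect the cleanest route is to fix $\vv$ and do a short case analysis on the identity of the selected item before and after the change, using that item $i$'s utility after the change is still $\ge 0$ and that prices of all $j \ne i$ are unchanged, so any item that ``beats'' item $i$ under the new prices already had comparable standing under the old prices. Once the per-$\vv$ inequality $\Revv(\vv,\pp') \ge \Revv(\vv,\pp)$ is established in both claims, the lemma follows immediately by taking the expectation $\Revv(\pp) = \sum_{\vv \in V} \Revv(\vv,\pp)\cdot \Pr[\vv]$.
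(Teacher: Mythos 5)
Your claim (1) is fine and matches the paper's brief treatment of prices above $b_i$. The genuine gap is in claim (2): the per-valuation monotonicity you rely on --- ``either the same item is bought at a price at least the old one, or a different item is bought at a price still at least $\Revv(\vv,\pp)$'' --- is false when you raise only the single coordinate $p_i$, because the item that takes over the sale can have a strictly smaller price than $p_i$. Concretely, take two items with $V_1=\{10\}$, $V_2=\{3,9\}$, $\Pr[v_2=9]=q$, and $\pp=(5,4)$; only $p_1<a_1=10$ needs fixing, since $p_2\in[a_2,b_2]=[3,9]$. On the valuation $(10,9)$ the utilities under $\pp$ are $5$ and $5$, item $1$ wins the tie (it has the higher price) and the revenue is $5$; after raising $p_1$ to $a_1=10$ the utilities become $0$ and $5$, item $2$ is bought, and the revenue drops to $4$. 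Hence $\Revv\bigl((5,4)\bigr)=5$ while $\Revv\bigl((10,4)\bigr)=4q+10(1-q)<5$ for $q>5/6$, so your single-coordinate step strictly decreases the expected revenue and your procedure then halts at a worse vector. ``Comparable standing under the old prices'' gives you no control over the winner's price, and no ordering of the coordinates rescues the step.

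The paper avoids this by raising, for a chosen $i$ with $p_i<a_i$, \emph{all} low prices simultaneously to the common level $a_i$: it sets $\widetilde{p}_j=\min\{b_j,a_i\}$ for every $j$ in $S_i=\{j: p_j<a_i\}$ and leaves the rest alone. After this move, on any valuation where the old sale went to an item of $S_i$ (hence yielded revenue below $a_i$), either the maximum utility is positive, in which case every utility-maximizing item has price at least $a_i$ (a price below $a_i$ would force $\widetilde{p}_j=b_j$ and nonpositive utility), or the maximum utility is $0$, in which case item $i$ (priced at $a_i$) is among the tied items and the max-price rule pays at least $a_i$. Either way the revenue on that valuation rises to at least $a_i$, and valuations where the old sale went outside $S_i$ are unaffected. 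That simultaneous, level-based raise is the missing idea; without it the induction step of your clipping argument fails.
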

\begin{proof}
First, it is straightforward that no price $p_i$ should be above $b_i$; 
  if such a price exists, we can simply replace it by $b_i$ and this 
  {will not decrease} the expected revenue.

The non-trivial part is to argue that it is no loss of generality to assume that no price $p_i$ is below $a_i$. 
Let $\pp \in \times_{i=1}^n [0, b_i].$ Suppose that there exists $i \in [n]$ such that $p_i < a_i$, i.e., the set 
$L(\pp) = \{\hspace{0.03cm} i \in [n]: p_i < a_i 
  \hspace{0.03cm}\}$ is nonempty; otherwise, there is nothing to prove.

Fix an $i \in L(\pp)$ arbitrarily and let $S_i = \{ j \in [n] : p_j < a_i \}.$
We consider the price vector $\widetilde{\pp}$ defined by $\widetilde{p}_j = \min \{b_j, a_i \}$ for $j \in S_i$ and $\widetilde{p}_j = p_j$ otherwise.
As $i \in S_i$, it follows that $S_i \neq \emptyset$ and therefore $\widetilde{\pp} \neq \pp$
  (in particular, $\widetilde{p}_i=a_i$ now). It is also clear that $\widetilde{\pp} \in \times_{i=1}^n [0, b_i].$
It suffices to show that $\Revv (\widetilde{\pp}) \ge \Revv(\pp)$.

Indeed, note that $|L(\widetilde{\pp})| < |L(\pp)|$ so this process will terminate in at most $n$ stages.
After the last stage we will obtain a vector $\pp' \in P$ whose expected revenue is lower bounded by all the previous ones.

To prove that $\Revv (\widetilde{\pp}) \ge \Revv(\pp)$, we {proceed} as follows. 
Given any valuation vector $\vv \in V$, we compare the revenue $\Revv(\vv, \pp)$ to 
$\Revv(\vv, \widetilde{\pp})$ and consider the following two cases:
\begin{flushleft}
\begin{itemize}
\item {\bf Case 1:} 
  On input $(\vv,\pp)$, the item selected by the buyer is not from $S_i$.
We claim that the same item is selected on input $(\vv,\widetilde{\pp})$.
   Indeed, we did not decrease prices of items in $S_i$, hence their 
   utilities did not go up, while the utilities of the remaining 
   items did not change. 
Therefore, 
  the revenue does not change in this case, i.e., 
  $\Revv (\vv, \widetilde{\pp}) = \Revv(\vv, \pp)$.\vspace{-0.1cm}

\item {\bf Case 2:} On input $(\vv,\pp)$, the item selected is from $S_i$. 
Then by the definition of $S_i$, the revenue $\Revv(\vv,\pp)$ we get is certainly less than $a_i$. 
On input $(\vv, \widetilde{\pp})$, we know that $\Util(\vv,\widetilde{\pp})\ge 0$ (since item $i$ 
  must have nonnegative utility, i.e., $v_i - \widetilde{p}_i = v_i - a_i \ge 0$) and thus, 
  $\topp(\vv,\widetilde{\pp})\ne \emptyset$.
 We claim that $\Revv(\vv, \widetilde{\pp}) \ge a_i>\Revv(\vv,\pp)$. 
To see this, we consider two sub-cases.   
If $\Util(\vv,\widetilde{\pp}) = 0$, then we must have $i \in \topp(\vv,\widetilde{\pp})$ and the claim follows
  from our choice of the maximum price tie-breaking rule. 
If $\Util(\vv,\widetilde{\pp}) > 0$, then 
  every $j\in \topp(\vv,\widetilde{\pp})$ must satisfy $\widetilde{p}_j\ge a_i$; 
  otherwise, by definition of $\widetilde{\pp}$ we have
  $\widetilde{p}_j=b_j$ and $v_j-\widetilde{p}_j\le 0$, a contradiction.
From $\widetilde{p}_j\ge a_i$ and $j\in \topp(\vv,\widetilde{\pp})$,
  we have $\Revv(\vv,\widetilde{\pp})\ge a_i$. 
\end{itemize}
\end{flushleft}
The lemma follows by combining the two cases.
\end{proof}

Now we show that the supremum 
  can always be achieved {under the maximum price rule $B$}.

\begin{lemma}\label{supremum}
There exists a price vector $\pp^*\in P$ such that ${\Revv}(\pp^*)=
  \sup_{\pp} {\Revv}(\pp)$.
\end{lemma}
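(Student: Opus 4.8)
The plan is to show that $\Revv(\pp)$, restricted to the compact box $P = \times_{i=1}^n [a_i, b_i]$, attains its supremum. By Lemma~\ref{searchspace} the supremum over all of $\mathbb{R}_+^n$ equals $\sup_{\pp \in P} \Revv(\pp)$, so it suffices to work inside $P$. Since $P$ is compact, the standard route would be to prove $\Revv$ is continuous on $P$ — but this is false in general under the maximum price tie-breaking rule (the discontinuities are exactly at price vectors where ties occur), so a direct Weierstrass argument does not apply. Instead I would show that $\Revv$ is \emph{upper semicontinuous} on $P$: for every $\pp \in P$ and every sequence $\pp^{(k)} \to \pp$ with $\pp^{(k)} \in P$, we have $\limsup_k \Revv(\pp^{(k)}) \le \Revv(\pp)$. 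An upper semicontinuous function on a compact set attains its maximum, which gives the lemma.

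To prove upper semicontinuity it is enough, as in the proof of Lemma~\ref{tie-breaking-ha}, to work valuation-vector by valuation-vector: I would show that for each fixed $\vv \in V$, the function $\pp \mapsto \Revv(\vv, \pp)$ is upper semicontinuous on $P$, and then use $\Revv(\pp) = \sum_{\vv \in V} \Revv(\vv, \pp) \Pr[\vv]$ together with the fact that a finite nonnegative combination of u.s.c.\ functions is u.s.c. Fix $\vv$ and $\pp \in P$, and let $\pp^{(k)} \to \pp$. If $\Util(\vv, \pp) < 0$, then $\Util(\vv, \pp^{(k)}) < 0$ for large $k$ (utility is continuous in $\pp$), so $\Revv(\vv, \pp^{(k)}) = 0 = \Revv(\vv, \pp)$ eventually. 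If $\Util(\vv, \pp) \ge 0$, let $m = \Util(\vv, \pp)$ and let $\topp(\vv,\pp)$ be the set of maximum-utility items; the buyer under $B$ picks the highest-priced item in $\topp(\vv,\pp)$, so $\Revv(\vv,\pp) = \max_{j \in \topp(\vv,\pp)} p_j$. For $\pp^{(k)}$ close to $\pp$, any item $i$ that the buyer selects at $\pp^{(k)}$ must have utility within $o(1)$ of $m$; more precisely, an item not in $\topp(\vv,\pp)$ has utility bounded away from $m$ at $\pp$, hence (by continuity) strictly below the utility of items in $\topp(\vv,\pp)$ at $\pp^{(k)}$ for large $k$, so the selected item $i$ lies in $\topp(\vv,\pp)$, and then $\Revv(\vv, \pp^{(k)}) = p^{(k)}_i \to p_i \le \max_{j \in \topp(\vv,\pp)} p_j = \Revv(\vv,\pp)$. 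This yields $\limsup_k \Revv(\vv,\pp^{(k)}) \le \Revv(\vv,\pp)$, as desired.

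The main obstacle — and the only subtle point — is the case $\Util(\vv,\pp) \ge 0$: one must rule out the possibility that at a nearby $\pp^{(k)}$ the buyer switches to an item $i \notin \topp(\vv,\pp)$ whose price $p^{(k)}_i$ happens to be large. The resolution is exactly the separation-of-utilities observation already used in Lemma~\ref{tie-breaking-ha}: items outside $\topp(\vv,\pp)$ have utility (at $\pp$) strictly less than $m$, by a fixed positive gap, so for $k$ large their perturbed utilities stay strictly below $m \ge$ the perturbed utility of some item in $\topp(\vv,\pp)$; hence the buyer's selected item always lies in $\topp(\vv,\pp)$, and its price converges to the price of an element of $\topp(\vv,\pp)$, which is at most $\Revv(\vv,\pp)$. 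Once upper semicontinuity of $\Revv$ on $P$ is established, invoking the standard fact that an u.s.c.\ function on a nonempty compact set attains its maximum produces the required $\pp^* \in P$, and $\Revv(\pp^*) = \sup_{\pp \in P}\Revv(\pp) = \sup_{\pp \in \mathbb{R}_+^n} \Revv(\pp)$ by Lemma~\ref{searchspace}, completing the proof.
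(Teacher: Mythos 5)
Your proposal is correct and follows essentially the same route as the paper's proof: the paper also reduces to the compact box $P$, shows that for any sequence $\pp_i \to \pp$ one has $\calT(\vv,\pp_i)\subseteq\calT(\vv,\pp)$ for large $i$ (valuation by valuation), and concludes $\Revv(\pp)\ge\lim_i\Revv(\pp_i)$ — i.e., exactly the upper semicontinuity you describe — before invoking compactness.
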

\begin{proof}
{By the compactness of $P$,}
  it suffices to show that if a sequence of vectors 
  $\{\pp_i\}$ approaches $\pp$, then 
$$\Revv(\pp)\ge \lim_{i\rightarrow \infty} \Revv(\pp_i).$$
To this end, it suffices to show that, for any valuation vector $\vv\in V$, 
\begin{equation}\label{eq:33}
\Revv(\vv,\pp)\ge \lim_{i\rightarrow \infty} \Revv(\vv,\pp_i).
\end{equation}
Given any valuation $\vv\in V$, it is easy to check that $\calT(\vv,\pp_i)\subseteq \calT(\vv,\pp)$
  when $i$ is sufficiently large. (Again consider two cases: $\Util(\vv,\pp)<0$ and $\Util(\vv,\pp)\ge 0$.)
(\ref{eq:33}) then follows, since $\calR(\vv,\pp)$ is
  the highest price of all items in $\calT(\vv,\pp)$ under the maximum price
  tie-breaking rule.
\end{proof}

\section{Membership in NP}\label{npmember-sec}

In this section we prove Theorem \ref{theo:membership}, i.e.,
  \itemp\ is in NP.


\begin{proof}[Proof of Theorem \ref{theo:membership}]
We start with some notation.
Given a price vector $\pp\in \mathbb{R}_+^n$ and a valuation
  $\vv\in V$,~let $\best(\vv,\pp)\in [n]\cup \{\nil\}$
  denote the item picked by the buyer under the maximum price
  tie-breaking rule, with $\best(\vv,\pp)=\nil$\hspace{0.02cm} iff 
  \hspace{0.02cm}$\Util(\vv,\pp)<0$. 
We will partition $P=\times_{i=1}^n [a_i,b_i]$ into equivalence classes so 
  that two price vectors $\pp,\pp'$ from the same class yield the 
  same outcome for all valuations: $\best(\vv,\pp)=\best(\vv,\pp')$ for all $\vv$. 

Consider the partition of $P$ induced by the following set of hyperplanes. 
For each item $i\in [n]$ and each value $s_i\in V_i$, we have a hyperplane $p_i = s_i$. 
For each pair of items $i,j\in [n]$ and pair of values $s_i\in V_i$ and $t_j\in V_j$, we have a hyperplane 
$s_i - p_i = t_j - p_j$, i.e., $p_i -p_j = s_i - t_j$. 
These hyperplanes partition our search space $P$ into polyhedral cells, 
where the points in each cell lie on the same side of each hyperplane 
(either on the hyperplane or in one of the two open-halfspaces). 

We claim that, for every valuation $\vv\in V$, all the 
  vectors in each cell yield the same outcome. 
Consider any cell $C$. It is defined by a set of equations and inequalities.
Given any price vector $\pp \in C$ and any value $s_i \in V_i$,
let $V(\pp,s_i)$ be the set of valuation vectors $\vv\in V$ such that $v_i=s_i$
and the buyer ends up buying item $i$ on $(\vv,\pp)$. 
We claim that $V(\pp,s_i)$ does not
depend on $\pp$, i.e., it is the same set $V(s_i)=V(\pp,s_i)$ over all $\pp \in  C$.
To this end, first, if the points of $C$ satisfy $p_i > s_i$ then $V(\pp,s_i)= \emptyset$.
So suppose that $C$ satisfies $\pp \leq s_i$.
Consider any valuation vector $\vv\in V$ with $v_i=s_i$. The valuation $\vv$ is in $V(\pp,s_i)$ 
iff for all $j \neq i$, we have $s_i - p_i \geq v_j - p_j$,
and in case of equality we have $s_i \geq v_j$ 
  (iff $p_i\ge p_j$ due to the equality), and in case of further equality
$s_i = v_j$ we have $i < j$.
Because all points of the cell $C$ lie on the same side of each
hyperplane $s_i - p_i = v_j - p_j$, it follows that $V(\pp,s_i)$ does not depend on $\pp$.
As a result,
  for any cell $C$ and any $\vv\in V$,
  all the points $\pp\in C$ yield the same outcome $\best(\vv,\pp)$.

Next, we show that
  it is easy to compute the supremum of the expected revenue $\Revv(\pp)$ 
  over $\pp\in C$, for each cell $C$.
To this end, let $W_i= \cup_{s_i\in V_i} V(s_i)\subseteq V$ 
  denote the set of valuations
for which the buyer picks item $i$ if the prices lie in the cell $C$, 
and let $\gamma_i$ be the probability of $W_i$: $\gamma_i=\sum_{\vv\in W_i}\Pr[\vv].$
It turns out that $\gamma_i$ can be computed efficiently, since the 
  probability of $V(s_i)$ can be computed efficiently as shown below 
  (and $W_i$ is the disjoint union of $V(s_i)$, $s_i\in V_i$).

Given $s_i\in V_i$, to compute the probability of $V(s_i)$, 
  we note that $V(s_i)$ is actually the Cartesian product of subsets of $V_j$, $j\in [n]$.
For each $j\ne i$, we can determine efficiently 
  the subset of values $L_j\subseteq V_j$ such that the buyer prefers 
  item $i$ to $j$ if $i$ has value $s_i$ and $j$ has value from $L_j$.
As a result, we have $$V(s_i)=L_1\times \cdots \times L_{i-1}\times \{s_i\}\times L_{i+1}\times \cdots\times L_n,$$
and thus, we multiply the probabilities of these subsets $L_j$, for all $j$, 
  and the probability of $s_i$. 
Summing up the probabilities of $V(s_i)$ over $s_i\in V_i$ gives us $\gamma_i$,
  the probability of $W_i$.  

Finally, the supremum of the expected revenue
  $\Revv(\pp)$ over all $\pp\in C$
  is the maximum of $\sum_{i\in [n]} \gamma_i\cdot p_i$ over all $\pp$
  in the closure of $C$.
Let ${C'}$ denote the closure of $C$;
this is the polyhedron obtained by changing all the strict inequalities of $C$
into weak inequalities.
The supremum of $\sum_i \gamma_i\cdot p_i$ over all points $\pp \in C$
can be computed in polynomial time by solving the 
linear program that maximizes $\sum \gamma_i\cdot p_i$ subject to
$\pp \in {C'}$.
In fact, as we will show below after the proof of Theorem \ref{theo:membership}, 
  that this LP has a special form:
The question of whether a set of equations and inequalities with
respect to a set of hyperplanes of the form $p_i=s_i$
and $p_i -p_j = s_i - t_j$ is consistent, i.e., defines a nonempty cell,
can be formulated as a negative weight cycle problem,
and the optimal solution for a nonempty cell can be computed 
by solving a single-source shortest path problem. 
It follows that the specification of a cell $C$ in the partition
  is an appropriate \textit{yes} certificate for the decision problem
  \itemp, and the theorem is proved.
\end{proof}

Next we describe in more detail how to determine whether a set of equations
  and inequalities defines
  a nonempty cell, and how to compute the optimal solution over a nonempty cell.
The description of a~(candidate) cell $C$ consists of equations and inequalities
specifying (1) for each item $i$, the relation of $p_i$ to
every value $s_i \in V_i$, and (2) for each pair of items $i,j$
and each pair of values $s_i \in V_i$ and $t_j \in V_j$, the
relation of $p_i-p_j$ to $s_i - t_j$.
Construct a weighted directed graph $G=(N,E)$ over $n+1$ nodes
$N=\{0,1,\ldots, n \}$ where nodes $1,\ldots,n$ correspond
to the $n$ items.
For each inequality of the form $p_i< s_i$
or $p_i\le s_i$, include an edge $(0,i)$ with weight $s_i$, 
and call the edge strict or weak
accordingly as the inequality is strict or weak.
In fact, there is a tightest such inequality
(i.e., with the smallest value $s_i$) since the cell is in $P$, and it suffices to include
the edge for this inequality only. 
Similarly, for each inequality of the form 
$p_i> s_i$ or $p_i\ge s_i$ (or only for the tightest such inequality, 
i.e. the one with the largest value $s_i$) include an edge $(i,0)$
with weight $-s_i$.
For each inequality of the form $p_i-p_j< s_i - t_j$
or $p_i-p_j\le  s_i - t_j$ (or only for the tightest such inequality)
include a (strict or weak) edge $(j,i)$ with weight $s_i - t_j$.
Similarly, for every inequality of the form $p_i-p_j> s_i - t_j$
or $p_i-p_j\ge s_i - t_j$ (or only for the tightest such inequality)
include a (strict or weak) edge $(i,j)$ with weight $t_j-s_i$.

We prove the following connections between $G=(N,E)$
  and the cell $C$:

\begin{lemma}\label{lem:technical}
1. A set of equations and inequalities defines a nonempty cell if and only if
the corresponding\\ graph $G$ does not contain a negative weight cycle or
a zero weight cycle with a strict edge.\newline
\indent 2. The supremum of the expected revenue for a nonempty cell is achieved by the
price vector $\pp$ that\\ consists of the distances from node 0 to the other
nodes of the graph $G$.
\end{lemma}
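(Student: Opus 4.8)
\textbf{Proof plan for Lemma~\ref{lem:technical}.}

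The plan is to recognize the system defining a candidate cell as a \emph{difference constraint system} (augmented with an auxiliary variable $p_0$ fixed implicitly to $0$, so that a constraint $p_i \le s_i$ becomes $p_i - p_0 \le s_i$), and then invoke the standard correspondence between difference constraints and shortest paths, being careful about the two novelties here: (i) some constraints are \emph{strict} inequalities, and (ii) we want not just feasibility but the revenue-maximizing point in the closure. For part~1, I would first handle the weak version: a system of weak difference constraints $p_v - p_u \le w(u,v)$ is feasible iff $G$ has no negative-weight cycle (Bellman--Ford / Cormen-Leiserson-Rivest-Stein). Then I would upgrade to strict edges. If $G$ has no negative cycle and no zero-weight cycle containing a strict edge, set $d(v)$ to be the shortest-path distance from node $0$; the point $\pp$ with $p_i = d(i)$ satisfies all weak constraints, and I claim it satisfies every strict constraint too — for if a strict edge $(u,v)$ of weight $w$ were tight, i.e. $d(v) = d(u) + w$, then that edge lies on a shortest path and, combined with some shortest path from $v$ back to $u$ (which exists because both $d(u), d(v)$ are finite and $0$ reaches everything by the tightest-inequality edges into node $0$ — or one perturbs, see below), would close a zero-weight cycle through a strict edge, a contradiction. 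The cleanest way to make this rigorous and to cover the case where the shortest-path point happens to make a strict constraint tight only ``by accident of ties'' is a perturbation argument: replace each strict edge weight $w$ by $w - \delta$ for an infinitesimal (or sufficiently small rational) $\delta > 0$; the hypothesis (no negative cycle, no zero-weight cycle with a strict edge) is exactly equivalent to the perturbed graph having no negative cycle, hence the perturbed system of weak constraints is feasible, and any feasible point of the perturbed weak system is a feasible point of the original strict system. Conversely, if $G$ \emph{does} have a negative-weight cycle, even the weak constraints are infeasible; and if it has a zero-weight cycle $u_0 \to u_1 \to \cdots \to u_k = u_0$ with some edge strict, summing the corresponding constraints gives $0 = \sum (p_{u_{\ell+1}} - p_{u_\ell}) \le \sum w - (\text{strictness}) < 0$, again infeasible. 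That settles part~1.

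For part~2, the objective is $\sum_{i\in[n]} \gamma_i p_i$ with all $\gamma_i \ge 0$ (they are probabilities), maximized over the closure $C'$, which is a polyhedron defined purely by the weak versions of the constraints, i.e. exactly the difference constraint system encoded by $G$ (with weak edges). The key structural fact is the shape of these constraints: each says $p_i \le (\text{something})$ or $p_i - p_j \le (\text{something})$ — crucially, in the ``normalized'' direction every constraint is an \emph{upper} bound on some $p_i$ or on some difference $p_i - p_j$, and every variable $p_i$ has at least one chain of upper bounds leading back to node $0$ (because the cell is contained in $P = \times[a_i,b_i]$, so each $p_i \le b_i$ gives an edge $(0,i)$). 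Therefore the feasible region $C'$ is ``downward unbounded but upward bounded'': increasing any coordinate while staying feasible can only help the objective (since $\gamma_i \ge 0$), so the maximum is attained at a ``topmost'' point of $C'$. I would then argue that the componentwise-maximal feasible point is precisely $p_i = d(i) := \mathrm{dist}_G(0,i)$. Indeed, (a) $d(\cdot)$ is feasible — this is the standard fact that shortest-path distances satisfy all triangle-inequality constraints $d(v) \le d(u) + w(u,v)$; (b) $d(\cdot)$ is componentwise maximal — for any feasible $\pp$ and any path $0 = u_0 \to u_1 \to \cdots \to u_k = i$ in $G$, adding up the constraints along the path yields $p_i = p_i - p_0 \le \sum_{\ell} w(u_\ell, u_{\ell+1})$, i.e. $p_i$ is at most the length of every $0$-to-$i$ path, hence $p_i \le d(i)$. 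Combining (a), (b) with $\gamma_i \ge 0$: $\sum \gamma_i p_i \le \sum \gamma_i d(i)$ for every feasible $\pp$, and $d(\cdot)$ itself is feasible, so the supremum over $C$ (which equals the max over $C'$ by continuity of the linear objective) is achieved at $\pp = (d(1),\ldots,d(n))$.

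The main obstacle I anticipate is the careful bookkeeping in part~1 around strict edges and around the possibility that some node is \emph{not} reachable from $0$ in $G$, or that $0$ is not reachable from some node — one must make sure the ``tightest inequality'' reductions (keeping only the smallest $s_i$ among edges $(0,i)$, the largest among $(i,0)$, etc.) do not change the feasibility status or the optimum, and that $d(i)$ is finite for all $i$ so that the proposed price vector is well-defined (finiteness of $d(i)$ follows from the edge $(0,i)$ of weight $b_i$; finiteness is all we need, we do not need $0$ reachable \emph{from} $i$). The perturbation trick for strict edges is the one genuinely delicate point and I would state it as a small separate claim: the augmented graph has no negative cycle and no zero-weight cycle through a strict edge iff, for all sufficiently small $\delta > 0$, the graph with each strict-edge weight decreased by $\delta$ has no negative cycle — and then everything reduces to the classical weak-constraint / Bellman--Ford theory. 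Everything else (the triangle-inequality feasibility of $d$, the path-summation upper bound, monotonicity of the objective in each coordinate) is routine and I would not spell out the arithmetic.
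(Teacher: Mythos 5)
Your proposal is correct and follows essentially the same route as the paper: reading the cell description as a difference-constraint system on $(p_0,p_1,\ldots,p_n)$, handling strict edges by the $w(e)\mapsto w(e)-\delta$ perturbation (which is exactly the paper's $G(\epsilon)$ construction), and establishing that the shortest-path distance vector is componentwise maximal over the closure so that nonnegativity of the $\gamma_i$ finishes part 2. The only cosmetic difference is that you prove componentwise maximality by summing constraints along a shortest $0$-to-$i$ path, whereas the paper does an induction on depth in the shortest-path tree — these are the same fact.
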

\begin{proof}
1. Considering node 0 as having an associated variable $p_0$ with fixed value 0,
the given set of equations (i.e., pairs of weak inequalities) and (strict)
inequalities can be viewed as a set of difference constraints on the variables 
  $(p_0,p_1,\ldots,p_n)$, and
it is well known that the feasibility of such a set of constraints
can be formulated as a negative weight cycle problem.
If there is a cycle with negative weight $w$, then adding all the
inequalities corresponding to the edges of the cycle yields the
constraint $0 \leq w$ (which is false); if there is a cycle 
  with zero weight but also
a strict edge, then summing the inequalities yields $0 < 0$.

Conversely, suppose that $G$ does not contain a negative weight cycle
or a zero weight cycle with a strict edge. For each strict edge $e$,
replace its weight $w(e)$ by $w'(e)=w(e) - \epsilon$ for a sufficiently small $\epsilon>0$
(we can treat $\epsilon$ symbolically), and let $G(\epsilon)$
be the resulting weighted graph.
Note that $G(\epsilon)$ does not contain any negative weight cycle,
hence all shortest paths are well-defined in $G(\epsilon)$.
Compute the shortest (minimum weight)
paths from  node $0$ to all the other nodes in $G(\epsilon)$, and let $\pp(\epsilon)$ be
the vector of distances from $0$. 
For each edge $(i,j)$ the distances $p_i(\epsilon)$ and $p_j(\epsilon)$
(where $p_0(\epsilon)=0$) must satisfy $p_j(\epsilon)\leq  p_i(\epsilon) + w'(i,j),$ 
hence all the (weak and strict) inequalities are satisfied.

To determine if a set of equations and inequalities defines
a nonempty cell, we can form the graph $G(\epsilon)$ and test
for the existence of a negative weight cycle using for example
the Bellman-Ford algorithm.

\medskip
2. Suppose that cell $C$ specified by the constraints is nonempty. 
Then we claim that the vector $\pp = \pp(0)$
of distances from node 0 to the other nodes in the graph $G$
is greater than or equal
to any vector $\pp' \in C$ in all coordinates.
We can show this by induction on the depth of a node in the
shortest path tree $T$ of $G$ rooted at node $0$.
Letting $p'_0 = p_0=0$, the basis is trivial.
For the induction step, consider a node $j$ with parent $i$ in $T$.
By the inductive hypothesis $p'_i \leq p_i$. The edge
$(i,j)$ implies that $p'_j -p'_i \leq w(i,j)$ {or} {$<w(i,j)$},
and the presence of the edge $(i,j)$ in the shortest path tree implies that
$p_j=p_i+w(i,j)$.
Therefore, $p'_j \leq p_j$.

The supremum of the expected revenue $\Revv(\pp')$ over the cell $C$ is
given by the optimal value of the linear program
  that maximizes $\sum_{i\in [n]} \gamma_i\cdot p'_i$ subject to
$\pp' \in {C'}$, where ${C'}$ is the closure of the cell
$C$.
Observe that all the coefficients $\gamma_i$ of the objective function
are nonnegative, and clearly $\pp$ is in the closure ${C'}$.
Therefore $\pp$ achieves the supremum of the expected revenue over $C$.
\end{proof}

The NP characterization of \itemp\ and the corresponding structural characterization
  of the optimal price vector $\pp=\pp(0)$ of each cell have several easy 
  and useful consequences.

First, we get an alternative proof of Lemma \ref{supremum} regarding
  the maximum tie-breaking rule:



\begin{proof}[Second Proof of Lemma \ref{supremum}]
Suppose that the supremum of the expected revenue is achieved in cell $C$.
Let $G$ be the corresponding graph, and let $\pp$ be the price vector of the
distances from node 0 to the other nodes.
If $\pp \in C$ then the conclusion is immediate,
so assume $\pp \notin C$. From the proof of the above lemma we have
that $\pp \geq \pp'$ coordinate-wise for all $\pp' \in C$.

We claim that for any valuation $\vv\in V$, the revenue $\Revv(\vv,\pp)$
is at least as large as the revenue $\Revv(\vv,\pp')$ 
under any $\pp' \in C$.
Suppose that the buyer selects item $i$ under $\vv$ for prices $\pp'$.
Then $p'_i \leq v_i$ and thus also $p_i \leq v_i$ (since $\pp$ is
in the closure of $C$) and thus $i$ is also eligible for
selection under $\pp$. If the buyer selects $i$ under $\pp$
then we know that $p_i \geq p'_i$ and the conclusion follows.
Suppose that the buyer selects another item $j$ under $\pp$
and that $p'_i > p_j$ and hence $p_i > p_j$.
Then we must have $v_j - p_j > v_i -p_i$ 
  due to the tie-breaking rule.
The facts that $\pp$ is in the closure of $C$ and
$v_j - p_j > v_i -p_i$ imply that $v_j - p'_j > v_i -p'_i$
for all $\pp' \in C$, and therefore the buyer should
have picked $j$ instead of $i$ under prices $\pp'$, a contradiction.

We conclude that for any $\vv\in V$, $\Revv(\vv,\pp)\ge \Revv(\vv,\pp')$
  for any $\pp' \in C$, and the lemma follows.
\end{proof}

Another consequence 
  suggested by the structural characterization of Lemma \ref{lem:technical} is that  
  the maximum of expected revenue can always be achieved by a price vector $\pp$ in which
  all prices $p_i$ are sums of a value and differences between 
  pairs of values of items. This implies for example the following useful corollary.

\begin{corollary}\label{integerprice}\begin{flushleft}
If all the values in $V_i$, $i\in [n]$, are integers, then 
  there must exist an optimal price vector $\pp\in P$ with integer coordinates.
\end{flushleft}\end{corollary}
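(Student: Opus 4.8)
The plan is to read the integrality off directly from the structural characterization of the optimal price vector established in Lemma~\ref{lem:technical}. First I would invoke Lemma~\ref{supremum} to fix a cell $C$ in which the supremum of the expected revenue over $P$ is attained, and form the associated weighted directed graph $G=(N,E)$ as in the construction preceding Lemma~\ref{lem:technical}. By Lemma~\ref{lem:technical}(2), the supremum of $\Revv(\pp)$ over $\pp\in C$ is achieved by the vector $\pp(0)$ whose $i$-th coordinate is the length of a shortest path from node $0$ to node $i$ in $G$; since $C$ is globally optimal, $\pp(0)$ achieves the global supremum as well. Note that this is the (unperturbed) graph $G$, not the perturbed graph $G(\epsilon)$ used elsewhere in the proof of Lemma~\ref{lem:technical}, so no $\epsilon$'s enter the picture.

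The key observation is then that every edge weight of $G$ is, by construction, one of $s_i$, $-s_i$, $s_i-t_j$, or $t_j-s_i$ for support values $s_i\in V_i$, $t_j\in V_j$. Under the hypothesis that all support values are integers, each such weight is an integer, so $G$ is integer-weighted. Because $C$ is nonempty, Lemma~\ref{lem:technical}(1) tells us $G$ has no negative-weight cycle, so the shortest-path distances from node $0$ are well-defined; moreover $C\subseteq P$ forces the constraint $p_i\le b_i$ to appear in the description of $C$ for every $i$, hence $G$ contains an edge $(0,i)$, so node $0$ reaches every node and all these distances are finite. A finite shortest-path distance in an integer-weighted graph with no negative cycle is an integer, so $\pp(0)$ has integer coordinates.

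It remains to check $\pp(0)\in P$, but this is already contained in the proof of Lemma~\ref{lem:technical}(2): $\pp(0)$ lies in the closure $C'$ of $C$, and since $P=\times_{i=1}^n[a_i,b_i]$ is closed we have $C'\subseteq P$. Hence $\pp(0)$ is an optimal price vector in $P$ with integer coordinates, which proves the corollary.

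I do not expect a genuine obstacle here, since all the real work was done in Lemma~\ref{lem:technical}. The only two points needing care are (i) quoting the optimal vector as $\pp(0)$ rather than the $\epsilon$-perturbed $\pp(\epsilon)$, whose coordinates need not be integral, and (ii) making sure the shortest-path distances from node $0$ are finite (neither $-\infty$ nor $+\infty$) so that integrality of the edge weights actually forces integrality of the distances --- both handled by the upper-bound edges $(0,i)$ that exist because $C\subseteq P$.
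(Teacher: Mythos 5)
Your proposal is correct and follows essentially the same route as the paper, which derives the corollary directly from the structural characterization in Lemma~\ref{lem:technical}: the optimal price vector of the optimal cell is the vector of shortest-path distances from node $0$ in the integer-weighted graph $G$, hence each $p_i$ is a sum of a support value and differences of support values, and therefore an integer. Your extra care about finiteness of the distances (via the edges $(0,i)$ forced by $C\subseteq P$) and about $\pp(0)$ lying in the closure $C'\subseteq P$ just makes explicit what the paper leaves implicit.
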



\def\Uti{\mathcal{U}}
\def\topU{\mathcal{T}}

\section{A polynomial-time algorithm for support size 2}\label{support2-sec}

In this section, we present a polynomial-time algorithm for the case 
  that each distribution has support size at most $2$.
In Section~\ref{ssec:special-case}, we give a polynomial-time algorithm under a certain ``non-degeneracy''
assumption on the values. In Section~\ref{ssec:general-case} we generalize this algorithm to handle the general case.

\subsection{An Interesting Special case.} \label{ssec:special-case}

In this subsection, we assume that every item has support size $2$,
  where $V_i=\{a_i,b_i\}$ satisfies
  $b_i> a_i> 0$, for all $i \in [n]$. 
Let $q_i:0<q_i<1$ denote the probability of 
  the value of item $i$ being $b_i$.
For convenience, we also let $t_i=b_i-a_i>0$.
In addition, we assume in this subsection that the value-vectors
  $\aa=(a_1,\ldots,a_n)$ and $\bb=(b_1,\ldots,b_n)$ satisfy the following
  ``non-degeneracy'' assumption:
\begin{quote}\centering
\textbf{Non-degeneracy assumption}: $b_1<b_2<\cdots<b_n$, $a_i\ne a_j$ {and} 
  $t_i\ne t_j$ {for all $i,j\in[n]$.}
\end{quote}
As we show next in Section \ref{ssec:general-case}, this special case encapsulates the essential difficulty of the problem.

Let $\opt$ denote the set of optimal price vectors in $P=\times_{i=1}^{n} [a_i,b_i]$ that 
  maximize the expected revenue $\Revv(\pp)$.
Next we prove a sequence of lemmas to show that, given $\aa$
  and $\bb$ that satisfy all the conditions above one can 
  compute efficiently a set $A\subseteq P$
  of price vectors such that $|A|=O(n^2)$ and $\opt\subseteq A$.  
Hence, by computing $\Revv(\pp)$ for all $\pp\in A$, we get 
  both the maximum of expected revenue and an optimal price vector.
  
We start with the following lemma:

\begin{lemma}\label{firstlem}
If $\pp\in P$ satisfies $p_i>a_i$ for all $i\in [n]$,
  then either $\pp=\bb$ or we have $\pp\notin \emph{\opt}$.
\end{lemma}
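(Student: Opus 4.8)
The plan is to show that if $\pp\in P$ has $p_i>a_i$ for every $i$ and $\pp\neq\bb$, then one can perturb $\pp$ to strictly increase the expected revenue, so $\pp\notin\opt$. Since $\pp\neq\bb$ and every $p_i\le b_i$, there is at least one index $i$ with $p_i<b_i$. The key observation is that when $a_i<p_i\le b_i$ for all $i$, the buyer's value for item $i$ is $b_i$ or $a_i$, but since $p_i>a_i$, item $i$ can be selected only in valuation vectors where $v_i=b_i$; moreover the utility $b_i-p_i<b_i-a_i$, whereas in those same vectors the utility of an item $j$ with $v_j=a_j$ is $a_j-p_j<0$. So effectively the buyer ``only considers'' items whose realized value is the high value. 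This should make the revenue behave like a simple function of the prices on a neighborhood of $\pp$.

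**The perturbation.**

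First I would make this precise: for $\pp$ in the open region $\{a_i<p_i\le b_i \text{ for all } i\}$, I would argue that the probability that the buyer buys item $i$ is exactly
\[
\gamma_i(\pp) \;=\; q_i \cdot \prod_{j\,:\,b_j-p_j > b_i-p_i \text{ or } (b_j-p_j=b_i-p_i,\ p_j>p_i \text{ or } (p_j=p_i,\ j<i))}\Big(\text{value }\neq b_j \text{ with prob }1-q_j\Big)^{c}\cdots
\]
— more carefully, $i$ is bought iff $v_i=b_i$ and for every $j$ that ``beats'' $i$ in the utility/tie-break order we have $v_j\neq b_j$; here the ordering among items compares $b_j-p_j$, breaking ties by price and then index. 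The point is that, locally around $\pp$, small changes to prices that preserve the strict inequalities $b_j-p_j\gtrless b_i-p_i$ do not change which index beats which, so each $\gamma_i$ is locally constant and $\Revv(\pp)=\sum_i \gamma_i p_i$ is linear with nonnegative coefficients $\gamma_i$ (each $\gamma_i>0$ since $q_i>0$ and all $q_j<1$). Increasing any $p_i$ that can be increased while staying in the cell therefore strictly increases the revenue. So at an optimum within this region, no price can be raised, i.e. either some strict inequality $b_j-p_j = b_i-p_i$ becomes tight, or $p_i=b_i$ for the relevant items. I would then argue, using the non-degeneracy assumption and a short case analysis on the ``top'' item (the one maximizing $b_j-p_j$), that the only price vector in this region at which no beneficial perturbation exists — after also collapsing the boundary cases via the tie-breaking rule — is $\pp=\bb$.

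**The main obstacle.**

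I expect the delicate step to be handling the tie cases cleanly: when $b_j-p_j=b_i-p_i$, raising $p_i$ slightly can cause item $i$ to jump ahead of $j$ in the tie-break order, which could transfer probability mass from $j$ to $i$ in a way that is favorable (since we'd be selling the more expensive item), so ties are actually helpful, not an obstruction — the real content is ruling out the genuinely local maximum. The cleanest route is probably to identify the item $k$ achieving the maximum of $b_j-p_j$ (breaking ties by highest price); raising $p_k$ by a small $\eps$ keeps $k$ on top, leaves all other comparisons with $k$ on the correct side, and strictly increases $\Revv$ unless $p_k$ is already at $b_k$ and, propagating down, every item's price is at its high value. I would formalize ``propagating down'' by induction on the items sorted by $b_j-p_j$: if the top $\ell$ items all have $p=b$, consider the $(\ell{+}1)$-st; its selection probability is still a positive constant locally, so its price must also be maximal. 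Invoking the ``optimum is achieved'' Lemma~\ref{supremum} lets me assume $\pp\in\opt$ rather than chasing suprema, which keeps the argument finite. The non-degeneracy conditions $b_1<\cdots<b_n$, $a_i\neq a_j$, $t_i\neq t_j$ are what guarantee that the ordering by $b_j-p_j$ is generic enough that these local perturbations don't accidentally create new ties elsewhere that would complicate the bookkeeping.
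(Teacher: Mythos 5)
There is a genuine gap in your perturbation step. Before describing it, note that the lemma admits a much more direct proof, which is the one the paper gives: since $p_i>a_i$ for all $i$, an item can only ever be sold when its realized value is $b_i$, so for every valuation $\vv$ the buyer under $\pp$ pays some $p_i\le b_i\le \max\{b_j: v_j=b_j\}$, and the right-hand side is exactly $\Revv(\vv,\bb)$ under the maximum-price tie-breaking rule; hence $\Revv(\vv,\bb)\ge\Revv(\vv,\pp)$ pointwise. For any $i$ with $p_i<b_i$, the valuation with $v_i=b_i$ and $v_j=a_j$ for $j\ne i$ occurs with positive probability and gives $\Revv(\vv,\pp)=p_i<b_i=\Revv(\vv,\bb)$, so $\Revv(\bb)>\Revv(\pp)$. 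This two-line comparison needs neither the cell/linearity structure nor non-degeneracy.

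The gap in your argument is the treatment of ties. You raise only the price of the single top item $k$ (the maximizer of $b_j-p_j$, ties broken by highest price) and assert this ``keeps $k$ on top.'' It does not: raising $p_k$ strictly \emph{lowers} $k$'s utility, so every item $j$ that was tied with $k$ now strictly beats it --- your parenthetical that raising $p_i$ makes $i$ ``jump ahead'' of $j$ has the sign backwards. Probability mass therefore transfers \emph{away} from $k$ to the tied items, which by your choice of $k$ have lower prices, and this is a first-order loss against a gain of only $\eps$. Concretely, take $\bb=(10,11)$, $\aa=(1,2.5)$, $\pp=(5,6)$: the utilities at the high values tie at $5$, and raising $p_2$ to $6+\eps$ changes the revenue by $-q_1q_2+(1-q_1)q_2\hspace{0.03cm}\eps$, which is negative for small $\eps$. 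The repair --- and the device the paper itself uses in the later lemmas of this section --- is to raise \emph{all} items in the tied set $S$ achieving $\max_j(b_j-p_j)$ by the same $\eps$ simultaneously; this preserves the selection within $S$ and adds $\eps$ whenever an item of $S$ is sold, and it works as long as $\max_j(b_j-p_j)>0$, i.e., $\pp\ne\bb$. Note also that non-degeneracy of the $a_i,b_i,t_i$ does not prevent ties in $b_j-p_j$, since $\pp$ ranges over all of $P$; it is not what saves this step, and indeed this particular lemma does not need it at all.
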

\begin{proof}
Assume for contradiction that $\pp\in P$ satisfies $p_i>a_i$, 
  for all $i\in [n]$ but $\pp\ne \bb$.
It then follows from the maximum price tie-breaking rule that  
  $\Revv(\vv,\bb)\ge \Revv(\vv,\pp)$
for all $\vv\in V$. 
Moreover, there is at least one 
  $\vv^*\in V$ such that
  $\Revv(\vv^*,\bb)>\Revv(\vv^*,\pp)$: If $p_i<b_i$, then consider $\vv^*$
  with $v^*_i=b_i$ and $v^*_j=a_j$ for all other $j$.
It follows that $\Revv(\bb)>\Revv(\pp)$ as we assumed that $0<q_i<1$ for
  all $i\in [n]$ and thus, $\pp\notin \opt$.
\end{proof} 

Next we show that 
  there can be at most one $i$ such that $p_i=a_i$; otherwise $\pp\notin \opt$.
We emphasize that all the conditions on $V_i$ are assumed in the lemmas below,
  the non-degeneracy assumption in particular.

\begin{lemma}\label{secondlem}
If $\pp\in P$ has more than one $i\in [n]$
  such that $p_i=a_i$, then we have $\pp\notin \emph{\opt}$.
\end{lemma}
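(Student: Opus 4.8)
The plan is to argue by contradiction: suppose $\pp\in P$ has two indices with $p_i=a_i$, say $p_k=a_k$ and $p_\ell=a_\ell$ with $b_k<b_\ell$ (using the non-degeneracy ordering $b_1<\cdots<b_n$), and show that we can strictly improve the revenue, contradicting $\pp\in\opt$. The natural modification is to raise the price of the item with the \emph{smaller} maximum value, namely set $\widetilde p_k$ to something larger than $a_k$ while keeping all other prices fixed, and compare $\Revv(\widetilde\pp)$ to $\Revv(\pp)$ valuation-by-valuation as in the proofs of Lemma~\ref{searchspace} and Lemma~\ref{supremum}.

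\textbf{Key steps.} First I would invoke Lemma~\ref{searchspace} and Lemma~\ref{firstlem} to reduce to the situation where $\pp\in P$ but $\pp\ne\bb$; since $\pp\ne\bb$ there is some coordinate below its top value. Second, among the coordinates $i$ with $p_i=a_i$ pick $k$ to be the one with the smallest $b_k$ (equivalently smallest index), and consider raising $p_k$ by a small amount $\delta>0$ to $\widetilde p_k=a_k+\delta$, leaving $\widetilde p_j=p_j$ for $j\ne k$; here $\delta$ is chosen small enough that $a_k+\delta$ stays below the next relevant breakpoint — specifically below $b_k$ and below $a_k + (v_j - p_j)$ for all $j$ and all support values $v_j$ for which that quantity is positive. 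Third, I would check, for each valuation vector $\vv\in V$, that $\Revv(\vv,\widetilde\pp)\ge\Revv(\vv,\pp)$: the only valuations affected are those where the buyer was previously buying item $k$ (at revenue $a_k$) — for these, either the buyer still buys $k$ now at revenue $a_k+\delta>a_k$ (when item $k$'s reduced utility $v_k - a_k - \delta$ still stays maximum, in particular when $v_k = b_k$ and $b_k - a_k - \delta$ dominates, or when ties are broken toward $k$ by the maximum-price rule since $\widetilde p_k > p_j$ for the tied competitors which had $p_j = a_j < a_k$... wait, that needs care) or the buyer switches to some other item $j$; the switch can only happen when $v_k - a_k = v_j - p_j$ was a tie, and then by the maximum-price rule the buyer was already only buying $k$ if $a_k \ge p_j$, so after the perturbation $j$ wins, but then $v_j - p_j = v_k - a_k \ge 0$ means $p_j \le v_j$ and the new revenue is $p_j$ — this could be \emph{less} than $a_k$, so the blunt perturbation is not obviously safe. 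Fourth, to get a strict gain somewhere, I would exhibit one valuation — e.g. the one with $v_k=b_k$ and $v_j=a_j$ for all $j\ne k$, which occurs with positive probability — where item $k$ clearly remains the unique maximizer after perturbation (its utility $b_k-a_k-\delta$ is near $b_k-a_k>0=\max_j(a_j-p_j)$ when all competitors are at their bottom value and at price $\ge a_j$), giving revenue $a_k+\delta>a_k$.

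\textbf{Main obstacle.} The delicate point, flagged above, is that naively raising $p_k$ can cause the buyer to switch away from $k$ to a competitor $j$ whose price $p_j$ is \emph{below} $a_k$, strictly decreasing the revenue on that valuation. The fix is presumably to instead raise $p_k$ to $\min\{b_k,\,\min_j(v_j-p_j)+a_k\}$-type quantities or, more cleanly, to raise \emph{all} coordinates $i$ with $p_i=a_i$ and $b_i<b_k$ simultaneously — or to use the structural characterization (Lemma~\ref{lem:technical}, Corollary~\ref{integerprice}) that an optimal $\pp$ lies at a vertex of its cell, so two coordinates at their lower bounds $a_i,a_j$ with $a_i\ne a_j$ (non-degeneracy!) would over-constrain things; I expect the actual argument exploits non-degeneracy — $a_k\ne a_\ell$ — to show that one of the two "floor" prices can be nudged up while the competitor it would switch to has a strictly smaller price, so the net effect is a strict improvement. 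Formalizing exactly which perturbation is safe, and handling the tie-breaking bookkeeping cleanly across all valuations $\vv\in V$, is the step I expect to require the most care.
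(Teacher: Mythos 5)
You have the right general shape --- a local price perturbation that weakly improves the revenue on every valuation and strictly improves it on at least one --- and you have correctly located the obstruction: raising $p_k$ alone can push the buyer onto a competitor priced below $a_k$. But you never resolve that obstruction, and the fixes you float at the end are not the ones that work, so as written there is a genuine gap. The paper's argument differs from your sketch in two essential ways. First, $k$ is chosen as the floor-priced item with the \emph{smallest $a_k$}, not the smallest $b_k$ (or smallest index) as you propose. This is exactly what saves the case $\Util(\vv,\pp)=0$ with $k\in\topU(\vv,\pp)$: since there is a second item $\ell$ with $p_\ell=a_\ell>a_k$, that item must have $v_\ell=a_\ell$ and hence also lies in $\topU(\vv,\pp)$, so the revenue under $\pp$ was already $\ge p_\ell>p_k$ and was not earned from item $k$; when $k$ drops out of the tie set after the perturbation, nothing is lost. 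Under your choice of $k$, $a_k$ could be the \emph{largest} floor price, and this case would produce a strict loss --- this is precisely the failure mode you flagged but did not eliminate.

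Second, the perturbation is not applied to $p_k$ alone, nor to the floor-priced coordinates, but to the entire utility level set $S=\{i\in[n]: b_i-p_i=b_k-a_k\}$: every price in $S$ is raised by the same $\epsilon$. This is what prevents the buyer, on valuations with $\Util(\vv,\pp)=t_k$, from switching to another item of $S$ whose price $b_i-t_k$ may well be below $a_k$; since the whole level set moves together, $\topU(\vv,\pp')=\topU(\vv,\pp)$ and the revenue increases by exactly $\epsilon$ on such valuations. Non-degeneracy ($t_i\ne t_j$) guarantees that $k$ is the \emph{only} floor-priced item in $S$, which is needed so that the remaining items of $S$ have $b_i>p_i>a_i$ and cannot interfere in the zero-utility case. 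Your alternative suggestions (nudging $p_k$ up to a breakpoint, raising all floor-priced coordinates with $b_i<b_k$, or appealing to Lemma~\ref{lem:technical}) do not supply these two ingredients, and the valuation-by-valuation comparison fails without them.
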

\begin{proof}
Assume for contradiction that $\pp\in P$ has more than one $i$
  such that $p_i=a_i$.
We prove the lemma by explicitly constructing a new price vector $\pp'\in P$ from $\pp$ such that
  $\Revv(\vv,\pp')\ge \Revv(\vv,\pp)$ for all $\vv\in V$ and 
  $\Revv(\vv^*,\pp')>\Revv(\vv^*,\pp)$ for at least one $\vv^*\in V$.
This implies that $\Revv(\pp')>\Revv(\pp)$ and thus, 
  $\pp$ is not optimal.
We will be using this simple strategy in most of the proofs of this section.

Let $k\in [n]$ denote \emph{the} item with the smallest $a_k$ among all 
  $i\in [n]$ with $p_i=a_i$.
By the non-degeneracy assumption, $k$ is unique.
Recall that $t_k=b_k-a_k=b_k-p_k$. We let $S$ denote the set of $i\in [n]$ 
  such that $b_i-p_i=t_k$, so $k\in S$.
By the non-degeneracy assumption again, we have $p_i>a_i$ for all $i\in S-\{k\}$.  
We now construct $\pp'\in P$ as follows: For each $i\in [n]$, set $p_i'=p_i$ if $i\notin S$;
  otherwise set $p_i'=p_i+\epsilon$ for some sufficiently small $\epsilon>0$.
Next we show that $\Revv(\vv,\pp')\ge \Revv(\vv,\pp)$ for all $\vv\in V$. 
Fix a $\vv\in V$. We consider the following three cases:\vspace{0.08cm}
\begin{flushleft}\begin{enumerate}
\item If $\Util(\vv,\pp)=t_k$,
  then $\topU(\vv,\pp)\subseteq S$ by the definition of $S$. When $\epsilon$
  is sufficiently small, we have 
  $$\topU(\vv,\pp')=\topU(\vv,\pp)\ \ \ \ \text{and}\ \ \ \ 
  \Revv(\vv,\pp')=\Revv(\vv,\pp)+\epsilon>\Revv(\vv,\pp).$$
\item If $\Uti(\vv,\pp)=0$ and
  $k\in \topU(\vv,\pp)$, then we have $\topU(\vv,\pp)\cap S=\{k\}$ since
  $b_i>p_i>a_i$ for all other $i\in S$. We claim that $\Revv(\vv,\pp)>p_k$
  in this case. To see this, note that there exists an item $\ell\in [n]$
  such that $p_\ell=a_\ell$ and $p_\ell>p_k$ by our choice of $k$.
As $\Uti(\vv,\pp)=0$, we must have $v_\ell=a_\ell$ and thus, $\ell\in \topU(\vv,\pp)$
  and $\Revv(\vv,\pp)\ge p_\ell$ is not obtained from selling item $k$.
Therefore, we have
$$
\Uti(\vv,\pp')=0,\ \ \ \topU(\vv,\pp')=\topU(\vv,\pp)-\{k\}\ \ \ \text{and}\ \ \ 
\Revv(\vv,\pp')=\Revv(\vv,\pp).
$$
  
\item Finally, if neither of the cases above happens, then we have $\topU(\vv,\pp)\cap S=\emptyset$
  (note that this includes the case when $\topU(\vv,\pp)=\emptyset$).
For this case we have $\topU(\vv,\pp')=\topU(\vv,\pp)$
  and $\Revv(\vv,\pp')=\Revv(\vv,\pp).$\vspace{0.08cm}
\end{enumerate}\end{flushleft}

The lemma then follows because in the second case
  above, we indeed showed that the following valuation vector
  $\vv^*$ in $V$ satisfies $\Revv(\vv^*,\pp')>\Revv(\vv^*,\pp)$: $v_k=b_k$ and $v_i=a_i$ for all $i\ne k$.
\end{proof}

Lemma \ref{secondlem} reduces our search space to $\pp$ such that
  either $\pp=\bb$ or $\pp\in P_k$ for some $k\in [n]$, where we use $P_k$ to denote the set
  of price vectors $\pp\in P$ such that $p_k=a_k$ and 
  $p_i>a_i$ for all other $i\in [n]$.
  
The next lemma further restricts our attention to $\pp\in P_k$ such that
  $p_i\in \{b_i,b_i-t_k\}$ {for all $i\ne k$.

\begin{lemma}\label{popo}
If $\pp\in P_k$ but $p_i\notin \{b_i,b_i-t_k\}$ for some $i\ne k$,
  then we have $\pp\notin \emph{\opt}$.
\end{lemma}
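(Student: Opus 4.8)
The plan is to use the same local-perturbation strategy as in Lemma \ref{secondlem}: given $\pp\in P_k$ with $p_i\notin\{b_i,\,b_i-t_k\}$ for some $i\ne k$, I will construct a price vector $\pp'\in P$ with $\Revv(\vv,\pp')\ge\Revv(\vv,\pp)$ for all $\vv\in V$ and strict inequality for at least one $\vv^*$, which forces $\pp\notin\opt$ since $0<q_i<1$ for every $i$. The natural move is to nudge $p_i$ upward: set $p_i'=p_i+\epsilon$ for a sufficiently small $\epsilon>0$ and $p_j'=p_j$ for all $j\ne i$. Because $p_i>a_i$ strictly (as $\pp\in P_k$ and $i\ne k$), this keeps $\pp'\in P$ for small $\epsilon$, so the construction is legitimate; the goal is then to show that raising this one price never loses any revenue and sometimes gains.

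The key case analysis, for a fixed valuation $\vv\in V$, is driven by where item $i$ sits relative to the top set $\topU(\vv,\pp)$. First, if $i\notin\topU(\vv,\pp)$, then since only $p_i$ changed and it only increased, the buyer's selection and the revenue are unaffected by a small $\epsilon$: $\topU(\vv,\pp')=\topU(\vv,\pp)$ (the utility of $i$ can only decrease, and it was strictly below the max, so it stays below) and $\Revv(\vv,\pp')=\Revv(\vv,\pp)$. Second, if $i\in\topU(\vv,\pp)$ but $i$ is \emph{not} the item actually selected — i.e.\ there is another item $j\in\topU(\vv,\pp)$ with $p_j\ge p_i$ (in fact $p_j>p_i$, since distinct items in $P_k$ have distinct prices because the $b_i$ are distinct and, for the two forms $b_i$ vs $b_i-t_k$, one checks prices stay distinct — I should verify this bookkeeping carefully) — then after the perturbation item $i$ drops strictly out of the top set, the same item $j$ is still selected, and $\Revv(\vv,\pp')=\Revv(\vv,\pp)$. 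Third, if $i$ \emph{is} the selected item under $(\vv,\pp)$, then for small $\epsilon$ item $i$ is still selected (its utility drops by $\epsilon$ but it was the unique strict maximizer of price among the top set, and no other item's utility changes), so $\Revv(\vv,\pp')=p_i+\epsilon>\Revv(\vv,\pp)$ — a strict gain. To produce the witnessing $\vv^*$ with a strict gain, it suffices to exhibit a valuation for which item $i$ is the selected item: I expect $v^*_i=b_i$ (so item $i$ has its highest value, hence high utility and the top price among any ties) and $v^*_j=a_j$ for all $j\ne i$ works, using $p_i<b_i$ (which holds because $p_i\notin\{b_i,b_i-t_k\}$ in particular $p_i\ne b_i$) together with the fact that the remaining items, priced above their minimum values and evaluated at those minimum values, cannot beat item $i$'s utility $b_i-p_i>0$ — this last point needs $p_i$ small enough relative to the gap structure, which is where I will lean on the hypothesis $p_i<b_i-t_k$ is \emph{false} only in the excluded case, i.e.\ $p_i>b_i-t_k$, so $b_i-p_i<t_k$, meaning item $i$'s utility at value $b_i$ is genuinely modest — and on non-degeneracy to ensure no accidental ties pointing the wrong way.

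The main obstacle I anticipate is the second case above: showing that when $i$ is in the top set but not selected, a small upward perturbation of $p_i$ genuinely does not disturb which item is selected for \emph{any} nearby $\vv$, and in particular that the perturbation does not cause the newly-selected item to have a strictly lower price than before. This requires knowing that within $P_k$ all prices are pairwise distinct, which rests on the non-degeneracy assumption ($b_1<\cdots<b_n$, distinct $a_i$, distinct $t_i$) plus a short argument that the two admissible values $b_i$ and $b_i-t_k$ for coordinate $i$ cannot coincide across items $i\ne i'$; I should also double-check the boundary subtlety that the hypothesis excludes exactly $p_i\in\{b_i,b_i-t_k\}$, so the remaining possibilities split into $p_i>b_i-t_k$ (with $p_i\ne b_i$) and $a_i<p_i<b_i-t_k$, and confirm the perturbation argument is uniform over both sub-ranges — if raising $p_i$ ever fails (e.g.\ at the very top of its range), the fallback is to instead \emph{lower} $p_i$ toward $b_i-t_k$ or down to the next relevant threshold, and rerun the same three-case analysis in the opposite direction.
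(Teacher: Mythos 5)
Your overall strategy (a local upward perturbation plus the three-case comparison of $\Revv(\vv,\pp')$ with $\Revv(\vv,\pp)$, and the witness $\vv^*$ with $v^*_i=b_i$, $v^*_j=a_j$) matches the paper's, but the specific perturbation you chose --- raising only the single coordinate $p_i$ --- does not work, and the failure is exactly in your third case. If item $i$ is the selected item but is \emph{tied} in utility with some other item $j$ (i.e., $v_i-p_i=v_j-p_j\ge 0$ with $p_j<p_i$, which is entirely possible: nothing in $P_k$ or the non-degeneracy assumption prevents $b_j-p_j=b_i-p_i$ for $j\ne i$, since the prices $p_j\in(a_j,b_j]$ are otherwise unconstrained), then after you raise $p_i$ by $\epsilon$ the utility of $i$ drops strictly below that of $j$, so $i$ leaves the set of maximum-utility items altogether. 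The tie-breaking rule only compares prices \emph{among} maximum-utility items, so being the ``unique strict maximizer of price'' does not keep $i$ selected; the buyer switches to $j$ and the revenue drops by $p_i-p_j$, a fixed positive amount on an event of probability at least $q_iq_j\prod_{\ell\ne i,j}(1-q_\ell)>0$. Since the gains elsewhere are only $O(\epsilon)$, your $\pp'$ can have strictly \emph{smaller} expected revenue than $\pp$, so the argument collapses. Your proposed fallback (lowering $p_i$ instead) does not repair this.

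The missing idea is to perturb the entire utility-tie class of item $i$ at once: let $S=\{j\in[n]: b_j-p_j=b_i-p_i\}$ and set $p_j'=p_j+\epsilon$ for all $j\in S$, $p_j'=p_j$ otherwise. The hypothesis $p_i\ne b_i-t_k$ is precisely what guarantees $k\notin S$ (since $b_k-p_k=t_k$), so every $j\in S$ has $p_j>a_j$ and can only attain nonnegative utility via $v_j=b_j$, i.e., utility exactly $b_i-p_i$. Then the three cases become clean: if $\Util(\vv,\pp)=b_i-p_i$ the whole top set lies in $S$ and moves up uniformly (revenue gains $\epsilon$, same selected item); if $\Util(\vv,\pp)\ge 0$ but differs from $b_i-p_i$, the top set is disjoint from $S$ and nothing changes; and if $\Util(\vv,\pp)<0$ nothing changes. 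The uniform shift is what prevents the selected item from being undercut by a cheaper tied item, which is the step your single-coordinate version cannot deliver.
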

\begin{proof}
Assume for contradiction that 
   $p_\ell\notin \{b_\ell,b_\ell-t_k\}$.
As $\pp\in P_k$, we also have $p_\ell>a_\ell$.
Now we use $S$ to denote the set of all $i\in [n]$ such that $b_i-p_i=b_\ell-p_\ell$.
It is clear that $k\notin S$.
We use $\pp'$ to denote the following new price vector: $p_i'=p_i$ for all
  $i\notin S$, and $p_i'=p_i+\epsilon$ for all $i\in S$, where 
  $\epsilon>0$ is sufficiently small.
We use the same proof strategy to show that $\Revv(\pp')>\Revv(\pp)$. Fix any $\vv\in V$. We have\vspace{0.08cm}
\begin{flushleft}\begin{enumerate}
\item If $\Uti(\vv,\pp)<0$, then clearly $\Uti(\vv,\pp')<0$ as well
  and thus, $\Revv(\vv,\pp')=\Revv(\vv,\pp)=0$.\vspace{-0.1cm}
\item If $\Uti(\vv,\pp)=b_\ell-p_\ell$, then
  $\topU(\vv,\pp)\subseteq S$ by the definition of $S$. When $\epsilon$ is
  sufficiently small,
$$
\topU(\vv,\pp')=\topU(\vv,\pp)\ \ \ \text{and}\ \ \ \
\Revv(\vv,\pp')=\Revv(\vv,\pp)+\epsilon>\Revv(\vv,\pp).
$$
\item If $\Uti(\vv,\pp)\ge 0$ but $\Uti(\vv,\pp)\ne b_\ell-p_\ell$, then 
  it is easy to see that $\topU(\vv,\pp)\cap S=\emptyset$, because $p_i>a_i$ 
  and $b_i-p_i=b_\ell-p_\ell$ for all $i\in S$.
It follows that $\topU(\vv,\pp')=\topU(\vv,\pp)$ {and}
  $\Revv(\vv,\pp')=\Revv(\vv,\pp)$.\vspace{0.08cm}
\end{enumerate}\end{flushleft}
The lemma follows by combining all three cases.
\end{proof}

As suggested by Lemma \ref{popo},
  for each $k\in [n]$, we use $P_k'$ to denote the set of $\pp\in P_k$ such that
  $p_k=a_k$ and $p_i\in \{b_i,b_i-t_k\}$ for all other $i$.
In particular, $p_i$ must be $b_i$ if $t_i<t_k$ ($t_i\ne t_k$, by 
  the non-degeneracy assumption).
The next lemma shows that we only need to consider
  $\pp\in P_k'$ such that $p_i=b_i$ for all $i<k$.

\begin{lemma}
If $\pp\in P_k'$ satisfies $p_\ell=b_\ell-t_k>a_\ell$ 
  for some $\ell<k$, then we have $\pp\notin \emph{\opt}$.
\end{lemma}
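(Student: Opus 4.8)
The plan is to show $\pp$ is not optimal by the strategy used in the preceding lemmas: produce a price vector $\pp'$ differing from $\pp$ in a single coordinate, with $\Revv(\vv,\pp')\ge\Revv(\vv,\pp)$ for every $\vv\in V$ and strict inequality for at least one $\vv$. The crucial point is which coordinate to raise. Among all indices $i<k$ with $p_i=b_i-t_k$ (at least one exists by hypothesis, and every such index also has $p_i>a_i$ because $\pp\in P_k$, forcing $t_i>t_k$), I would let $\ell^*$ be the one with the smallest $b_{\ell^*}$, equivalently the smallest index. Writing $S=\{\,i\in[n]:b_i-p_i=t_k\,\}$, I would first record three facts about any $\pp\in P_k'$: (a) no item ever has utility exceeding $t_k$ --- for $i=k$ the utility is $t_k$ (high value) or $0$ (low), for $i\neq k$ with $p_i=b_i$ it is $0$ or $-t_i<0$, and for $i\neq k$ with $p_i=b_i-t_k$ it is $t_k$ or $t_k-t_i<0$; (b) every $i\in S$ has price exactly $b_i-t_k$ (including $k$, since $a_k=b_k-t_k$), so on $S$ the price order is the index order; and (c) $\ell^*=\min S$ and thus has the smallest price in $S$, because $k\in S$ with $b_k>b_{\ell^*}$, every index of $S$ above $k$ has an even larger $b$, and every index of $S$ below $k$ is a candidate for $\ell^*$.

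Next I would set $p'_{\ell^*}=b_{\ell^*}$ and $p'_i=p_i$ otherwise (so $\pp'\in P$), and check $\Revv(\vv,\pp')\ge\Revv(\vv,\pp)$ for all $\vv$ by splitting on $v_{\ell^*}$. If $v_{\ell^*}=a_{\ell^*}$, then $\ell^*$ is unsellable under both vectors (utility $t_k-t_{\ell^*}<0$ resp.\ $-t_{\ell^*}<0$) and all other utilities agree, so the buyer's choice and the revenue are unchanged. If $v_{\ell^*}=b_{\ell^*}$, then by (a) the maximum utility under $\pp$ is $t_k$, attained exactly on $T:=\{\,i\in S:v_i=b_i\,\}\ni\ell^*$, and the buyer selects the largest-$b$ (equivalently highest-priced, by (b)) element $m$ of $T$, for revenue $b_m-t_k$. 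If $m\neq\ell^*$, then under $\pp'$ item $m$ is untouched, still maximizes utility at $t_k$, and is still the highest-priced element of $T\setminus\{\ell^*\}$, so the revenue is unchanged. If $m=\ell^*$, then by (c) all other items of $S$ are at their low values, so under $\pp'$ every item has utility $\le 0$ while $\ell^*$ and $k$ both have utility exactly $0$; the buyer then selects some item of price $\ge p'_{\ell^*}=b_{\ell^*}>b_{\ell^*}-t_k$, a strict gain over $\pp$. Finally, the valuation $\vv^*$ with $v^*_{\ell^*}=b_{\ell^*}$ and $v^*_i=a_i$ for $i\neq\ell^*$ lands in this last subcase and has $\Pr[\vv^*]>0$, so summing over $\vv$ gives $\Revv(\pp')>\Revv(\pp)$ and hence $\pp\notin\opt$.

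The main obstacle is the book-keeping around the maximum-price tie-breaking rule when $\ell^*$'s high-value utility drops from $t_k$ to $0$: one must verify both that the original tie-winner $m$ is unaffected when $m\neq\ell^*$, and that when $m=\ell^*$ there is no smaller element of $S$ poised to take over and collect less revenue. The latter is precisely fact (c), and it is the reason $\ell^*$ must be chosen as the minimum of $S$ rather than an arbitrary index below $k$: raising an arbitrary low-priced coordinate $p_\ell$ to $b_\ell$ could hand the sale to a smaller-indexed item of $S$ and strictly decrease the revenue.
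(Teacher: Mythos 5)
Your proof is correct, and it follows the section's general template (exhibit a $\pp'$ with $\Revv(\vv,\pp')\ge\Revv(\vv,\pp)$ for all $\vv$ and a strict gain on some positive-probability valuation), but the perturbation you use is genuinely different from the paper's. The paper takes $S$ to be \emph{all} indices $i<k$ with $p_i=b_i-t_k>a_i$ and raises every one of them by a sufficiently small $\epsilon$; the case analysis then mirrors Lemma \ref{secondlem}: if the tie set at utility $t_k$ lies entirely inside $S$ the revenue goes up by $\epsilon$, and if it also contains some $j\ge k$ the sale was already going to that higher-priced item, so knocking the $S$-items out of the tie changes nothing. You instead move a single coordinate discretely, all the way from $b_{\ell^*}-t_k$ to $b_{\ell^*}$, and the price of this simplification is that the choice of coordinate matters: you correctly identify that $\ell^*$ must be $\min S$ (in your larger $S$ containing $k$), since otherwise the sole-winner subcase could hand the sale to a cheaper item of $S$ and lose revenue --- a pitfall the paper's uniform $\epsilon$-shift sidesteps entirely because it never demotes an item below its peers. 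What your version buys is a concrete witness vector with no ``sufficiently small $\epsilon$'' bookkeeping and a visible strict gain of $t_k$ rather than $\epsilon$; what the paper's version buys is symmetry (no distinguished index) and a case analysis that reuses the earlier lemmas' structure almost verbatim. Your facts (a)--(c) are all verified correctly (in particular, $t_i>t_k$ for every $i\ne k$ priced at $b_i-t_k$, which makes the low-value utility negative), so the argument goes through.
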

\begin{proof}
We construct $\pp'$ from $\pp$ as follows. Let $S$ denote the set of all $i<k$
  such that $p_i=b_i-t_k>a_i$.
By our assumption, $S$ is nonempty.
Then set $p_i'=p_i$ for all $i\notin S$ and
  $p_i'=p_i+\epsilon$ for all $i\in S$, where $\epsilon>0$ is sufficiently small.
Similarly we show that $\Revv(\pp')> \Revv(\pp)$ by  
  considering the following cases:\vspace{0.08cm}
\begin{flushleft}\begin{enumerate}
\item If $\Uti(\vv,\pp)=t_k$ and $\topU(\vv,\pp)\cap S\ne \emptyset$,
  we consider the following cases. 
If $\topU(\vv,\pp)\subseteq S$, then 
$$\topU(\vv,\pp')=\topU(\vv,\pp)\ \ \ \ \text{and}\ \ \ 
\ \Revv(\vv,\pp')=\Revv(\vv,\pp)+\epsilon>\Revv(\vv,\pp).$$
Otherwise, there exists a $j\ge k$ such that
  $j\in \topU(\vv,\pp)$.
This implies that $\Revv(\vv,\pp)\ge p_j=b_j-t_k$ is not obtained from 
  any item in $S$. As a result,
$\topU(\vv,\pp')=\topU(\vv,\pp)-S$ {and} $\Revv(\vv,\pp')=\Revv(\vv,\pp)$.\vspace{-0.05cm}
\item If the case above does not happen, then
  we must have $\topU(\vv,\pp)\cap S=\emptyset$ (this includes the case
  when $\topU(\vv,\pp)=\emptyset$).
As a result, we have $\topU(\vv,\pp')=\topU(\vv,\pp)$ {and} $\Revv(\vv,\pp')=\Revv(\vv,\pp)$.\vspace{0.08cm}
\end{enumerate}\end{flushleft}
The lemma follows by combining the two cases.
\end{proof}

Finally, we use $P_k^*$ for each $k\in [n]$ to denote the set of 
  $\pp\in P$ such that $p_k=a_k$; $p_i=b_i$ for all $i<k$; $p_i=b_i$,
  for all $i>k$ such that $t_i<t_k$; and $p_i\in \{b_i,b_i-t_k\}$,
  for all other $i>k$.
However, $P_k^*$ may still be exponentially large in general. 
Let $T_k$ denote the set of $i>k$ such that $t_i>t_k$.
Given $\pp\in P_k^*$,
  our last lemma below implies that, if $i$ is the smallest index in $T_k$
  such that $p_i=b_i-t_k$, then $p_j=b_j-t_k$ for all $j\in T_k$ larger than $i$;
  otherwise $\pp$ is not optimal.
In other words, $\pp$ has to be monotone in setting $p_j$, $j\in T_k$,
  to be $b_j-t_k$; otherwise $\pp$ is not optimal.
As a result, there are only $O(n^2)$ many price vectors that we need to 
  check, and the best one among them is optimal.
We use $A\subseteq \cup_k P_k^*$ to denote this set of price vectors.

\begin{lemma}\begin{flushleft}
Given $k\in [n]$ and $\pp\in P_k^*$,
  if there exist two indices 
  $c,d\in T_k$ such that $c<d$, $p_c=b_c-t_k$ but $p_d=b_d$,
  then we must have $\pp\notin \emph{\opt}$.
\end{flushleft}\end{lemma}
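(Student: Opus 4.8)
The plan is to argue by contradiction. Suppose $\pp=\pp_R\in\opt$, where $R=\{i\in T_k:p_i=b_i-t_k\}$ is the set of ``discounted'' indices, so that $c\in R$ and $d\notin R$; replacing $c$ by the largest element of $R$ below $d$ we may assume $c=\max\{i\in R:i<d\}$. I will contradict optimality by comparing $\Revv(\pp_R)$ against $\Revv(\pp_{R\cup\{d\}})$ and $\Revv(\pp_{R\setminus\{c\}})$, both of which are price vectors of the form $\pp_{R'}$ with $R'\subseteq T_k$.

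The first step is a workable formula for $\Revv(\pp_{R'})$ for an arbitrary $R'\subseteq T_k$. Writing $H(\vv)=\{i:v_i=b_i\}$ for the high-value set, the crucial structural fact is that under $\pp_{R'}$ the only nonnegative utility levels that ever occur are $0$ and $t_k$: item $k$ (priced $a_k=b_k-t_k$) and each discounted item $i\in R'$ (priced $b_i-t_k$) realize utility $t_k$ exactly when at high value, whereas every other item realizes at most $0$. Since the $b_i$ are strictly increasing and every index of $R'$ exceeds $k$, the buyer buys item $\max(R'\cap H(\vv))$ for revenue $b_{\max(R'\cap H(\vv))}-t_k$ whenever $R'\cap H(\vv)\ne\emptyset$, and otherwise receives a revenue $\rho(\vv)$ that does not depend on $R'$, namely $\rho(\vv)=a_k$ if $v_k=b_k$ and $\rho(\vv)=\max\bigl(\{a_k\}\cup\{b_i:i\in H(\vv)\}\bigr)$ if $v_k=a_k$. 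Conditioning on item $k$'s value and summing over the valuations $\vv'$ of the remaining items,
\[
\Revv(\pp_{R'})=\mathrm{const}+\sum_{\vv':\,R'\cap H(\vv')\ne\emptyset}\Pr[\vv']\bigl(b_{\max(R'\cap H(\vv'))}-\phi(\vv')\bigr)=:\mathrm{const}+G(R'),
\]
where $\phi(\vv')=q_kb_k+(1-q_k)\bigl(t_k+\max(\{a_k\}\cup\{b_i:i\in H(\vv')\})\bigr)$; note $\phi(\vv')\ge b_{\max(R'\cap H(\vv'))}+t_k(1-q_k)>b_{\max(R'\cap H(\vv'))}$, which encodes the dichotomy that a discounted item helps only on valuations where $v_k=b_k$. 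A short computation writes the increment $G(R'\cup\{j\})-G(R')$ (for $j\in T_k\setminus R'$) as $X_j+Y_j$, with $X_j=\sum_{\vv':v'_j=b_j,\,R'\cap H(\vv')=\emptyset}\Pr[\vv'](b_j-\phi(\vv'))$ and $Y_j=\sum_{\vv':v'_j=b_j,\,\emptyset\ne R'\cap H(\vv'),\,\max(R'\cap H(\vv'))<j}\Pr[\vv'](b_j-b_{\max(R'\cap H(\vv'))})\ge0$.

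If $\pp_R\in\opt$ then $G(R)\ge G(R\cup\{d\})$ and $G(R)\ge G(R\setminus\{c\})$, which (using $c=\max\{i\in R:i<d\}$, so that $\max(R\cap H(\vv'))\le c$ throughout the $Y$-sum for $d$) read $X_d+Y_d\le0$ and $X_c+Y_c\ge0$ respectively. To contradict these I would use the value-swap map $\vv'\mapsto\overline{\vv'}$ that exchanges the (high/low) values of items $c$ and $d$: it sends the valuations summed in $X_d$ — which all have $v'_c=a_c$, since $c\in R$ — bijectively onto valuations with $v'_c=b_c$, $v'_d=a_d$, multiplying $\Pr[\vv']$ by the fixed positive constant $\lambda=\tfrac{q_c(1-q_d)}{(1-q_c)q_d}$ and satisfying the pointwise inequality $b_d-\phi(\vv')>b_c-\phi(\overline{\vv'})$, and an analogous swap controls the $Y$-sums (after splitting $Y_d$ according to whether $v'_c=b_c$). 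Feeding these relations into the two inequalities above yields $0\ge\lambda(X_d+Y_d)>X_c+Y_c\ge0$, a contradiction, so $\pp=\pp_R\notin\opt$.

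The main obstacle is the pointwise inequality $b_d-\phi(\vv')>b_c-\phi(\overline{\vv'})$ together with making the bijections land in the correct index sets. For the inequality: the only change to $\phi$ under the swap is that $b_d$ (which enters $\max(\{a_k\}\cup\{b_i:i\in H(\vv')\})$ because $v'_d=b_d$) is replaced by $b_c$, so $\phi(\vv')-\phi(\overline{\vv'})=(1-q_k)\bigl(\max(M,b_d)-\max(M,b_c)\bigr)\le(1-q_k)(b_d-b_c)$ with $M=\max(\{a_k\}\cup\{b_i:i\in H(\vv')\setminus\{d\}\})$, whence $\bigl(b_d-\phi(\vv')\bigr)-\bigl(b_c-\phi(\overline{\vv'})\bigr)\ge q_k(b_d-b_c)>0$ because $q_k>0$ and $b_c<b_d$ (non-degeneracy). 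The delicate bookkeeping is to reconcile the asymmetry that the valuations in $X_d$ force $v'_c=a_c$ while the sets attached to $c$ do not force $v'_d=a_d$; handling this correctly is precisely why one compares $\pp_R$ to \emph{both} neighbours $\pp_{R\cup\{d\}}$ and $\pp_{R\setminus\{c\}}$ and chooses $c$ to be the top element of $R$ lying below $d$.
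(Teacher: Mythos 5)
Your setup is right and your overall strategy is the same as the paper's: the revenue formula $\Revv(\pp_{R'})=\mathrm{const}+G(R')$, the increment decomposition $G(R'\cup\{j\})-G(R')=X_j+Y_j$, and the idea of playing the two optimality inequalities $G(R)\ge G(R\cup\{d\})$ and $G(R)\ge G(R\setminus\{c\})$ against each other all check out (the paper's $\pp'$ and $\pp^*$ are exactly your $\pp_{R\cup\{d\}}$ and $\pp_{R\setminus\{c\}}$). But the engine of your contradiction --- the \emph{unconditional} inequality $\lambda(X_d+Y_d)>X_c+Y_c$ with $\lambda=\frac{q_c(1-q_d)}{(1-q_c)q_d}$ --- is false, and no swap bijection can deliver it. The swap does handle the part of $X_c$ coming from valuations with $v'_d=a_d$, but the residual contributions from valuations where \emph{both} $c$ and $d$ are high (the leftover piece of $X_c$, plus parts of $Y_c$ and $Y_d$) carry weight proportional to $q_cq_d$ on the right-hand side and $\lambda q_cq_d=\frac{q_c^2(1-q_d)}{1-q_c}$ on the left-hand side; these weights are incommensurate and the corresponding terms (which involve $\phi$, hence quantities as large as $b_d$) have uncontrolled sign. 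Concretely, with $n=3$, $k=1$, $R=\{2\}$, $c=2$, $d=3$, one computes $\lambda(X_d+Y_d)-(X_c+Y_c)=q_2(1-q_3)q_1(b_3-b_2)+\frac{q_2^2(1-q_3)}{1-q_2}(b_3-b_2)+q_2q_3(b_3-b_2)-q_2q_3\bigl(q_1(b_3-b_1)-(1-q_1)t_1\bigr)$, which is negative for, say, $q_1=q_3=0.99$, $q_2=0.01$, $b_1=1,b_2=10,b_3=11$, $t_1=0.1$. (In that regime $X_d+Y_d>0$, so the lemma's conclusion still holds via the other neighbour --- but that is exactly the point: the implication you need is genuinely conditional, not a linear inequality between the two increments.)

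The missing idea is that the hypothesis $G(R)\ge G(R\cup\{d\})$ must be \emph{used} to bound the "both high'' terms, not just combined at the end. The paper's proof conditions on the values of items $c,d$ and extracts from $\Revv(\pp)\ge\Revv(\pp_{R\cup\{d\}})$ a lower bound on the conditional revenue given ($v_c$ low, $v_d$ high), namely $\sum_\uu\Pr[\uu]\Revv(\uu_1^+,\pp)>\sum_\uu\Pr[\uu](b_d-t_k)$; this single bound is then fed into \emph{both} the $(b_c,a_d)$-weighted and the $(b_c,b_d)$-weighted pieces of $\Revv(\pp_{R\setminus\{c\}})-\Revv(\pp)$ (via $\Revv(\uu_2^+,\pp^*)\ge\Revv(\uu_1^+,\pp^*)-(b_d-b_c)$ and $\Revv(\uu_3^+,\pp^*)=\Revv(\uu_1^+,\pp)$) to make each strictly positive. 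You would need to restructure your argument along these lines. A smaller but telling error: your parenthetical claim $\phi(\vv')\ge b_{\max(R'\cap H(\vv'))}+t_k(1-q_k)$ is wrong (take $H(\vv')=\{M^*\}$: then $\phi(\vv')=b_{M^*}+(1-q_k)t_k-q_k(b_{M^*}-b_k)$, and $b_{M^*}>b_k$); if it were true, every term of $G(R')$ would be negative and discounting would never help, which contradicts the fact that the optimal vector is generally not the all-full-price one.
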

\begin{proof}
We use $t$ to denote $t_k$ for convenience.
Also we may assume, without loss of generality, that 
  there is no index between $c$ and $d$ in $T_k$; otherwise
  we can use it to replace either $c$ or $d$, depending on its price.

We define two vectors from $\pp$.
First, let $\pp'$ denote the vector obtained from $\pp$ by replacing $p_d=b_d$
  by $p_d'=b_d-t$.
Let $\pp^*$ denote the vector obtained from $\pp$ by replacing $p_c=b_c-t$
  by $p_c^*=b_c$.
In other words, the $c$th and $d$th entries of $\pp,\pp',\pp^*$ are
  $(b_c-t,b_d),(b_c-t,b_d-t),(b_c,b_d)$, respectively, while all other $n-2$ 
    entries are the same.
Our plan is to show that if $\Revv(\pp)\ge \Revv(\pp')$, then $\Revv(\pp^*)>\Revv(\pp)$.
This implies that $\pp$ cannot be optimal and the lemma follows.

We need some notation. 
Let $V'$ denote the projection of $V$ onto 
  all but the $c$th and $d$th coordinates:
$$V'=\times_{i\in [n]-\{c,d\}} V_i.$$ 
We use $[n]-\{c,d\}$ to index entries of vectors $\uu$ in $V'$.
Let $U\subseteq V'$ denote the set of vectors $\uu\in V'$ such that
  $u_i-p_i<t$ for all $i>d$.
(This just means that for each $i\in T_k$, if $i>d$ and $p_i=b_i-t$, then $u_i=a_i$.)
Given $\uu\in V'$, $v_c\in \{a_c,b_c\}$
  and $v_d\in \{a_d,b_d\}$, we use $(\uu,v_c,v_d)$ to denote
  a $n$-dimensional price vector in $V$.
Now we compare the expected revenue $\Revv(\pp)$, $\Revv(\pp')$ and $\Revv(\pp^*)$. 

First, we claim that,
  if $\vv=(\uu,v_c,v_d)\in V$ but $\uu\notin U$, then we have
  $\Revv(\vv,\pp)=\Revv(\vv,\pp')=\Revv(\vv,\pp^*).$
This is simply because there exists an item $i>d$
  such that $v_i-p_i=t$, so it always
  dominates both items $c$ and $d$. As a result,
  the difference among $\pp,\pp'$ and $\pp^*$ no longer matters.
Second, it is easy to show that for any $\vv=(\uu,a_c,a_d)\in V$, then 
  $\Revv(\vv,\pp)=\Revv(\vv,\pp')=\Revv(\vv,\pp^*)$
  as the utility from $c$ and $d$ are negative.
  
Now we consider a vector $\vv=(\uu,v_c,v_d)\in V$ such that
  $\uu\in U$ and $(v_c,v_d)$ is either $(a_c,b_d)$, $(b_c,a_d)$, or $(b_c,b_d)$.
For convenience, for each $\uu\in U$ we use $\uu_1^+$ to denote $(\uu,a_c,b_d)$;
  $\uu_2^+$ to denote $(\uu,b_c,a_d)$; and $\uu_3^+$ to denote $(\uu,b_c,b_d)$.
By the definition of $U$, we have the following simple cases: 
\begin{enumerate}
\item For $\pp$, we have $\Revv(\uu_2^+,\pp)=b_c-t$ and $\Revv(\uu_3^+,\pp)=b_c-t$;\vspace{-0.14cm}
\item For $\pp'$, we have $\Revv(\uu_1^+,\pp')=b_d-t$, $\Revv(\uu_2^+,\pp')=b_c-t$ and
  $\Revv(\uu_3^+,\pp')=b_d-t$.
\end{enumerate}
We need the following equation:
\begin{equation}\label{iii}
\Revv(\uu_1^+,\pp)=\Revv(\uu_1^+,\pp^*)=\Revv(\uu_3^+,\pp^*)
\end{equation}
as well as the following two inequalities:
\begin{equation}\label{imp2}
\Revv(\uu_1^+,\pp^*)-(b_d-b_c)\le  \Revv(\uu_2^+,\pp^*)\le \Revv(\uu_1^+,\pp^*)
\end{equation}

Given a $\vv\in V$, recall that $\Pr[\vv]$ denotes the probability 
  of the valuation vector being $\vv$.
Given a $\uu\in U$, we also use $\Pr[\uu]$ to denote the probability of 
  the $n-2$ items, except items $c$ and $d$, taking values $\uu$.
Let $$h_1=(1-q_c)q_d,\ \ \ \ h_2=q_c(1-q_d)\ \ \ \ \text{and}\ \ \ \ h_3=q_cq_d.$$
Clearly we have 
  $h_1,h_2,h_3>0$ and 
$\Pr[\uu_i^+]=\Pr[\uu]\cdot h_i$, {for all $\uu\in U$ and $i\in [3]$.}
 
In order to compare $\Revv(\pp)$, $\Revv(\pp')$ and $\Revv(\pp^*)$, we only need to compare
  the following three sums:
$$
\sum_{i\in [3]}\sum_{\uu\in U}\Pr[\uu_i^+]\cdot \Revv(\uu_i^+,\pp),\ \ \ \ 
\sum_{i\in [3]}\sum_{\uu\in U}\Pr[\uu_i^+]\cdot \Revv(\uu_i^+,\pp')\ \ \ \ \text{and}\ \ \ \
\sum_{i\in [3]}\sum_{\uu\in U}\Pr[\uu_i^+]\cdot \Revv(\uu_i^+,\pp^*).
$$
For the first sum, we can rewrite it as (here all sums are over $\uu\in U$):
\begin{equation}\label{sum1}
h_1\cdot \sum_{\uu} \Pr[\uu] \cdot \Revv(\uu_1^+,\pp)+h_2\cdot \sum_{\uu}\Pr[\uu] \cdot (b_c-t)
  +h_3\cdot \sum_\uu \Pr[\uu] \cdot (b_c-t),
\end{equation}
while the sum for $\Revv(\pp')$ is the following:
\begin{equation}\label{sum2}
h_1\cdot\sum_{\uu} \Pr[\uu] \cdot (b_d-t)+h_2\cdot \sum_\uu\Pr[\uu] \cdot (b_c-t)
  +h_3\cdot \sum_\uu \Pr[\uu] \cdot (b_d-t).
\end{equation}
Since $c<d$ and $b_c<b_d$, $\Revv(\pp)\ge \Revv(\pp')$ would imply that
\begin{equation}\label{imp}
\sum_\uu \Pr[\uu]\cdot \Revv(\uu_1^+,\pp)>\sum_\uu \Pr[\uu]\cdot (b_d-t).
\end{equation}

On the other hand, we can also rewrite the sum for $\Revv(\pp^*)$ as
\begin{equation}\label{sum3}
h_1\cdot\sum_{\uu} \Pr[\uu] \cdot \Revv(\uu_1^+,\pp^*)+h_2\cdot\sum_\uu\Pr[\uu] \cdot \Revv(\uu_2^+,\pp^*)
  +h_3\cdot \sum_\uu \Pr[\uu] \cdot \Revv(\uu_3^+,\pp^*).
\end{equation}
The first sum in (\ref{sum3}) is the same as that of (\ref{sum1}). For the second sum, 
  from (\ref{imp2}), (\ref{iii}) and (\ref{imp}) we have
\begin{align*}
\sum_\uu\Pr[\uu]\cdot \Revv(\uu_2^+,\pp^*)
\hspace{0.04cm}&\ge\hspace{0.04cm} \sum_\uu\Pr[\uu]\cdot 
  \Big(\Revv(\uu_1^+,\pp)-(b_d-b_c)\Big)\\[0.8ex] &>\hspace{0.04cm} \sum_\uu \Pr[\uu]\cdot 
  \Big(b_d-t-(b_d-b_c)\Big)=\sum_\uu \Pr[\uu]\cdot (b_c-t).
\end{align*} 
The third sum in (\ref{sum3}) is also strictly
  larger than that of (\ref{sum1}) as
$\Revv(\uu_3^+,\pp^*)=\Revv(\uu_1^+,\pp^*)\ge \Revv(\uu_2^+,\pp^*)$
while the second and third sums in (\ref{sum1}) are the same, ignoring $h_2$ and $h_3$.
Thus, $\Revv(\pp^*)>\Revv(\pp)$. 
\end{proof}

\subsection{General Case} \label{ssec:general-case}

Now we deal with the general case.
Let $I$ denote an input instance with $n$ items, in which $|V_i|\le 2$ for all $i$.
For each $i\in [n]$, either $V_i=\{a_i,b_i\}$ where $b_i>a_i\ge 0$,
  or $V_i=\{b_i\}$, where $b_i\ge 0$.
We let $D\subseteq [n]$ denote the set of $i\in [n]$ such that $|V_i|=2$.
For each item $i\in D$, we use $q_i:0<q_i<1$ to
  denote the probability of its value being $b_i$. 
Each item $i\notin D$ has value $b_i$ with probability $1$.
As permuting the items does not affect the maximum
  expected revenue, we may assume without loss of generality that
  $b_1\le b_2\le \cdots\le b_n$.

The idea is to perturb $I$ (symbolically), so that the new instances 
  satisfy all conditions described at the beginning of the section,
  which we know how to solve efficiently.
For this purpose, we define a new $n$-item 
  instance $I_\epsilon$ from $I$ for any $\epsilon>0$: 
For each $i\in D$, the support of
  item $i$ is $V_{i,\epsilon}=\{a_i+i\epsilon,b_i+2i\epsilon\}$, and
  for each $i\notin D$, the support of item $i$ is $V_{i,\epsilon}=\{b_i+i\epsilon,b_i+2i\epsilon\}$.
For each $i\in D$, the probability of the value being $b_i+2i\epsilon$
  is still set to be $q_i$, while for each $i\notin D$, the probability of the value
  being $b_i+2i\epsilon$ is set to be $1/2$.  
In the rest of the section, we use $\Revv(\pp)$ and $\Revv(\vv,\pp)$
  to denote the revenue with respect to $I$, and use $\Revv_\epsilon(\pp)$ and $\Revv_\epsilon(\vv,\pp)$
  to denote the revenue with respect to $I_\epsilon$.
Let $V_\epsilon=\times_{i=1}^n V_{i,\epsilon}$.
Let $\rho$ denote the following map from $V_{\epsilon}$ to $V$:
  $\rho$ maps $\uu\in V_\epsilon$ to $\vv\in V$, where
  1) $v_i=b_i$ when $i\notin D$; 2) $v_i=a_i$ if $u_i=a_i+i\epsilon$ 
  and $v_i=b_i$ if $u_i=b_i+2i\epsilon$ when $i\in D$.

It is easy to verify that, when $\epsilon>0$ is sufficiently small,
  the new instance $I_\epsilon$ satisfies all conditions given at the beginning 
  of the section, including the non-degeneracy assumption.
Moreover, we show that

\begin{lemma}\label{haha}
The limit of $\max_{\pp} \Revv_\epsilon(\pp)$ exists as $\epsilon\rightarrow0$,
  and can be computed in polynomial time.
\end{lemma}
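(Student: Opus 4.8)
The plan is to combine the special-case algorithm of Section~\ref{ssec:special-case} with a symbolic treatment of the perturbation parameter $\epsilon$. First I would record that there is a threshold $\epsilon_0>0$ such that for all $0<\epsilon<\epsilon_0$ the instance $I_\epsilon$ satisfies every hypothesis of Section~\ref{ssec:special-case} (all support values positive, the two values of each item strictly separated, the probabilities strictly between $0$ and $1$ or equal to $1/2$, and the non-degeneracy assumption), as already noted before the statement. Hence the chain of lemmas of Section~\ref{ssec:special-case} applies to $I_\epsilon$ and produces a set $A_\epsilon\subseteq\bigcup_k P_k^*$ of $O(n^2)$ price vectors that contains all optimal ones, so that $\max_\pp\Revv_\epsilon(\pp)=\max_{\pp\in A_\epsilon}\Revv_\epsilon(\pp)$. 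The point is that $A_\epsilon$ has a combinatorial description that does \emph{not} depend on $\epsilon$: each candidate is indexed either by $\bb$ or by a pair consisting of $k\in[n]$ and a monotone threshold index in $T_k$, and, substituting the perturbed values $a_i+i\epsilon$, $b_i+2i\epsilon$ (resp.\ $b_i+i\epsilon$, $b_i+2i\epsilon$ for $i\notin D$) into the definitions of $P_k,P_k',P_k^*$, the coordinates of the corresponding candidate $\pp^{(c)}(\epsilon)$ are \emph{affine} functions of $\epsilon$; moreover the constant term of each coordinate is one of $a_i$, $b_i$, or $b_i-(b_k-a_k)$ for items $i,k$ of the original instance $I$.

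Next I would show that for each fixed candidate type $c$ the function $\epsilon\mapsto\Revv_\epsilon(\pp^{(c)}(\epsilon))$ is affine on a right-neighbourhood of $0$. Each combinatorial type of valuation corresponds to a vector $\uu(\epsilon)\in V_\epsilon$ whose coordinates are themselves affine in $\epsilon$, so the utilities $u_i(\epsilon)-p^{(c)}_i(\epsilon)$ are affine in $\epsilon$; therefore every comparison that determines which item the buyer buys under $(\uu(\epsilon),\pp^{(c)}(\epsilon))$, including ties broken by the maximum-price rule, is a comparison of two affine functions of $\epsilon$, and such a comparison resolves the same way for all sufficiently small $\epsilon>0$ (compare constant terms, then the coefficients of $\epsilon$). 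Consequently the item purchased for each valuation type is constant for small $\epsilon$, and since the probabilities of $I_\epsilon$ are the fixed numbers $q_i$ and $1/2$ (independent of $\epsilon$), we obtain $\Revv_\epsilon(\pp^{(c)}(\epsilon))=\sum_{i}\gamma^{(c)}_i\,p^{(c)}_i(\epsilon)$ with fixed coefficients $\gamma^{(c)}_i\ge 0$ — an affine function of $\epsilon$ on $(0,\epsilon_c)$ for some $\epsilon_c>0$. Shrinking $\epsilon_0$ so that it is below $\epsilon_c$ for all the finitely many candidate types, on $(0,\epsilon_0)$ we have $\max_\pp\Revv_\epsilon(\pp)=\max_c\Revv_\epsilon(\pp^{(c)}(\epsilon))$, a maximum of finitely many affine functions; each extends continuously (indeed affinely) to $\epsilon=0$, so this maximum is continuous at $0^+$ and the limit exists, equal to $\max_c\sum_i\gamma^{(c)}_i\,p^{(c)}_i(0)$ where $p^{(c)}_i(0)$ is the (explicitly known) constant term of the $i$-th coordinate of $\pp^{(c)}$.

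For the computability claim, the limiting prices $p^{(c)}_i(0)$ are read directly off the description of the candidate $c$, and there are only $O(n^2)$ candidates. For a fixed $c$, I would compute each coefficient $\gamma^{(c)}_i$ exactly as in the proof of Theorem~\ref{theo:membership}: $\gamma^{(c)}_i=\sum_{s_i}\Pr[s_i]\cdot\prod_{j\ne i}\Pr[L_j]$, where $s_i$ ranges over the (perturbed) support of item $i$ and $L_j$ is the set of values of item $j$ for which the buyer prefers $i$ to $j$; membership of a value in $L_j$ is again a comparison of two affine functions of $\epsilon$, resolved lexicographically, so all the $L_j$ and hence $\gamma^{(c)}_i$ are computed in polynomial time. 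Returning $\max_c\sum_i\gamma^{(c)}_i\,p^{(c)}_i(0)$ then gives the limit, and the whole computation is polynomial.

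I expect the main obstacle to be the careful bookkeeping around the symbolic $\epsilon$: one must check that \emph{every} comparison arising both inside the Section~\ref{ssec:special-case} algorithm applied to $I_\epsilon$ and inside the outcome computation is between affine functions of $\epsilon$ with rational coefficients, so that ``sufficiently small $\epsilon$'' is meaningful uniformly over a single interval $(0,\epsilon_0)$, and that the candidate set and the per-valuation outcomes really do stabilize there. Once this framework is in place, the affineness of each $\Revv_\epsilon(\pp^{(c)}(\epsilon))$, the existence of the limit as the constant term of a maximum of finitely many affine functions, and its polynomial-time computability are all routine.
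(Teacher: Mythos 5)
Your proposal is correct and follows essentially the same route as the paper: the paper's proof also observes that the $O(n^2)$ candidate vectors in $A_\epsilon$ have coordinates that are affine functions of $\epsilon$, so the limit of each $\Revv_\epsilon(\pp_\epsilon)$ exists and is efficiently computable, and the overall limit is the maximum of these finitely many limits. Your write-up merely fills in details the paper leaves implicit (stabilization of the combinatorial structure and of the buyer's choices for small $\epsilon$, and hence affineness of each candidate's revenue), which is a welcome elaboration rather than a different argument.
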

\begin{proof}
Since $I_\epsilon$ satisfies all the conditions, we know there is a 
  set of $O(n^2)$ price vectors, denote by $A_\epsilon$ for $I_\epsilon$,
  such that the best vector in $A_\epsilon$ is optimal for $I_\epsilon$
  and achieves $\max_\pp \Revv_\epsilon(\pp)$.

Furthermore, from the construction of $A_\epsilon$, we know that
  every vector $\pp_\epsilon$ in $A_\epsilon$ has an explicit expression
  in $\epsilon$: each entry of $\pp_\epsilon$ is indeed an affine linear 
  function of $\epsilon$. 
As a result, the limit of $\Revv_\epsilon(\pp_\epsilon)$ as
  $\epsilon$ approaches $0$ exists and can be computed efficiently.
Since $\lim_{\epsilon\rightarrow 0} \left(\max_\pp \Revv_\epsilon(\pp)\right)$
  is just the maximum of these $O(n^2)$ limits, it also exists and can
  be computed in polynomial time in the input size of $I$.
\end{proof}

Finally, the next two lemmas show that 
  this limit is exactly the maximum expected revenue of $I$.

\begin{lemma}
$\max_\pp \Revv(\pp)\le \lim_{\epsilon\rightarrow 0} 
  \big(\max_\pp \Revv_\epsilon(\pp)\big)$.
\end{lemma}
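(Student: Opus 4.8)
The plan is to show that for every price vector $\pp \in P$ for the original instance $I$, there exists a corresponding price vector for $I_\epsilon$ whose revenue approaches $\Revv(\pp)$ as $\epsilon \to 0$; taking $\pp$ to be an optimal vector for $I$ then gives the claimed inequality. The natural candidate is to \emph{shift prices up slightly} to match the upward shift of the supports. Concretely, given $\pp \in P$ with $p_i \in [a_i, b_i]$ (or $p_i = b_i$ if $i \notin D$), I would define $\pp_\epsilon$ by $p_{i,\epsilon} = p_i + i\epsilon$ if the value realized ``should'' be the low one and $p_i + 2i\epsilon$ if it should be the high one --- but since the price must be a single number, the cleaner choice is to set $p_{i,\epsilon} = p_i + i\epsilon$ for all $i$ (matching the shift of the \emph{low} support value $a_i + i\epsilon$). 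I would then argue that for each $\uu \in V_\epsilon$ with image $\vv = \rho(\uu) \in V$, the buyer's behavior on $(\uu, \pp_\epsilon)$ in instance $I_\epsilon$ ``tracks'' the behavior on $(\vv, \pp)$ in instance $I$, up to $O(\epsilon)$ perturbations in the utilities.

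The key steps, in order: (1) Fix an optimal $\pp^* \in P$ for $I$ with $\Revv(\pp^*) = \max_\pp \Revv(\pp)$, which exists by Lemma~\ref{supremum}. (2) Define $\pp_\epsilon$ as above (shifting each coordinate by $+i\epsilon$, and capping at $b_i + 2i\epsilon$ if necessary, though for small $\epsilon$ no capping is needed when $p^*_i \le b_i$). (3) Establish a correspondence between valuations: for $\uu \in V_\epsilon$ and $\vv = \rho(\uu)$, note that the utility of item $i$ under $(\uu, \pp_\epsilon)$ is $u_i - p_{i,\epsilon}$, which equals $v_i - p^*_i$ when $i \in D$ and $u_i = a_i + i\epsilon$ (the low value), and equals $v_i - p^*_i + i\epsilon$ when $u_i = b_i + 2i\epsilon$, and similarly for $i \notin D$. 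So all utilities agree with those under $(\vv, \pp^*)$ up to additive terms of size $O(\epsilon)$. (4) Argue that for $\epsilon$ small enough, if the buyer strictly prefers item $k$ under $(\vv, \pp^*)$ (strict gap in utility), then she still picks $k$ under $(\uu, \pp_\epsilon)$, and the revenue $p_{k,\epsilon} = p^*_k + k\epsilon \ge p^*_k$. The only delicate case is ties in $(\vv, \pp^*)$: here the $O(\epsilon)$ perturbations may break ties, but I would show the buyer still picks \emph{some} item whose $I_\epsilon$-price is at least $p^*_k - O(\epsilon)$, where $p^*_k$ is what the maximum-price tie-breaking rule yields in $I$ --- this uses that the perturbations are bounded and that we only need a lower bound up to $O(\epsilon)$. (5) Sum over valuations weighted by probabilities (which for $i \notin D$ are $1/2$ in $I_\epsilon$ versus $1$ in $I$, but since $\rho$ maps both low and high realizations of such $i$ to $b_i$ and the price $p^*_i = b_i$ forces identical behavior, the split is harmless): conclude $\Revv_\epsilon(\pp_\epsilon) \ge \Revv(\pp^*) - O(\epsilon)$, hence $\max_\pp \Revv_\epsilon(\pp) \ge \Revv(\pp^*) - O(\epsilon)$, and take $\epsilon \to 0$ using that the limit exists by Lemma~\ref{haha}.

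The main obstacle I anticipate is step (4): carefully handling the tie-breaking interaction. In $I$ the maximum-price rule may award a high price $p^*_k$ on a tie, but in $I_\epsilon$ the symbolic perturbations $i\epsilon$ (which differ across items) may resolve that tie in favor of a different item with a slightly different price. I need to check that the item selected in $I_\epsilon$ still has price within $O(\epsilon)$ of $p^*_k$ --- intuitively true because all tied items in $I$ had $v_i - p^*_i$ equal, so their $I_\epsilon$-utilities differ by $O(\epsilon)$, and the winner's price can differ from $p^*_k$ only because the prices of tied items differed in $I$; but under the maximum-price rule $p^*_k$ was the \emph{largest} among tied items, so any other tied item has price $\le p^*_k$, which would seem to go the wrong way. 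The resolution is that we are proving a $\le$ inequality for $\max_\pp \Revv(\pp)$, so a loss of a lower-price item hurts; to avoid this I would instead perturb \emph{downward} in the spirit of Lemma~\ref{tie-breaking-ha}, i.e., choose the shifts so that within each original tie-class the item with the largest $p^*_i$ gets the largest utility boost --- this is exactly analogous to the $\pp_\epsilon$ construction in the proof of Lemma~\ref{tie-breaking-ha} and ensures the maximum-price winner of $I$ remains the winner in $I_\epsilon$, giving revenue $\ge p^*_k - O(\epsilon)$. Getting these two perturbations (the fixed structural shift $i\epsilon$ and the tie-breaking shift) to coexist is the technical heart; everything else is routine limit-taking.
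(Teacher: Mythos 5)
Your final resolution --- discarding the upward shift and instead perturbing prices \emph{downward} by rank-dependent amounts large enough to dominate the $O(n\epsilon)$ shifts in the support values, exactly in the spirit of Lemma~\ref{tie-breaking-ha} --- is precisely what the paper does: it sets $p_i'=\max(0,\, p_i^*-4r_in\epsilon)$ with $r_i$ the rank of $p_i^*$, so the maximum-price winner $k$ under $(\rho(\uu),\pp^*)$ strictly wins under $(\uu,\pp')$ and yields revenue at least $p_k^*-4n^2\epsilon$. No separate upward structural shift is needed, so the ``coexistence'' issue you flag does not arise, and the remaining steps (summing over $\uu\in V_\epsilon$, noting the pushforward under $\rho$ recovers the distribution of $I$, and letting $\epsilon\to 0$) are the routine ones you describe.
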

\begin{proof}
Let $\pp^*$ denote an optimal price vector of $I$.
It suffices to show that, when $\epsilon$ is sufficiently small,
\begin{eqnarray}\label{oo}
&\max_\pp\Revv_\epsilon(\pp)\ge \Revv(\pp^*)-4n^2\epsilon.&
\end{eqnarray}

The proof is similar to that of Lemma \ref{tie-breaking-ha}.
Let $\pp'$ denote the vector in which 
  $p_i'=\max\left(0,p^*_i-4r_in\epsilon\right)$, where $r_i$ is the rank of $p_i^*$
  among $\{p_1^*,\ldots,p_n^*\}$ sorted in the increasing order (when there are ties, items with
  lower index are ranked higher).  
We claim that, when $\epsilon>0$ is sufficiently small,
\begin{equation}\label{goal1}
\Revv_\epsilon(\uu,\pp')\ge \Revv(\rho(\uu),\pp^*)-4n^2\epsilon,
  \ \ \ \text{for any $\uu\in V_\epsilon$,}
\end{equation}
from which we get $\Revv_\epsilon(\pp')\ge \Revv(\pp^*)-4n^2\epsilon$
  and (\ref{oo}) follows.

To prove (\ref{goal1}) we fix a $\uu\in V_\epsilon$ and let $\vv=\rho(\uu)\in V$.
(\ref{goal1}) holds trivially if $\Revv(\vv,\pp^*)=0$.
Assume that $\Revv(\vv,\pp^*)>0$, and let $k$ denote the item selected in $I$ on $(\vv,\pp^*)$.
(\ref{goal1}) also holds trivially if $p_k^*<4n^2\epsilon$, so without loss 
  of generality, we assume that
  $p_k\ge 4n^2\epsilon$.
For any other item $j\in [n]$, we compare the utilities of 
  items $k$ and $j$ in $I_\epsilon$ on $(\uu,\pp')$.
We claim that
\begin{equation}\label{goal2}
u_k-p_k'>u_j-p_j'
\end{equation}
because 1) if $v_k-p_k^*>v_j-p_j^*$, then (\ref{goal2}) holds when $\epsilon$
  is sufficiently small;
  2) if $v_k-p^*_k=v_j-p^*_j$ and $p^*_k>p^*_j$, then (\ref{goal2}) holds because
  $p^*_k-p_k'-(p^*_j-p_j')\ge 4n\epsilon >(v_k-u_k)+(u_j-v_j)$;
  3) finally, the case when $v_k-p^*_k=v_j-p^*_j$, $p_k=p_j$ and $k<j$ follows similarly
  from $r_k>r_j$.
Therefore, $k$ remains to be the item being selected in $I_\epsilon$ on $(\uu,\pp')$.
(\ref{goal1}) then follows from the fact that $p_k'\ge p_k^*-4n^2\epsilon$ by definition.
\end{proof}

\begin{lemma}
$\max_\pp \Revv(\pp)\ge \lim_{\epsilon\rightarrow 0} 
  \big(\max_\pp \Revv_\epsilon(\pp)\big)$.
\end{lemma}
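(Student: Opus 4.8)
The plan is to mirror the argument of the previous two lemmas, but running the perturbation backwards: I take price vectors that are optimal for the perturbed instances $I_\epsilon$, push them to the limit as $\epsilon\to 0^+$, and argue the limiting price vector does at least as well in $I$. Concretely, recall from the proof of Lemma~\ref{haha} that $\max_\pp \Revv_\epsilon(\pp)$ is attained at some vector of $A_\epsilon$, a set of $O(n^2)$ price vectors whose coordinates are affine functions of $\epsilon$, and that $\Revv_\epsilon(\pp_\epsilon)$ has a limit as $\epsilon\to 0^+$ along each such affine family. Since the pointwise maximum of finitely many functions that each converge at $0^+$ converges to the maximum of the limits, I can fix a single affine family $\{\pp^*_\epsilon\}_{\epsilon>0}$ with $\lim_{\epsilon\to 0^+}\Revv_\epsilon(\pp^*_\epsilon)=\lim_{\epsilon\to 0^+}\big(\max_\pp\Revv_\epsilon(\pp)\big)$, and set $\pp^*=\lim_{\epsilon\to 0^+}\pp^*_\epsilon$; the limit exists because the coordinates are affine in $\epsilon$, and $\pp^*\in\mathbb{R}_+^n$ since each $(\pp^*_\epsilon)_i\ge 0$. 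It then suffices to prove $\Revv(\pp^*)\ge\lim_{\epsilon\to 0^+}\Revv_\epsilon(\pp^*_\epsilon)$.

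The heart of the argument — and the step I expect to be the main obstacle — is a per-valuation comparison: for each $\uu\in V_\epsilon$, writing $\vv=\rho(\uu)\in V$, I would show $\lim_{\epsilon\to 0^+}\Revv_\epsilon(\uu,\pp^*_\epsilon)\le \Revv(\vv,\pp^*)$. Since the coordinates of both $\uu$ and $\pp^*_\epsilon$ are affine in $\epsilon$, every comparison of utilities and of prices that determines the buyer's choice in $I_\epsilon$ has eventually-constant sign; hence $\best(\uu,\pp^*_\epsilon)$ equals a fixed item $k=k(\uu)$ for all sufficiently small $\epsilon$, and $\lim_{\epsilon\to 0^+}\Revv_\epsilon(\uu,\pp^*_\epsilon)$ equals $p^*_k$ if $k\in[n]$ and $0$ if $k=\nil$. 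The case $k=\nil$ is trivial since revenue is nonnegative. When $k\in[n]$: from $u_k\to v_k$, $(\pp^*_\epsilon)_k\to p^*_k$ and the nonnegativity of item $k$'s utility in $I_\epsilon$ we get $v_k-p^*_k\ge 0$, so the buyer does select some item $m=\best(\vv,\pp^*)\in[n]$ in $I$. If we had $v_m-p^*_m>v_k-p^*_k$, then for small $\epsilon$ item $m$ would strictly beat item $k$ in utility in $I_\epsilon$ as well (again by convergence of these affine quantities), contradicting that $k$ was selected there; hence $v_m-p^*_m=v_k-p^*_k$, so $k$ attains the maximum utility in $I$ under $(\vv,\pp^*)$, and the maximum-price tie-breaking rule forces $p^*_m\ge p^*_k$. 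Thus $\Revv(\vv,\pp^*)=p^*_m\ge p^*_k=\lim_{\epsilon\to 0^+}\Revv_\epsilon(\uu,\pp^*_\epsilon)$, as claimed.

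Finally I would assemble these pointwise bounds. The probabilities in $I_\epsilon$ do not depend on $\epsilon$, and since $\rho$ merely collapses the two (probability-$1/2$) values of each item $i\notin D$ while preserving all other probabilities, $\sum_{\uu:\rho(\uu)=\vv}\Pr_\epsilon[\uu]=\Pr[\vv]$ for every $\vv\in V$. Writing $\Revv_\epsilon(\pp^*_\epsilon)=\sum_{\uu\in V_\epsilon}\Pr_\epsilon[\uu]\,\Revv_\epsilon(\uu,\pp^*_\epsilon)$ and passing to the limit termwise (a finite sum), the per-valuation bound yields $\lim_{\epsilon\to 0^+}\Revv_\epsilon(\pp^*_\epsilon)\le\sum_{\uu\in V_\epsilon}\Pr_\epsilon[\uu]\,\Revv(\rho(\uu),\pp^*)=\sum_{\vv\in V}\Pr[\vv]\,\Revv(\vv,\pp^*)=\Revv(\pp^*)\le\max_\pp\Revv(\pp)$, which is exactly the desired inequality. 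The only routine points left are the eventual sign-constancy of finitely many affine functions and the bookkeeping of the probability collapse under $\rho$; everything substantive sits in the tie-breaking comparison of the middle paragraph.
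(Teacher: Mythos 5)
Your proof is correct, and it diverges from the paper's argument at the key step. Both begin the same way: pick an affine family $\pp_\epsilon\in A_\epsilon$ achieving the limit and let $\widetilde{\pp}$ (your $\pp^*$) be its $\epsilon\to 0$ limit. The paper, however, does \emph{not} evaluate the original instance at $\widetilde{\pp}$ directly; it constructs a rank-dependent downward perturbation $\qq_\epsilon$ of $\widetilde{\pp}$ (lowering the $i$th price by $r_i n^2\epsilon$ according to the rank $r_i$ of $\widetilde{p}_i$) and proves the quantitative per-valuation bound $\Revv(\vv,\qq_\epsilon)\ge \Revv_\epsilon(\uu,\pp_\epsilon)-O(n^3\epsilon)$ for each fixed small $\epsilon$; the role of the perturbation is to turn potential ties at $\widetilde{\pp}$ into strict utility comparisons that automatically favor the higher-priced item. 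You instead work with $\pp^*=\widetilde{\pp}$ itself, note that the item $k$ selected in $I_\epsilon$ is eventually constant (all relevant comparisons are signs of affine functions of $\epsilon$), show that $k\in\topp(\vv,\pp^*)$ in the limit, and then invoke the maximum-price tie-breaking rule to get $\Revv(\vv,\pp^*)\ge p^*_k$. In effect you are reusing the upper semicontinuity of $\Revv(\vv,\cdot)$ under this tie-breaking rule --- the same mechanism as in Lemma~\ref{supremum} and its second proof --- which lets you dispense with the $\qq_\epsilon$ construction; the paper's version instead yields an explicit $O(n^3\epsilon)$ revenue guarantee for a concrete price vector at each fixed $\epsilon$, which is more than the lemma requires. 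Your treatment of the probability collapse under $\rho$ and of the finite termwise passage to the limit is also correct.
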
  
\begin{proof}
From the proof of Lemma \ref{haha}, there is a 
  price vector $\pp_\epsilon\in A_\epsilon$ in which every entry
  is an affine linear function of $\epsilon$, such that (as the cardinality of
  $|A_\epsilon|$ is bounded from above by $O(n^2)$)
$$
\lim_{\epsilon\rightarrow 0} \left(\max_\pp \Revv_\epsilon(\pp)\right)
= \lim_{\epsilon\rightarrow 0} \Revv_\epsilon(\pp_\epsilon).
$$
Let $\widetilde{\pp}\in \mathbb{R}_+^n$ denote the limit of $\pp_\epsilon$, by simply removing
  all the $\epsilon$'s in the affine linear functions.
Moreover, we note that $|\hspace{0.03cm}\widetilde{p}_i-p_{\epsilon,i}\hspace{0.03cm}|=O(n\epsilon)$
  by the construction of $A_\epsilon$, where we use $p_{\epsilon,i}$ to denote
  the $i$th entry of $\pp_\epsilon$.
  
Next, let $\qq_\epsilon$ denote the vector in which
  the $i$th entry $q_{\epsilon,i}= \max\hspace{0.05cm}(0,\widetilde{p}_i-r_in^2\epsilon)$ 
  for all $i\in [n]$, where $r_i$ is the rank of $\widetilde{p}_i$ among entries of $\widetilde{\pp}$
  sorted in increasing order (again, when there are ties, items with lower index are ranked higher).
To prove the lemma, it suffices to show that, when $\epsilon$ is sufficiently small,
$$
\Revv(\qq_\epsilon)\ge \Revv_\epsilon(\pp_\epsilon)-O(n^3\epsilon).
$$
To this end, we show that for any vector $\uu\in V_\epsilon$ with $\vv=\rho(\uu)$,
\begin{equation}\label{goal100}
\Revv(\vv,\qq_\epsilon)\ge \Revv_\epsilon(\uu, \pp_\epsilon)-O(n^3\epsilon).
\end{equation}

Finally we prove (\ref{goal100}).
First, we note that if $\Util(\vv,\widetilde{\pp})<0$, then
  $\Revv(\vv,\qq_\epsilon)=\Revv_\epsilon(\uu,\pp_\epsilon)=0$ when
  $\epsilon>0$ is sufficiently small (as $\uu$ approaches $\vv$ and $\pp_\epsilon$, $\qq_\epsilon$ 
  approach $\widetilde{\pp}$).
Otherwise, we have $\Util(\vv,\qq_\epsilon)
  >\Util(\vv,\widetilde{\pp})\ge 0$ and we use $k$ to denote the item 
  selected in $I$ on $(\vv,\qq_\epsilon)$.
To violate (\ref{goal100}), the item selected in $I_\epsilon$ on $(\uu,\pp_\epsilon)$
  must be an item $\ell$ different from $k$ satisfying $\widetilde{p}_\ell>\widetilde{p}_k$.
Below we show that this cannot happen.
Consider all the cases:
  1) if $v_k-\widetilde{p}_k<v_\ell-\widetilde{p}_\ell$, we get a contradiction since item $k$
    is dominated by $\ell$ in $I$ on $(\vv,\qq_\epsilon)$ when $\epsilon$ is sufficiently small;
  2) if $v_k-\widetilde{p}_k>v_\ell-\widetilde{p}_\ell$, we get a contradiction with $\ell$
    being selected in $I_\epsilon$ on $(\uu,\pp_\epsilon)$ when $\epsilon$ is sufficiently small;
  3) if $v_k-\widetilde{p}_k=v_\ell-\widetilde{p}_\ell$ and $\widetilde{p}_\ell>\widetilde{p}_k$, we conclude
    that $v_k-q_{\epsilon,k}<v_\ell-q_{\epsilon,\ell}$, contradicting again with
    $k$ being selected in $I$ on $(\vv,\qq_\epsilon)$.
(\ref{goal100}) follows by combining all these cases. 
\end{proof}








\section{NP--hardness for support size 3}\label{nphard-sec}

In this section, 
  we give a polynomial-time reduction from \textsc{Partition} 
  to \itemp for distributions with support (at most) $3$.
Recall that in the \textsc{Partition} problem~\cite{GJ79} we are given a set $C = \{c_1, \ldots, c_n\}$ of $n$ positive integers  and wish
to determine whether it is possible to partition $C$ into two subsets with equal sum.
We may assume without loss of generality that 
$c_1 = \max\left(c_1,\ldots, c_n\right)$.

Given an instance  
  of \textsc{Partition}, we construct an instance of \itemp as follows.~We 
have $n$ items. Each item $i \in [n]$ 
  can take $3$ possible integer values $0, a, b$, where $b>a>0$, 
i.e., $V_i=\{0,a,b\}$ for all $i\in [n]$.
Let $q_i = \Pr[v_i = b]$ and $r_i = \Pr[v_i = a]$.
We set $q_i= c_i/M$ where $M=2^n c_1^3$ and $$r_i= \frac{b-a}{a(1-t_i)}\cdot q_i,\ \ \ \ \text{where 
  \ $t_i=\frac{b}{2a}\cdot \sum_{j \neq i, j \in [n]} q_j$.}$$
The two parameters $a$ and $b$ should be thought of as universal constants
(independent of the given instance of \textsc{Partition}) throughout the proof.  
We will eventually set these constants to be $a=1$, $b=3$ (this choice is not necessary, there is flexibility in our proof and indeed any values with $b>2a$ will work). 
However, for the sake of the presentation, we will keep $a,b$ 
  as generic parameters for most of the calculations till the end.

Note that the definition of $r_i$ implies that
\begin{equation}\label{eqeqeq}
bq_i=a(q_i+r_i) - ar_it_i.
\end{equation}
Let $N=2^n c_1^2$. Then we have $q_i,r_i = O(1/N)$ and $t_i=O(n/N)$ for all $i$.
{\em Thus, each distribution~assigns most of its probability mass to the point $0$.}
This is a crucial property which allows us to get a handle~on the optimal revenue. 
For an arbitrary general instance of the pricing problem, the 
  expected revenue
  is a highly complex nonlinear function. The fact that most of the probability mass in our
construction is concentrated at 0 implies that valuation vectors with 
many nonzero entries contribute very little to the expected revenue. 
As we will argue, the revenue is approximated well by its 1st and 2nd order terms with respect to $\mathrm{poly}(n)/N$, which essentially corresponds to the contribution of all valuations in which at most two items have nonzero value. 
The probabilities $q_i, r_i$ are chosen carefully so that
the optimization of the expected revenue amounts to a quadratic optimization problem, which achieves its maximum possible value when the given set $C$ of integers has a partition into two parts with equal sums.


Our main claim is that, for an appropriate value $t^{\ast}$, there exists a price vector with expected revenue at least $t^{\ast}$
if and only if there exists a solution to the original instance of the Partition problem.

Before we proceed with the proof, we will need some notation. 
For $T_1, T_2, \eps \in \mathbb{R}_{+}$ we write 
$T_1= T_2 \pm \eps$ to denote that $|T_1 - T_2| \le \eps$.

Note that, as both the $q_i$'s and the $t_i$'s are very small positive quantities, we have that $r_i \approx (b-a)q_i/a$.
Formally, with the above notation we can write
\begin{equation}\label{usefuleq1}
r_i= \frac{b-a}{a(1-t_i)}\cdot q_i=\frac{b-a}{a}\cdot q_i\pm 2\frac{b-a}{a}\cdot q_it_i
  =\frac{b-a}{a}\cdot q_i \pm O(n/N^2).
\end{equation}

Lemma \ref{searchspace} and Corollary \ref{integerprice} imply that a revenue maximizing  
price vector can be assumed to have~non-negative integer coefficients of magnitude at most $b$.
The following lemma establishes the stronger statement that, for our particular instance, an 
  optimal price vector $\pp$ can be assumed to 
have each $p_i$ in the set $\{a, b\}$.
  
\begin{lemma} \label{lem:prices-in-support}
There is an optimal price vector $\pp \in \{a,b\}^n$.
\end{lemma}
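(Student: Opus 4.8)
The plan is to take an optimal price vector and rewrite it in two stages so that every coordinate lands in $\{a,b\}$. By Lemma~\ref{searchspace} (and Lemma~\ref{supremum}) we may start from an optimal vector $\pp\in P=[0,b]^n$, since here $\min V_i=0$ and $\max V_i=b$.

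\emph{Stage 1: no coordinate below $a$.} First I would show that a coordinate with $p_i<a$ can be raised to exactly $a$ without decreasing the revenue; let $\pp'$ agree with $\pp$ except that $p'_i=a$. I compare $\Revv(\vv,\pp)$ with $\Revv(\vv,\pp')$ for each $\vv\in V$. If the buyer does not select item $i$ under $\pp$, then raising $p_i$ only lowers item $i$'s utility and leaves all others unchanged, so the same item is selected under $\pp'$ and the revenue is unchanged. If the buyer selects item $i$ under $\pp$, then $\Revv(\vv,\pp)=p_i$; assuming $p_i>0$ (otherwise there is nothing to prove) we get $v_i\ge p_i>0$, hence $v_i\in\{a,b\}$ and $v_i-p'_i=v_i-a\ge 0$, so the buyer still selects some item under $\pp'$. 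If it is again item $i$, the revenue is $a>p_i$. If the buyer switches to some $j\ne i$, then $v_j-p_j\ge v_i-a$: in the equality case the maximum-price tie-breaking rule forces $p_j\ge a>p_i$; in the strict case a short case analysis on $(v_i,v_j)$ — where $b>2a$ is exactly what rules out $(v_i,v_j)=(b,a)$ — gives $v_j\ge v_i$, and then $v_i-p_i\ge v_j-p_j$ yields $p_j\ge p_i$. In all cases $\Revv(\vv,\pp')\ge\Revv(\vv,\pp)$, so $\Revv(\pp')\ge\Revv(\pp)$; iterating over all coordinates below $a$ (their number strictly decreases at each step) produces an optimal $\pp\in[a,b]^n$.

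\emph{Stage 2: no coordinate strictly between $a$ and $b$.} Starting from an optimal $\pp\in[a,b]^n$, I would show that a coordinate $p_i\in(a,b)$ can be moved up to $b$ without decreasing the revenue. Viewing $\Revv$ as a function of item $i$'s price with the other coordinates fixed, I split valuations according to the value $v_i$. For $v_i\in\{0,a\}$, item $i$ has strictly negative utility for every price in $(a,b]$, so it is never bought and the contribution is identical at $p_i$ and at $b$. For $v_i=b$ I claim the contribution is a non-decreasing function of item $i$'s price on $[a,b]$: item $i$ is bought exactly as long as $b-p_i$ is at least the best competing utility $M=\max_{j\ne i}(v_j-p_j)$, and — this is where it is essential that all other prices are $\ge a$ — any competitor with $v_j\in\{0,a\}$ has utility $\le 0$, so whenever $M>0$ the competing maximizer $j^\ast$ has $v_{j^\ast}=b$ and $p_{j^\ast}=b-M$; hence the revenue from $\vv$ equals $\min(p_i,\,b-M)$ (and equals $p_i$ when $M\le 0$), which is non-decreasing in $p_i$. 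Summing over valuations, replacing $p_i$ by $b$ does not decrease $\Revv$; iterating over all coordinates in $(a,b)$ yields an optimal $\pp\in\{a,b\}^n$.

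\emph{Main obstacle.} The delicate part is the monotonicity claim in Stage 2 together with its dependence on every price being at least $a$: if some competitor had $v_j=a$ with $p_j<a$, then as $p_i$ grows the buyer could switch from item $i$ to that competitor, and the revenue from that valuation would drop by $b-a$, so the claim would fail. This is precisely why Stage 1 must be carried out first; Stage 1, in turn, is where the structural hypothesis $b>2a$ enters. The remaining steps (the valuation-by-valuation bookkeeping and the termination of the two iterations) are routine.
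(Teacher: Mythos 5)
Your proof is correct, and it takes a genuinely different route from the paper's. The paper first restricts to integer price vectors in $[0:b]^n$ via Lemma~\ref{searchspace} and Corollary~\ref{integerprice}, and then exploits the specific numerics of the constructed instance: since $q_i,r_i=O(1/N)$ and $bq_i=a(q_i+r_i)-ar_it_i$, any integer vector with a coordinate outside $\{a,b\}$ loses at least $q_i\ge 1/M$ in the term-by-term comparison with the all-$b$ vector $\bb$, which swamps the $O(n^2/N^2)$ slack in the probability estimates, giving $\Revv(\pp)<\Revv(\bb)$. Your argument is instead a purely structural exchange argument that never uses the probabilities or integrality: Stage~1 is a valuation-by-valuation comparison in the spirit of the proof of Lemma~\ref{searchspace}, with the hypothesis $b>2a$ (which the paper assumes anyway, setting $a=1$, $b=3$) entering exactly to exclude the $(v_i,v_j)=(b,a)$ case; and Stage~2's pointwise formula $\min(p_i,\,b-M)$ for the revenue of a valuation with $v_i=b$ is correct precisely because, after Stage~1, every competitor with value in $\{0,a\}$ has nonpositive utility, so the positive-utility maximizer must have value $b$ and price $b-M$. (The one point worth making explicit is the tie case $b-p_i=M>0$: all tied items then carry the same price $b-M=p_i$, so the tie-breaking rule is immaterial to the revenue.) What the paper's approach buys is brevity given the machinery already in place, plus the strict inequality $\Revv(\pp)<\Revv(\bb)$; what yours buys is generality — it shows that for \emph{any} product distribution supported on $\{0,a,b\}^n$ with $b>2a$ the optimal prices lie in $\{a,b\}$, so it is the structure of the support, not the carefully tuned probabilities, that forces this conclusion. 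Since the rest of Section~5 only needs the existence of an optimal vector in $\{a,b\}^n$, your weaker (non-strict) conclusion suffices.
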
  
\begin{proof}
By Lemma \ref{searchspace} and Corollary \ref{integerprice}, there is
an optimal price vector with integer coordinates in $ [0: b]$.
Let $\pp$ be any (integer) vector in $ [0: b]^n$ that has at least one coordinate
$p_j \not\in \{a,b\}$.
We will show below that $\Revv(\pp) < \Revv(\bb)$, where $\bb$ denotes the all-$b$ vector,
and hence $\pp$ is not optimal.


Consider an index $i \in [n]$ with $p_i>0$. 
The probability the buyer selects item $i$ is bounded from above by $\Pr[v_i \ge p_i]$, 
the probability that item $i$ has value at least $p_i$, and is bounded from below by 
$$
\Pr\big[ v_i \ge p_i \big] \cdot \prod_{j\ne i, j \in [n]} (1-q_i-r_i)
\ge \Pr\big[ v_i \ge p_i \big]\cdot \left( 1-O(n/N) \right).
$$
Note that the second term in the LHS above is the probability that all items other than $i$ have value $0$ and the inequality 
uses the fact that $q_i, r_i = O(1/N)$.
Applying these two bounds on $\pp$ and $\bb$ we obtain
$$
\Revv(\bb)\ge \sum_{i\in [n]} q_i\left(1-O(n/N)\right)\cdot b\ \ \ \ \text{and}\ \ \ \ 
\Revv(\pp)\le \sum_{i:p_i>0} \Pr\big[ v_i \ge p_i \big]\cdot p_i.
$$
So $\Revv(\bb)\ge (\sum_{i\in [n]} q_ib) - O(n^2/N^2)$.
Regarding $\Revv(\pp)$, we consider the following three cases. For $i \in [n]$ with $p_i=b$, the probability that $v_i \ge p_i$ is $q_i$
 and the contribution of such an item to the second sum is $q_ib$. 
Similarly, for $i \in [n]$ with $p_i=a$, the probability that $v_i \ge p_i$ is $q_i+r_i$ and the contribution to the sum 
is
$$
(q_i+r_i)a\le q_ib+O(n/N^2),
$$
where the inequality follows from (\ref{usefuleq1}).
Finally, we consider an item $i\in [n]$ with $p_i\notin \{a,b\}$.
If $a<p_i<b$ then the contribution is $q_ip_i$,
which is at most $q_i(b-1)=q_i b - q_i$, since $p_i$ is integer.
If $p_i<a$, then the contribution is $(q_i+r_i)p_i$,
which is at most $(q_i+r_i)(a-1)=q_ib + ar_it_i-q_i -r_i =q_ib -q_i -r_i(1-at_i)$. 
In both cases, the contribution to the sum is at most 
$$q_ib-q_i \leq q_ib-(1/M).$$
Note that the definition of $M$ and $N$ implies that $1/M \gg n^2/N^2$.
Because there exists at least one $j$ with $p_j\notin \{a,b\}$, 
it follows that
$\Revv(\pp) < \Revv(\bb)$ which completes the proof of the lemma.
\end{proof}

As a result, to maximize the expected revenue it suffices to consider price vectors in $\{a,b\}^n$.
Given~any price-vector $\pp \in \{a, b\}^n$, we 
  let $S = S(\pp)= \left\{ i \in [n] : p_i =a \right\}$ and  $T = T(\pp) = \left\{ i \in [n] : p_i =b \right\}$. 
The main idea of the proof is to establish an appropriate 
  \emph{quadratic form approximation}
  to the expected revenue $\Revv(\pp)$ that is sufficiently accurate
for the purposes of our reduction.  
 
\medskip
 
\noindent {\bf Approximating the Revenue.} We appropriately partition the 
  valuation space $V$ into three events that yield positive revenue. 
We then approximate the probability of each and its  contribution to the expected revenue up to, and including, 2nd order terms, 
i.e., terms of order $O(\text{poly}(n)/N^2)$, and we ignore 3rd order terms, i.e., terms of order $O(\epsilon)$ where
 $\epsilon=n^3/N^3$. 
 
In particular, we consider the following disjoint events:\vspace{0.1cm}
\begin{flushleft}\begin{itemize}
\item {\em First Event}: $E_1  = \{\vv\in V \mid \exists\hspace{0.05cm}i \in S: v_i=b \}$.\newline 
Note that for any $\vv \in E_1$ we have $\Revv(\vv, \pp) = a$.
The probability of this event is 
$$ \Pr[E_1] = 1-\prod_{i \in S}(1-q_i)= \sum_{i \in S}q_i - \sum_{i \neq j \in S} q_iq_j
  \pm O(\epsilon).$$
\item {\em Second Event}: $E_2 = \overline{E_1} \cap \left\{ \vv\in V \mid \exists\hspace{0.05cm} i \in S: v_i=a 
  \textrm{ and } \forall\hspace{0.05cm} i \in T: v_i \in \{0,a\} \right\}$.
\newline Note that for any $\vv \in E_2$ we have $\Revv(\vv, \pp) = a$.
  The probability of this event is 
\begin{align*}
 \Pr[E_2] = &\prod_{j\in T}(1-q_j) \left[\hspace{0.06cm} \prod_{i\in S}(1-q_i) - \prod_{i\in S} (1-q_i-r_i) \right]
\end{align*}
Using the elementary identities
\begin{align*}
\prod_{j\in T}(1-q_j)&=1-\sum_{j\in T} q_j+\sum_{i\ne j\in T}q_iq_j\pm O(\epsilon)\\[0.36ex]
\prod_{i\in S}(1-q_i)&=1-\sum_{i\in S}q_i +\sum_{i\ne j\in S} q_iq_j\pm O(\epsilon)\\[0.36ex]
\prod_{i\in S}(1-q_i-r_i)&= 1-\sum_{i\in S}(q_i+r_i)+\sum_{i\ne j\in S}(q_i+r_i)(q_j+r_j)
  \pm O(\epsilon),
\end{align*}
we can write
\begin{align*}
\hspace{-1cm}\Pr[E_2]&=\left[ 1- \sum_{j \in T}q_j + \sum_{i \neq j \in T} q_iq_j\pm O(\epsilon) \right]
  \hspace{-0.06cm} \cdot \hspace{-0.06cm}
\left[ \sum_{i \in S}r_i + \sum_{i\neq j \in S} q_iq_j - \sum_{i\neq j \in S} (q_i+r_i)(q_j+r_j) 
  \pm O(\epsilon)\right]\\[0.7ex]
&= \sum_{i \in S}r_i - \sum_{i\in S}r_i \sum_{j\in T}q_j + \sum_{i\neq j \in S} q_iq_j-
\sum_{i\neq j \in S} (q_i+r_i)(q_j+r_j)\pm O(\epsilon).
\end{align*}
\item {\em Third Event}: $E_3 = \overline{E_1} \cap \left\{ \vv\in V \mid 
  \exists\hspace{0.05cm} i \in T: v_i=b \right\}.$
\newline Note that for any $\vv \in E_3$ we have $\Revv(\vv, \pp) = b$.
The probability of this event is 
\begin{align*}
\Pr[E_3] &= \prod_{i \in S}(1-q_i) \left[1-\prod_{j\in T} (1-q_j)\right] \\
&= \left( 1- \sum_{i \in S}q_i + \sum_{i \neq j \in S} q_iq_j\pm O(\epsilon) \right)
\left( \sum_{j \in T}q_j - \sum_{i \neq j \in T} q_iq_j \pm O(\epsilon)\right) \\[0.6ex]
&= \sum_{j \in T}q_j - \sum_{i \neq j \in T}q_iq_j - \sum_{i\in S}q_i \sum_{j \in T} q_j \pm O(\epsilon).
\end{align*}
\end{itemize}
\end{flushleft}
Therefore, for the expected revenue $\Revv(\pp)$ we have:
\begin{align*}
\Revv(\pp) &=\big( \Pr[E_1]+ \Pr[E_2] \big) \cdot a + \Pr[E_3]\cdot b\\
                    &= a \cdot \hspace{-0.06cm}\left( \sum_{i \in S} (q_i+r_i) - \sum_{i\neq j \in S} (q_i+r_i)(q_j+r_j) - \sum_{i\in S}r_i \sum_{j\in T}q_j \right)\hspace{-0.06cm}\\
&\hspace{0.6cm}+ b \cdot \hspace{-0.06cm}\left( \sum_{j \in T}q_j - \sum_{i \neq j \in T}q_iq_j - \sum_{i\in S}q_i \sum_{j \in T} q_j \right) \pm O(\eps).
\end{align*}
Using (\ref{eqeqeq}) it follows that the first order term of the revenue is
$$b \sum_{j \in T}q_j + a \sum_{i \in S} (q_i+r_i) = b \sum_{j \in [n]}q_j + 
\sum_{i \in S} \big(a(q_i+r_i)-bq_i\big) =  b \sum_{j \in [n]}q_j + \sum_{i \in S} (ar_it_i).$$
Observe that the first term $b \sum_{j \in [n]}q_j$ in the above expression is a constant $L_1$, independent of the pricing 
(i.e., the partition of the items into $S$ and $T$).

In the second order term, we can rewrite the expression
$a \sum_{i\neq j \in S} (q_i+r_i)(q_j+r_j)$ as
\begin{align*}
&\frac{1}{2}\cdot\sum_{i \in S} (q_i+r_i) \sum_{ j \in S,\hspace{0.05cm} j\neq i} a(q_j+r_j)\\[0.36ex]
&=\frac{1}{2}\cdot\sum_{i \in S} (q_i+r_i) \sum_{ j \in S,\hspace{0.05cm} j \neq i} (bq_j+ar_jt_j)\\
&=\frac{b}{2}\cdot \sum_{i \in S} q_i \sum_{ j \in S,\hspace{0.05cm} j \neq i} q_j + 
\frac{b}{2}\cdot\sum_{i \in S} r_i \sum_{ j \in S,\hspace{0.05cm} j \neq i} q_j + 
\frac{1}{2}\cdot\sum_{i \in S} (q_i+r_i) \sum_{ j \in S,\hspace{0.05cm} j \neq i} ar_jt_j \\[0.36ex]
&= b \sum_{i\neq j \in S} q_i q_j + \frac{b}{2}\sum_{i \in S} r_i \sum_{ j \in S,\hspace{0.05cm} j \neq i} q_j
\pm O(\epsilon)
\end{align*}
where in the first expression above, the double summation is multiplied by $1/2$
  because each unordered pair $i\neq j \in S$ is included twice.
Thus, the second order term of the expected revenue $\Revv(\pp)$ is
\begin{align*}&- a\sum_{i\neq j \in S} (q_i+r_i)(q_j+r_j) - a\sum_{i\in S}r_i \sum_{j\in T}q_j 
-b\sum_{i \neq j \in T}q_iq_j - b\sum_{i\in S}q_i \sum_{j \in T} q_j\\[0.1ex]
&=-b \sum_{i\neq j \in S} q_i q_j - \frac{b}{2}\sum_{i \in S} r_i 
  \sum_{ j \in S,\hspace{0.05cm} j \neq i} q_j - a\sum_{i\in S}r_i \sum_{j\in T}q_j
-b\sum_{i \neq j \in T}q_iq_j - b\sum_{i\in S}q_i \sum_{j \in T} q_j\pm O(\epsilon)\\[0.1ex]
&=-b \sum_{i\neq j \in [n]} q_i q_j - \frac{b}{2}\sum_{i \in S} r_i \sum_{ j \in S,
  \hspace{0.05cm}j \neq i} q_j - a\sum_{i\in S}r_i \sum_{j\in T}q_j \pm O(\epsilon)
 \end{align*}
The first term in the last expression is a constant 
  $L_2$ independent of the pricing.
As a result, we can rewrite the second order term as follows:
$$L_2 - \frac{b}{2}\sum_{i \in S} r_i \sum_{ j \in S,\hspace{0.05cm} j \neq i} q_j - 
  a\sum_{i\in S}r_i \sum_{j\in T}q_j   \pm O(\epsilon)
  = L_2 - \sum_{i \in S} r_i \left( \frac{b}{2} 
  \sum_{j \in S,\hspace{0.05cm} j\neq i} q_j+ a \sum_{j \in T} q_j \right) \pm O(\epsilon).$$
Summing with the fist order term and letting $L=L_1+L_2$, we have:
\begin{align*}
\Revv(\pp) &=  L + \sum_{i \in S} r_i \left( at_i -\frac{b}{2} 
  \sum _{j \in S,\hspace{0.05cm} j\neq i} q_j-a \sum_{j \in T} q_j \right) \pm O(\epsilon)\\[0.16ex]&= L + \sum_{i \in S} r_i 
  \left( \frac{b}{2} \sum _{j \neq i}q_j -\frac{b}{2} \sum _{j \in S,\hspace{0.05cm} j\neq i} q_j-a \sum_{j \in T} q_j \right) 
  \pm O(\epsilon) \\[0.66ex]
&=L + \sum_{i \in S} r_i \cdot \left(\frac{b}{2}-a\right) \sum_{j \in T} q_j \pm O(\epsilon)  \\[0.4ex]
&=L + \frac{b-a}{a}\cdot\left(\frac{b}{2}-a\right)\cdot \frac{1}{M^2}\cdot \sum_{i \in S} c_i\cdot  \sum_{j \in T} c_j
\pm O(\epsilon).
\end{align*}
Now setting $a=1, b=3$ in the previous expression, we have that  for any $\pp\in \{a,b\}^n$,
\begin{equation} \label{eqn:final}
\Revv(\pp) = L + \frac{1}{M^2} \left( \sum_{i \in S} c_i \right) \cdot \left(\sum_{j \in T} c_j \right) \pm O(\eps).
\end{equation}

At this point, we observe that the sum of the two factors $\sum_{i \in S} c_i, \sum_{j \in T} c_j$ in (\ref{eqn:final}) is a constant
(independent of the partition). Thus, their product is maximized when they are equal. 
Because $\eps = o(1/M^2)$, it follows that
the revenue is maximized when the product of the two factors is maximized.
In particular, if there exists a partition of the set $C = \{c_1,\ldots,c_n\}$ 
into two sets with equal sums $H= (\sum_{i\in [n]} c_i)/2$,
then the corresponding partition of the indices into the sets $S$ and $T$
yields revenue $L+ \frac{1}{M^2} \cdot H^2 \pm O(\eps)$.
On the other hand, if there is no such equipartition of the set $C$, then
for any partition of the indices, the revenue will be at most
$L + \frac{1}{M^2} (H+1)(H-1) \pm O(\eps) = L + \frac{1}{M^2} (H^2-1) \pm O(\eps)$.
Since $\eps = o(1/M^2)$ it follows that there exists a partition of the set $C = \{c_1,\ldots,c_n\}$ into two sets with equal sums if and only if
there exists a price vector $\pp\in \{a,b\}^n$ with 
  $\Revv(\pp)\ge t^{\ast} = L+ \frac{1}{M^2} (H^2-\frac{1}{2})$. This completes the proof.

\medskip
\noindent
{\bf Remark.} In the above construction, the support $\{0,a,b\}$ of the distributions
includes the value 0 (which in fact has most of the probability mass).
It is easy to modify the construction, if desired, so that the support
contains only positive values: shift all the values of the distributions up by 1
(thus, the supports now become $V_i=\{1,2,4\}$) and add an additional $(n+1)$-th item
which has value 1 with probability 1.
This transforma\-tion increases the expected revenue by 1.
It is easy to see that an optimal price vector $\pp'$ for the
new instance will give price $p'_{n+1}=1$ to the $(n+1)$-th item
and price $p'_i=p_i+1$ to each other item $i \in [n]$,
where $\pp$ is an optimal vector for the original instance.


\def\knap{\textsc{Integer Knapsack with repetitions}}

\section{NP-hardness for identical distributions}\label{iid-sec}


In this section we show that \itemp is NP-hard even for 
  identical distributions.   
For this purpose we reduce from the following (still NP-complete) version of Integer Knapsack.

\begin{definition}[\textsc{Integer Knapsack with repetitions}]$ $\\ 
\textsc{Input:} $n+1$ positive integers $a_1<\cdots<a_n$ and $L$.
\\
\textsc{Problem:} Do there exist nonnegative integers $x_1,\ldots,x_n$
  such that $\sum_{i\in [n]} x_i=n$ and
  $\sum_{i\in [n]} x_i a_i =L$?
\end{definition}

The NP-hardness of this version of Integer Knapsack is 
  likely known in the literature, 
  but for completeness we include below a quick proof via a reduction from Subset-Sum.

\begin{lemma}
Integer Knapsack with repetitions is NP-hard.
\end{lemma}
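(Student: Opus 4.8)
The plan is to reduce from Subset-Sum, which asks: given positive integers $s_1,\ldots,s_k$ and a target $B$, does there exist a subset $I\subseteq[k]$ with $\sum_{i\in I}s_i=B$? I would like to encode ``pick $s_i$ or don't pick $s_i$'' using the constraint $\sum x_j=n$ in \knap. The natural idea is to introduce, for each Subset-Sum element $s_i$, two \knap\ items: one of ``value'' roughly $s_i$ and one of ``value'' $0$ (or a small padding value), so that choosing multiplicity $1$ on the first corresponds to $i\in I$ and multiplicity $1$ on the second corresponds to $i\notin I$. The cardinality constraint $\sum_j x_j=n$ (with $n=k$ here, or a suitable padded value) should force exactly one of each such pair to be chosen with multiplicity $1$, thereby ruling out the degenerate solutions that repetition would otherwise allow.

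The technical device I would use to make the cardinality constraint do this work is a standard ``base-shifting'' trick: replace each raw weight by $a_j = W\cdot(\text{real part}) + 1$ for a large multiplier $W$ (say $W > \sum_i s_i$ or $W$ bigger than any achievable partial sum), and set the target to $L = W\cdot B + n$. Then any solution $x$ with $\sum_j x_j = n$ and $\sum_j x_j a_j = L$ must satisfy, by reducing modulo $W$ and using $\sum x_j=n<W$, that the ``$+1$'' parts contribute exactly $n$ — which is automatic — while the high-order parts must sum to exactly $B$. First I would spell out the items: for $i\in[k]$, an item of weight $W s_i + 1$ (``include $s_i$'') and an item of weight $0\cdot W+1 = 1$ (``exclude $s_i$''); this gives $2k$ items, set $n:=k$, and $L := WB + k$. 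I would need the weights to be distinct and increasing as the definition requires, which is easily arranged by tiny perturbations (e.g. adding distinct small offsets below the threshold $W$, and re-adjusting $L$), or by noting the reduction is insensitive to such cosmetic changes.

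The correctness argument then splits into the two directions. For the forward direction, given a subset $I$ with $\sum_{i\in I}s_i=B$, set $x$ to select the ``include'' item for each $i\in I$ and the ``exclude'' item for each $i\notin I$, all with multiplicity $1$; this gives $\sum x_j = k = n$ and $\sum x_j a_j = W\sum_{i\in I}s_i + k = WB+k = L$. For the reverse direction, suppose $\sum x_j = n$ and $\sum x_j a_j = L = WB+k$; write the second equation as $W\cdot(\sum x_j \cdot \text{(high part)}) + \sum x_j = WB + k$, and since $\sum x_j = k$ this forces $\sum x_j\cdot(\text{high part}) = B$; because only the ``include $s_i$'' items have nonzero high part $s_i$, we get $\sum_i (\text{multiplicity of include-}i)\cdot s_i = B$, and I would argue the multiplicities are each $0$ or $1$ — this is where the cardinality bound $\sum x_j = k$ together with the fact that there are $k$ elements and all $s_i>0$ is used, though one must be slightly careful that a solution could in principle put multiplicity $2$ on one include-item and compensate elsewhere.

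That last point — ensuring no ``fractional-looking'' cheating via repetitions — is the step I expect to be the main obstacle, and it is why the padding/weight design must be chosen with care. The clean fix is to make the ``$+1$'' offsets instead be large but bounded so that the number of chosen items is rigidly pinned: e.g.\ give the include-item for $i$ weight $W s_i + c_i$ and the exclude-item weight $c_i'$ with all $c$'s positive, distinct, and chosen so that the \emph{only} way to hit the low-order target $\sum x_j c_j = (\text{target low part})$ subject to $\sum x_j = k$ is to pick exactly one item from each pair — this can be done by another layer of base-shifting on the $c$'s (a ``digit per pair'' encoding). So the real content of the proof is this two-level positional encoding that simultaneously (i) forces exactly one selection per pair via the low-order digits and the cardinality constraint, and (ii) reads off $\sum_{i\in I}s_i = B$ from the high-order digits. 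Once that encoding is set up, both directions are routine arithmetic. I would present the construction, verify the two implications, and note that all numbers produced are polynomial in the Subset-Sum input size (the weights and $L$ have polynomially many bits since $W$ and the digit-bases are polynomially bounded), completing the NP-hardness proof; NP-membership is immediate since a witness $x$ has $\sum x_j = n$ entries and can be checked in polynomial time.
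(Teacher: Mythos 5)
Your overall strategy is the paper's: reduce from Subset-Sum, pair each element $s_i$ with an ``include'' item and an ``exclude'' item, and use a positional (``digit per pair'') encoding so that the cardinality constraint plus the digit structure forces exactly one choice per pair, with the low-order part reading off $\sum_{i\in I}s_i$. You also correctly diagnose why the naive one-level version fails (repetition lets a solver put multiplicity $2$ on one include-item and skip another pair), and your proposed fix --- a second layer of base-shifting assigning a distinct digit to each pair --- is exactly the paper's construction: it sets $K=n^2T$, include-item $a_i=K^i+b_i$, exclude-item $c_i=K^i$, so the $K^i$ digit of the target (set to $1$) pins down one selection per pair and the units digit reads off the subset sum $T$.

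There is, however, one concrete gap in your construction as written. In this problem the cardinality bound is not a free parameter: by definition the number of weights and the required total multiplicity are both the same $n$. So ``this gives $2k$ items, set $n:=k$'' is not a legal instance --- with $2k$ items the constraint is forced to be $\sum_j x_j = 2k$, and your intended solution uses only $k$ units of multiplicity. Nor does your digit-per-pair encoding alone repair this: if you raise each pair's target digit to $2$ to absorb the extra multiplicity, taking the include-item twice contributes $2s_i$ to the high-order part, which turns the reverse direction into a bounded integer-knapsack question on the $s_i$'s rather than Subset-Sum. The paper closes this gap by adding one further item of weight $K^{n+1}$ (larger than anything the other $2n$ items can reach with total multiplicity $2n+1$) and setting the target's top digit to $n+1$; this item is forced to soak up exactly the $n+1$ units of slack, after which your argument goes through verbatim. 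With that padding item added, your proof is complete and matches the paper's.
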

\begin{proof}  
Let $b_1<\cdots<b_n$ and $T$ denote an instance of Subset-Sum,
  where $b_i$ and $T$ are all positive integers.
Without loss of generality, we assume that $T>b_n$.
Let $K=n^2T$.
For each $i\in [n]$, set
  $a_i=K^i+b_i$ {and} $c_i=K^i$. 
Then one can show that $\{K^{n+1},a_i,c_i:i\in [n]\}$, 
  a set of $2n+1$ positive integers, together with 
  $$L=T+K+K^2+\cdots+K^n+(n+1)K^{n+1}$$ form a yes-instance
  of the special Integer Knapsack problem iff a subset of
  $\{b_1,\ldots,b_n\}$ sums to $T$.
\end{proof}

\subsection{Reduction}

Let $a_1<\cdots <a_n$ and $L$ denote an instance of \knap.
Without loss of generality, we assume that $L\le n a_n$; otherwise
  the problem is trivial.
Our goal is to construct a distribution $Q$ over nonnegative integers,
  and reduce the Integer Knapsack problem to the problem
  \itemp with $n$ items, each of which
  has its value drawn from $Q$ independently.
The key idea is similar to the reduction for support size $3$.
$Q$ assigns most of its probability mass to the point $0$,
  so that valuations with many nonzero values contribute very
  little to the expected revenue.
We set the support and probabilities of $Q$ carefully, so that
  the optimization of the expected revenue amounts to a quadratic 
  optimization problem that mimics the Integer Knapsack problem with repetitions.

We start the construction of $Q$ with some parameters.
Let $m=\max(n^5,a_n)$, and let $N=m^{n^2}$ denote a large integer.
For each $i\in [n]$, let $v_i=m^{n+i}$. For each $i\in [n-1]$, let
$$
\g_i=\frac{1}{N}\left(\frac{1}{m^{n+i}}-\frac{1}{m^{n+i+1}}\right)
  =\frac{m-1}{N m^{n+i+1}}.
$$
Let $\g_n=1/(Nm^{2n})$.
For convenience, we also let $\G_i=\sum_{j=i}^n \g_j= {1}/({Nm^{n+i}})$ for each $i\in [n]$.

We record a property that follows directly from our choices of $v_i$ and $\g_i$.
  
\begin{property}\label{val:prob}
For each $i\in [n]$, we have $v_i\G_i=1/N$.
\end{property}

Let now $\qq_1,\ldots,\qq_n$ denote $n$ probability distributions. 
They are closely related to the instance of Integer Knapsack and will be
  specified later in this section.
The support of each $\qq_i$ is a subset of $[2n^3]$ 
  and for each $j\in [2n^3]$, we use $q_i(j)$ to denote the probability
  of $j$ in $\qq_i$.
Finally, let $t_1,\ldots,t_n$ denote a sequence of (not necessarily positive) numbers, also to be 
  specified later, with $|t_i|=O(1/{N^2})$ for all $i\in [n]$.
 
We are ready to define $Q$ using $v_i,\g_i,t_i$ and $\qq_i$.
First, the support of $Q$ is 
$$
\Big\{\hspace{0.05cm}0,v_i,v_i+j:i\in [n]\ \text{and}\ j\in [2n^3]
  \hspace{0.03cm}\Big\}.
$$
Note that all values in the support are bounded 
   by $O(m^{2n})$, and 
  the size of the support is $O(n^4)$. 

Next, $Q$ has probability $(\g_i/m)+t_i$ at $v_i$ for each $i\in [n]$;
  probability $q_i(j)\cdot \g_i(m-1)/m$ at $v_i+j$ for each $i\in [n]$ and $j\in [2n^3]$;
  and probability $1-(\sum_{i=1}^n \g_i+t_i)$ at $0$.
It is easy to verify that $Q$ is a probability distribution since
  the probabilities sum to $1$. 

For convenience, we also let
  $T_i=\sum_{j=i}^n t_j$, and 
  $r_i=\sum_{j=i}^n (\g_j+t_j)=\G_i+T_i$, for each $i\in [n]$.
The latter quantity, $r_i$, is the probability that the value  
  is at least $v_i$.

Even though $t_i$ and $\qq_i$ have not been specified yet, we still can prove 
  the following useful lemma about optimal price vectors, 
  as long as $|t_i|=O({1}/{N^2})$ for each $i\in [n]$:
  
\begin{lemma}\label{pricesupport}
There is an optimal price vector $\pp\in\{v_1,\ldots,v_n\}^n$.
\begin{proof}
By Lemma \ref{searchspace} and Corollary \ref{integerprice} there must be
  an (integral) optimal price vector in $ [0: v_n+2n^3]^n$.

Let $\pp= (p_1,\ldots,p_n)\in [0:v_n+2n^3]^n$ be a price vector with $\pp\notin\{v_1,\ldots,v_n\}^n$. We will prove below that $\Revv(\pp)<\Revv(\bb)$, where $\bb$ is the vector in which all entries are $v_n$.
The lemma then follows.

For convenience, we use $F(s)$ to denote the probability of a 
  random variable drawn from the distribution $Q$ being at least $s$.
For each index $i\in [n]$ such that $p_i>0$, the probability that 
  the buyer picks item $i$ can be bounded from above by $F(p_i)$,
  and can be bounded from below by
$$
F(p_i)\cdot (1-r_1)^{n-1}\ge  F(p_i)\cdot \left(1-
  \frac{1}{m^{n+1}N}-O\left(\frac{n}{N^2}\right)\right)^{n-1}
  \ge F(p_i)-O\left(\frac{n}{m^{2n+2}N^2}\right),
$$
where we used $r_1=\Gamma_1+T_1$, $\Gamma_1=1/(m^{n+1}N)$, $T_1=O(n/N^2)$ 
and $F(p_i) \leq r_1 =O(1/(m^{n+1}N))$ if $p_i >0$.
Applying the upper bound on $\Revv(\pp)$ and the lower bound on 
  $\Revv(\bb)$, we have
$$
\hspace{-1cm}\Revv(\pp)\le \sum_{i:p_i>0} F(p_i)\cdot p_i\ \ \ \text{and}\ \ \ 
\Revv(\bb)\ge nv_n \left(F(v_n)-O\left(\frac{n}{m^{2n+2}N^2}\right)\right)
  \ge nv_nF(v_n)-O\left(\frac{n^2}{m^2 N^2}\right).\hspace{-1cm}
$$

We now examine $p_iF(p_i)$ and $v_nF(v_n)$. We have three cases on $sF(s)$:
\begin{enumerate}
\item[] {\em Case 1}: $s=v_i$ for some $i\in [n]$.
Then we have $$sF(s)=v_i(\Gamma_i+T_i)=\frac{1}{N} \pm O\left(\frac{nm^{2n}}{N^2}\right).$$

\item[] {\em Case 2}: $s=v_i+j$ for some $i\in [n]$ and $j\in [2n^3]$. 
We then have $F(s)\le r_i-(\gamma_i/m)-t_i$ and 
\begin{eqnarray*}
\hspace{-1cm}sF(s)\le (v_i+2n^3)\left(r_i-\frac{\gamma_i}{m}-t_i\right)=\frac{1}{N}\cdot 
  \frac{m^2-m+1}{m^2}+O\left(\frac{n^3}{m^{n+1}N}\right)
  = \frac{1}{N} -\Omega\left(\frac{1}{mN}\right)\hspace{-1cm}
\end{eqnarray*}
when $i<n$, and similarly when $i=n$,
$$sF(s)\le (v_n+2n^3)\cdot \frac{\gamma_n (m-1)}{m}=\frac{m-1}{m}\cdot \frac{1}{N}+O\left(
\frac{n^3}{m^{2n}N}\right)
= \frac{1}{N} -\Omega\left(\frac{1}{mN}\right).$$

\item[] {\em Case 3:} Otherwise, let $i\in [n]$ denote the smallest index such that
  $s<v_i$. Then we have
$$
sF(s)\le (v_i-1)r_i=v_i(\Gamma_i+T_i)-r_i
  =\frac{1}{N}-\Omega\left(\frac{1}{m^{2n}N}\right).
$$


\end{enumerate}

From Case 1, we have $$\Revv(\bb) \geq \frac{n}{N} - O\left(\frac{n^2 m^{2n}}{N^2}\right).$$

Regarding $\Revv(\pp)$,
combining all three cases, we have that 
$$\Revv(\pp) \leq \frac{n}{N} - \Omega\left(\frac{1}{m^{2n}N}\right)$$
because there is at least
one index $i\in[n]$ such that $p_i\notin \{v_1,\ldots,v_n\}$ by the assumption.
As $N \gg n^2 m^{4n}$, we conclude that $\Revv(\pp)<\Revv(\bb)$.
The lemma then follows.
\end{proof}
\end{lemma}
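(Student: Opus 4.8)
The plan is to combine the general structural facts established earlier (Lemma~\ref{searchspace} and Corollary~\ref{integerprice}) with the extreme concentration of $Q$ at $0$. First I would apply these two results: since every value in the support of $Q$ is an integer and the support lies in $[0:v_n+2n^3]$ (the largest support point being $v_n+2n^3$), there is an optimal price vector with integer coordinates in $[0:v_n+2n^3]^n$. It then suffices to show that every integer vector $\pp\in[0:v_n+2n^3]^n$ having some coordinate $p_j\notin\{v_1,\dots,v_n\}$ satisfies $\Revv(\pp)<\Revv(\bb)$, where $\bb=(v_n,\dots,v_n)$; the lemma is immediate from this.

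The second step is to reduce both $\Revv(\pp)$ and $\Revv(\bb)$ to sums of independent single-item contributions. For any item $i$ with $p_i>0$, the probability the buyer picks $i$ is at most $F(p_i):=\Pr_{v\sim Q}[v\ge p_i]$ (necessary for nonnegative utility) and at least $F(p_i)\,(1-r_1)^{n-1}$, the latter obtained by forcing every other item to take value $0$; here $r_1=\Pr_{v\sim Q}[v\ge v_1]$ is the total mass of the nonzero values. Since Property~\ref{val:prob} and $|t_i|=O(1/N^2)$ give $r_1=O(1/(m^{n+1}N))$, these two bounds differ only by a lower-order term, so $\Revv(\pp)\le\sum_{i:p_i>0}p_iF(p_i)$ while $\Revv(\bb)\ge n\,v_nF(v_n)-O(n^2m^{2n}/N^2)$. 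Everything now hinges on the behavior of the single quantity $sF(s)$ for integer $s\in[0:v_n+2n^3]$.

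The core of the argument is a three-case analysis of $sF(s)$. (i) If $s=v_i$, then by Property~\ref{val:prob}, $v_iF(v_i)=v_i(\Gamma_i+T_i)=\tfrac1N\pm O(nm^{2n}/N^2)$. (ii) If $s=v_i+j$ with $1\le j\le 2n^3$ (a point just above some $v_i$), then $F(s)\le r_i-(\gamma_i/m)-t_i$ because the atom at $v_i$ is lost; a short computation using $v_i\gamma_i=\tfrac1N(1-1/m)$ for $i<n$ and $v_n\gamma_n=\tfrac1N$ yields $sF(s)\le\tfrac1N-\Omega(1/(mN))$. (iii) If $s$ lies in a gap (below $v_1$, or strictly between $v_i+2n^3$ and $v_{i+1}$), then $F(s)=r_{i'}$ for the smallest $v_{i'}>s$, so $sF(s)\le(v_{i'}-1)r_{i'}=v_{i'}(\Gamma_{i'}+T_{i'})-r_{i'}=\tfrac1N-\Omega(1/(m^{2n}N))$, where the $-r_{i'}$ term swamps the $v_{i'}T_{i'}=O(m^{2n}/N^2)$ error since $N\gg m^{4n}$. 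In all cases $sF(s)\le 1/N$, and when $s\notin\{v_1,\dots,v_n\}$ we gain an additive $\Omega(1/(m^{2n}N))$. Summing over items gives $\Revv(\pp)\le n/N-\Omega(1/(m^{2n}N))$, while $\Revv(\bb)\ge n/N-O(n^2m^{2n}/N^2)$, and since $N=m^{n^2}\gg n^2m^{4n}$ this forces $\Revv(\pp)<\Revv(\bb)$, as needed.

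I expect the main obstacle to be the bookkeeping of error terms and, correspondingly, pinning down the parameter hierarchy: one must verify that the $\Omega(1/(m^{2n}N))$ advantage of any good price over a bad one strictly dominates both the cumulative slack $O(n^2m^{2n}/N^2)$ introduced by the perturbations $t_i$ and the gap between $F(p_i)$ and the true selection probability. This is precisely why $N$ is taken as large as $m^{n^2}$, so that it dwarfs every fixed power $m^{O(n)}$; once the chain $n\ll m$ and $N\gg m^{4n}$ is fixed, each individual inequality above is routine, but keeping the exponents consistent across the three cases is the delicate part.
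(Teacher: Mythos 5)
Your proposal is correct and follows essentially the same route as the paper's proof: reduce to integer vectors in $[0:v_n+2n^3]^n$ via Lemma~\ref{searchspace} and Corollary~\ref{integerprice}, bound the revenue by $\sum_i p_iF(p_i)$ up to $O(\mathrm{poly}(n)m^{O(n)}/N^2)$ slack, and run the same three-case analysis of $sF(s)$ ($s=v_i$, $s=v_i+j$, $s$ in a gap) to extract an $\Omega(1/(m^{2n}N))$ loss from any coordinate outside $\{v_1,\dots,v_n\}$, which dominates because $N=m^{n^2}\gg n^2m^{4n}$. The error-term bookkeeping you flag as the delicate part is handled exactly as in the paper, so no changes are needed.
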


\subsection{Analysis of the Expected Revenue}  
  
Given a price vector $\pp\in \{v_1,\ldots,v_n\}^n$, 
  we let $x_i$ denote the number of items priced at $v_i$. Then $\sum_{i} x_i=n$.  
We will only consider the contribution of two types of valuation vectors 
  to the expected revenue $\Revv(\pp)$: those with exactly one positive 
  entry and those with exactly two positive entries.
The following lemma shows that the 
  total contribution from all other valuation vectors is of third order
  with respect to (roughly) $1/N$.
  
\begin{lemma}\label{triviallemma}
The revenue from valuation vectors with at least three positive entries 
  is $O(n^3/(m^{n+3}N^3))$.
\end{lemma}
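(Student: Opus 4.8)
The plan is to bound the quantity crudely, as a product of two small factors: the revenue extracted from any single valuation vector is at most $v_n=m^{2n}$ (since by Lemma~\ref{pricesupport} we only consider price vectors $\pp\in\{v_1,\ldots,v_n\}^n$, so whichever item the buyer selects we get revenue at most $v_n$), while the total probability mass carried by valuation vectors having at least three positive coordinates is minuscule, because each item independently takes a positive value only with probability $O(1/(Nm^{n+1}))$. Multiplying these two estimates yields the claimed third-order bound.

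First I would compute the probability $p$ that a single item takes a positive value. Summing the mass that $Q$ puts at $v_i$ and at the points $v_i+j$, over all $i\in[n]$ and $j\in[2n^3]$, and using $\sum_{j\in[2n^3]}q_i(j)=1$, gives
\[
p\;=\;\sum_{i\in[n]}\Big(\tfrac{\g_i}{m}+t_i\Big)+\sum_{i\in[n]}\g_i\cdot\tfrac{m-1}{m}\;=\;\sum_{i\in[n]}(\g_i+t_i)\;=\;\G_1+T_1\;=\;r_1.
\]
Since $\G_1=1/(Nm^{n+1})$, $|T_1|\le\sum_{i\in[n]}|t_i|=O(n/N^2)$, and $N=m^{n^2}$ is much larger than $nm^{n+1}$ (as $m\ge n^5$ and $n^2\ge n+2$), this yields $r_1=O(1/(Nm^{n+1}))$; in particular $nr_1=o(1)$.

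Next, since the $n$ item values are drawn independently from $Q$, the probability that at least three items take positive values is
\[
\sum_{k=3}^{n}\binom{n}{k}r_1^{k}(1-r_1)^{n-k}\;\le\;\sum_{k\ge 3}(nr_1)^{k}\;=\;O\big((nr_1)^3\big)\;=\;O\!\left(\frac{n^3}{N^3 m^{3n+3}}\right),
\]
where the geometric series converges because $nr_1=o(1)$. Combining this with the per-vector bound $\Revv(\vv,\pp)\le v_n=m^{2n}$, the total revenue contributed by valuation vectors with at least three positive entries is at most
\[
m^{2n}\cdot O\!\left(\frac{n^3}{N^3 m^{3n+3}}\right)\;=\;O\!\left(\frac{n^3}{m^{n+3}N^3}\right),
\]
as claimed. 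There is no genuine obstacle here: the argument is a routine union bound, and the only point requiring a little care is the estimate $r_1=O(1/(Nm^{n+1}))$, which follows immediately from the choices $N=m^{n^2}$ and $|t_i|=O(1/N^2)$.
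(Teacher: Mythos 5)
Your proof is correct and follows essentially the same route as the paper: bound the probability of at least three positive entries by $O(n^3r_1^3)=O(n^3/(m^{3n+3}N^3))$ and multiply by the per-vector revenue bound $O(m^{2n})$. The extra details you supply (the exact computation of the per-item positive-value probability and the geometric-series union bound) are accurate but not a different argument.
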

\begin{proof}
The probability that a valuation vector has at least three positive entries can 
  be bounded by
$$
O(n^3r_1^3)=O\left(\frac{n^3}{m^{3n+3}N^3}\right). 
$$
Thus, the total contribution is at most
$O(m^{2n})\cdot O(n^3r_1^3)$, and the lemma follows.
\end{proof}  
  
Let $\epsilon= n^3/(m^{n+3}N^3)$ in the rest of the section.
 
Next we examine valuation vectors with exactly one positive entry.  
Their total contribution is 
$$
\sum_{i\in [n]} x_iv_i r_i(1-r_1)^{n-1}.
$$
Since $r_1=O(1/(m^{n+1}N))$ is of first order, approximating the sum up to second order yields
\begin{align}
\sum_{i\in [n]} x_iv_i r_i(1-r_1)^{n-1}&=
  \sum_{i\in [n]}x_iv_ir_i\big(1-(n-1)r_1\pm O(n^2r_1^2)\big) \nonumber\\[0.36ex]
&=\sum_{i\in [n]}x_iv_ir_i - (n-1)\sum_{i\in [n]} x_i
  v_ir_ir_1\pm O(\epsilon).\label{part1}
\end{align}
  
The contribution of valuation vectors with two positive entries is more involved.
First, from those whose two positive entries are over 
  items of the same price, the total contribution to $\Revv(\pp)$ is
\begin{equation}\label{part2}
\sum_{i\in [n]} \frac{x_i(x_i-1)}{2}\cdot v_i\big(r_1^2-(r_1-r_i)^2\big)(1-r_1)^{n-2}.
\end{equation}
For each pair $i<j\in [n]$, we use $p(i,j)\in [0,1]$ to denote the probability 
  of $\alpha-v_i>\beta-v_j$, where $\alpha$ and $\beta$ are drawn independently
  from $Q$ conditioning on $\alpha\ge v_i$ and $\beta\ge v_j$.
Using the $p(i,j)$'s, the contribution from value vectors whose
  two positive entries are over items of different prices is
\begin{equation}\label{part3}
\sum_{i<j\in [n]} x_ix_j\left(v_ir_i(r_1-r_j)+v_jr_j(r_1-r_i) 
+r_ir_j\Big(v_ip(i,j)+v_j\big(1-p(i,j)\big)\Big)\right)(1-r_1)^{n-2}.
\end{equation}

Approximating to the second order, (\ref{part2}) can be simplified to
\begin{equation}\label{part2:2}
\hspace{-0.6cm}\sum_{i\in [n]} \frac{x_i(x_i-1)}{2}\cdot v_i(2r_ir_1-r_i^2)(1\pm O(nr_1))=\sum_{i\in [n]}x_i(x_i-1)v_ir_ir_1
  -\sum_{i\in [n]} \frac{x_i(x_i-1)}{2}\cdot v_ir_i^2 \pm O(\epsilon)
\end{equation}
and (\ref{part3}) can be simplified similarly to
\begin{equation}\label{part3:2}
\sum_{i<j\in [n]} x_ix_j\left(v_ir_i(r_1-r_j)+v_jr_j(r_1-r_i) 
+r_ir_j\Big(v_ip(i,j)+v_j\big(1-p(i,j)\big)\Big)\right)\pm O(\epsilon).
\end{equation}

Next we show that, for each $i\in [n]$, all terms of $v_ir_ir_1$ in (\ref{part1}),
  (\ref{part2:2}) and (\ref{part3:2}) cancel each other.
This is because the overall coefficient of $v_ir_ir_1$ is
$$
-(n-1)x_i+x_i(x_i-1)+\sum_{j:j\ne i} x_ix_j=-(n-1)x_i+nx_i-x_i=0,
$$
where the first equality uses the fact that $\sum_{j \in [n]} x_j = n$.
This allows us to further simplify the sum of
  (\ref{part1}), (\ref{part2:2}) and (\ref{part3:2}), with an error of $O(\epsilon)$, to
\begin{align} \label{total1}
\hspace{-0.3cm}\sum_{i\in [n]}x_iv_ir_i 
-\sum_{i\in [n]}\frac{x_i(x_i-1)}{2}\cdot v_ir_i^2 
-\sum_{i<j\in [n]} x_ix_jr_ir_j(v_i+v_j)+\sum_{i<j\in [n]}
  x_ix_jr_ir_j\big(v_ip(i,j)+v_j(1-p(i,j))\big)
\end{align}
Note that, by Lemma \ref{triviallemma},  
  this is also an approximation of $\Revv(\pp)$, with an error of $O(\epsilon)$.

Let $\epsilon'=n^3m^{n-1}/N^3$.
By plugging in $v_ir_i=v_i(\Gamma_i+T_i)=(1/N)+v_iT_i$ (note
  that $T_i=O(n/N^2)$ is of second order), (\ref{total1}) can be further simplified to
  the following:\vspace{0.04cm}
$$\hspace{-2cm}
\frac{n}{N}+\sum_{i\in [n]} x_iv_iT_i-\sum_{i\in [n]}\frac{x_ir_i(x_i-1)}{2N} -
  \sum_{i<j\in [n]} \frac{x_ix_j(r_i+r_j)}{N}
  +\sum_{i<j\in [n]} \frac{x_ix_j}{N}\cdot\big(r_jp(i,j)+r_i(1-p(i,j))\big)\pm O(\epsilon').\hspace{-2cm}\vspace{0.07cm}
$$
Extracting $x_ix_j(r_i+r_j)/(2N)$ from the last sum above, we get\vspace{0.04cm}
\begin{equation*}
\hspace{-2cm}\frac{n}{N}+\sum_{i\in [n]}x_iv_iT_i-\sum_{i\in [n]}\frac{x_ir_i(x_i-1)}{2N} -
  \sum_{i<j\in [n]} \frac{x_ix_j(r_i+r_j)}{2N}+
  \sum_{i<j\in [n]}\frac{x_ix_j}{N}\cdot\big((1/2)-p(i,j)\big)(r_i-r_j)\pm O(\epsilon').\hspace{-2cm}\vspace{0.07cm}
\end{equation*}
Also note that the second and third sums above
  can be combined into a linear form of the $x_i$'s:
$$
\sum_{i\in [n]} x_ir_i(x_i-1)+\sum_{i<j\in [n]}x_ix_j(r_i+r_j)
  =-\sum_{i\in [n]}x_ir_i+\left(\sum_{i\in [n]} x_i\right)\left(\sum_{i\in [n]}x_ir_i\right)
  =(n-1)\sum_{i\in [n]}x_ir_i.
$$
As a result, we get the following approximation of the expected revenue
  $\Revv(\pp)$, with an error of $O(\epsilon')$:
\begin{equation}\label{haha2}
\frac{n}{N}+\sum_{i\in [n]}x_iv_iT_i-\frac{n-1}{2N}\sum_{i\in [n]}x_ir_i
+\sum_{i<j\in [n]} \frac{x_ix_j}{N}\cdot \big((1/2)-p(i,j)\big)(\Gamma_i-\Gamma_j).
\end{equation}
Note that in (\ref{haha2}), we also replaced $r_i-r_j$ at the end with $\Gamma_i-\Gamma_j$
  since the error introduced is $O(n^3/N^3)$.

\subsection{Reverse Engineering of $t_i$ and $p(i,j)$}

Our ultimate goal is to set $t_i$'s and $\qq_i$'s 
  carefully so that (\ref{haha2}) by the end has the following form:
\begin{equation}\label{final1}
\frac{n}{N}+\frac{L^2}{N^2m^{3n}}-\frac{1}{N^2m^{3n}}\cdot \left(\sum_{i\in [n]} x_ia_i-L\right)^2.
\end{equation}
Recall that $L$ is the target integer in the Knapsack instance.
If this is the case, then we obtain a polynomial-time reduction from 
  the special Knapsack problem to \textsc{Item-Pricing},
  since the difference between (\ref{final1}) and $\Revv(\pp)$ is 
  at most $O(\epsilon')$ and thus (\ref{final1}) is at least
$$
\frac{n}{N}+\frac{L^2}{N^2m^{3n}}-\frac{1}{2N^2m^{3n}}
$$
if and only if $a_1,\ldots,a_n$ and $L$ is a yes-instance of the 
  special Knapsack problem.
  
To compare (\ref{final1}) and (\ref{haha2}),
  we use $\sum_{i\in [n]} x_i=n$ in (\ref{final1}) and it becomes
\begin{align}
&\frac{n}{N}-\frac{1}{N^2m^{3n}}\cdot \left(\sum_{i\in [n]} x_i^2a_i^2+
2\sum_{i<j\in [n]}x_ix_ja_ia_j-2\sum_{i\in [n]}a_iLx_i\right) \nonumber\\[0.36ex]
&=\frac{n}{N}-\frac{1}{N^2m^{3n}}\cdot \left(\sum_{i\in [n]} a_i^2 x_i
  \left(n-\sum_{j:j\ne i} x_j\right)+
2\sum_{i<j\in [n]}x_ix_ja_ia_j -2\sum_{i\in [n]}a_iLx_i\right) \nonumber\\[0.36ex]
&=\frac{n}{N}-\frac{1}{N^2m^{3n}}\cdot \left(\sum_{i\in [n]} (na_i^2-2a_iL) x_i
  -\sum_{i<j\in [n]}x_ix_j(a_i-a_j)^2 \right)\label{haha1}
\end{align} 
By comparing (\ref{haha1}) with (\ref{haha2}), our goal is achieved if
  the following two conditions hold: First,
\begin{equation}\label{temp}
T_i=\frac{1}{v_i}\cdot\left(\frac{(n-1)r_i}{2N}-\frac{1}{N^2m^{3n}}\cdot \left(na_i^2-2a_iL\right)\right),
\end{equation}
for all $i\in [n]$ (note that the absolute value of the
  right side of (\ref{temp}) is $O(n/(m^{2n+2}N^2))$; Second,
\begin{equation}\label{temp22}
\frac{((1/2)-p(i,j))(\Gamma_i-\Gamma_j)}{N}=\frac{(a_i-a_j)^2}{N^2m^{3n}},\ \ \ \ 
\text{for all pairs $i<j\in [n]$.}
\end{equation} 

For the first condition, we note that the equations (\ref{temp})
for all $i \in [n]$ actually form a triangular
system of $n$ equations in the $n$ variables $t_1,\ldots,t_n$,
and thus there exists a unique
  sequence $t_1,\ldots,t_n$ such that (\ref{temp}) holds for all $i\in [n]$.
Moreover, as the absolute value of the 
  right side of (\ref{temp}) is $O(n/(m^{2n+2}N^2))$, the $t_i$'s
  are $O(1/N^2)$ as we promised earlier.
To see this, we let $s$ denote the maximum of the absolute value of the
  right side of (\ref{temp}), over all $i\in [n]$.
Then one can show by induction on $i$ that $|t_i|\le 2^{n-i}s$ for all $i$ from $n$ to $1$.
The claim now follows using $2^n\ll m^n$.

The second condition is more difficult to satisfy.
From (\ref{temp22}), we know that the condition is met if
\begin{equation}\label{targettarget}
\frac{1}{2}-p(i,j)=\frac{(a_i-a_j)^2}{Nm^{3n}(\Gamma_i-\Gamma_j)},\ \ \ \ 
\text{for all $i<j\in [n]$.}
\end{equation}
We will define below the $n$ distributions $\qq_i$, $i\in[n]$,
so that their induced values for the probabilities $p(i,j)$ satisfy
  (\ref{targettarget}).
An important property that we will need for the construction of the $\qq_i$'s
is that all the desired probabilities $p(i,j)$ are very close to 1/2.
Specifically, using $\Gamma_i-\Gamma_j\ge \gamma_i\ge \gamma_n=1/(m^{2n}N)$, we have
\begin{equation}\label{imp1}
0<\frac{1}{2}-p(i,j)\le \frac{(a_i-a_j)^2\cdot Nm^{2n}}{Nm^{3n}}=o\left(\frac{1}{m}\right),
\end{equation}
since $m=\max\hspace{0.05cm}(n^5,a_n)$ and $a_n=\max_{i\in [n]} a_i$.

\subsection{Connecting $p(i,j)$ with $\qq_i$ and $\qq_j$}

Fixing a pair $i<j\in [n]$,
  we examine $p(i,j)$ closer. Recall that $p(i,j)$ is the probability of
  $\alpha-v_i>\beta-v_j$ when $\alpha$ and $\beta$ are drawn independently
  from $Q$, conditioning on $\alpha\ge v_i$ and $\beta\ge v_j$.

For convenience, we use \emph{block} $k$ to denote the subset $\{v_k,v_k+1,\ldots,v_k+2n^3\}$
  of the support of $Q$.
Note that due to the exponential structure of the support of $Q$
  (and the assumption of $i<j$),
if $\alpha$ is in block $k \geq i$ and $\beta$ is in block $\ell > j$ 
with $\ell >k$ then $\beta-v_j > \alpha-v_i$.
Therefore, for $\alpha-v_i>\beta-v_j$ to happen,
  we only need to consider the following three cases:
\begin{flushleft}\begin{description}
\item[\ \ \ \ Case 1:] $\alpha$ is from block $k$ and $\beta$ is from block $\ell$,
  where $k,\ell\in [n]$ satisfy $k\ge\ell > j$.
Then the total contribution of this case to probability $p(i,j)$ is:
$$
\frac{1}{r_ir_j}\cdot \sum_{k\ge\ell >  j} (\gamma_k+t_k)(\gamma_\ell+t_\ell).
$$
\item[\ \ \ \ Case 2:] $\alpha$ is from block $k$ and $\beta$ is from block $j$,
  where $k>i$. Then the total contribution is
$$
\frac{1}{r_ir_j}\cdot \sum_{k>i} (\gamma_k+t_k)(\gamma_j+t_j).
$$
\item[\ \ \ \ Case 3:] Finally, $\alpha$ is from block $i$ and $\beta$ is from block $j$, with
  $\alpha-v_i>\beta-v_j$.
Let $q(i,j)$ denote the probability of $\alpha>\beta$,
  when $\alpha$ is drawn from $\qq_i$ and $\beta$ is drawn from
  $\qq_j$ independently.
Using $q(i,j)$, the total contribution of this case to $p(i,j)$ is
$$
\frac{1}{r_ir_j}\cdot\left(
\left(\frac{\gamma_j}{m}+t_j\right)\cdot \frac{(m-1)\gamma_i}{m}+q(i,j)\cdot
  \frac{(m-1)\gamma_i}{m}\cdot \frac{(m-1)\gamma_j}{m}\right).
$$
\end{description}\end{flushleft}

The probability $p(i,j)$ is equal to the sum of the above three quantities for the three cases.
Hence,  $q(i,j)$ is uniquely determined by the $p(i,j)$
  we aim for, i.e., the unique $p(i,j)$ that satisfies (\ref{targettarget}), because all other
  parameters have been well defined by now, including $t_1,\ldots,t_n$.
  
We show below that, if $\left|\hspace{0.036cm}p(i,j)-1/2\hspace{0.036cm}\right|=o(1/m)$, then the 
  $q(i,j)$ it uniquely determines must 
  satisfy $\left|\hspace{0.036cm}q(i,j)-1/2\hspace{0.036cm}\right|=O(1/m)$.
  
To see this, note first that since $i<j\le n$ and $r_i=\Gamma_i+T_i=1/(Nm^{n+i})\pm O(n/N^2)$,
  we have that 
$$\gamma_i=\frac{m-1}{m^{n+i+1}N}=\frac{m-1}{m}\cdot r_i \pm O\left(\frac{n}{N^2}\right).$$
Thus, $\sum_{k>i}(\gamma_k+t_k)=r_i-\gamma_i-t_i=r_i/m\pm O(n/N^2)$.

Using this fact in the above expressions for the three cases, 
it is easy to show that, other than 
\begin{equation}\label{finalterm}
\frac{1}{r_ir_j}\cdot q(i,j)\cdot \frac{(m-1)\gamma_i}{m}\cdot \frac{(m-1)\gamma_j}{m},
\end{equation}
the contribution of other terms is bounded from above by $O(1/m)$
  (note that $k\ge \ell> j$ implies $k>i$).
Since $\left|\hspace{0.036cm}p(i,j)-1/2\hspace{0.036cm}\right|=o(1/m)$, 
it follows that 
  the term in (\ref{finalterm}) is between
  $1/2-O(1/m)$ and $1/2+O(1/m)$.
Note that $\gamma_i = (m-1)r_i/m \pm O(n/N^2)$ 
(since $i <n$), and $\gamma_j $
is either $(m-1)r_j/m \pm O(n/N^2)$ if $j<n$ or $r_j \pm O(n/N^2)$ if $j=n$.
Therefore, the coefficient of $q(i,j)$ in (\ref{finalterm}) is $1-O(1/m)$.
Since the expression in (\ref{finalterm}) is $1/2 \pm O(1/m)$,
it follows that 
$\left|\hspace{0.03cm}q(i,j)-1/2\hspace{0.03cm}\right|=O(1/m)$.

\subsection{Reverse Engineering of $\qq_i$}

Given $q(i,j)$ for each pair $i<j\in [n]$,
  our final technical step of the reduction is to construct a sequence of probability distributions
  $\qq_1,\ldots,\qq_n$ over $[2n^3]$ such that, for each pair $i<j\in [n]$,
  the probability of $\alpha>\beta$, where $\alpha$ is drawn from $\qq_i$
  and $\beta$ is drawn from $\qq_j$ independently, is exactly 
  $q(i,j)$.
  
In general, such a sequence of distributions may not 
  exist, e.g., consider $n=3$, $q(1,2)=1$, $q(2,3)=1$ and $q(1,3)=0$. 
  But here we are guaranteed that the
  $q(i,j)$'s are close to $1/2$: $|\hspace{0.03cm}q(i,j)-1/2\hspace{0.03cm}|=O(1/m)$.
  We shall show that in this case the desired distributions exist, and we can construct them.

To construct $\qq_1,\ldots,\qq_n$, we define ${n\choose 2}$
  subsets of $[2n^3]$, called \emph{sections}. 
Each section consists of $2n+3$ consecutive integers.
The first section is $\{1,\ldots,2n+3\}$, the second
  section is $\{2n+4,\ldots,4n+6\}$, and~so on and so forth.
(Note that $2n^3$ is clearly large enough for ${n\choose 2}$ sections.)
Each section is labeled, arbitrarily, by a distinct pair $(i,j)$ with $i<j\in [n]$.
We let $t_{i,j,k}$ denote the $k$th smallest integer
  in section (labeled) $(i,j)$, where $k\in [2n+3]$.
Now we define $\qq_\ell$, $\ell\in [n]$. For each section $(i,j)$,
  $i<j\in [n]$, we have:
\begin{description}
\item[\ \ \ \ Case 1:] If $\ell\ne i$ and $\ell\ne j$, then we set
$$q_\ell\big(t_{i,j,\ell}\big)=q_\ell\big(t_{i,j,2n+4-\ell}\big)=\frac{1}{2{n\choose 2}}$$
and $q_\ell\big(t_{i,j,k}\big)=0$ for all other $k\in [2n+3]$.

\item[\ \ \ \ Case 2:] If $\ell=j$, then we set $q_\ell\big(t_{i,j,n+2}\big)=1/ {n\choose 2}$ and
  $q_\ell\big(t_{i,j,k}\big)=0$ for all other $k\in [2n+3]$.
  
\item[\ \ \ \ Case 3:] If $\ell=i$, then we set 
$$
q_{\ell}\big(t_{i,j,n+1}\big)=\frac{1}{2{n\choose 2}}-{n\choose 2}
  \big(q(i,j)-1/2\big)\ \ \ \text{and}\ \ \  
q_\ell\big(t_{i,j,n+3}\big)=\frac{1}{2{n\choose 2}}+{n\choose 2}
  \big(q(i,j)-1/2\big),
$$
and $q_\ell\big(t_{i,j,k}\big)=0$ for all other $k\in [2n+3]$.
\end{description}
This finishes the construction of $\qq_1,\ldots,\qq_n$.
Using $|\hspace{0.03cm}q(i,j)-1/2\hspace{0.03cm}|\hspace{-0.02cm}=
  \hspace{-0.02cm}O(1/m)$ and $m\ge n^5$, we know that $\qq_1,\ldots,\qq_n$ are 
  probability distributions: all entries are nonnegative and sum to $1$.
  
It is also not hard to verify that the distributions satisfy the desired property,
i.e., for each pair $i<j\in [n]$
  the probability of $\alpha>\beta$, where $\alpha$ is drawn from $\qq_i$
  and $\beta$ is drawn from $\qq_j$ independently, is exactly 
  $q(i,j)$. 
  First observe that every section of each distribution $\qq_i$ has the
  same probability $1/{n\choose 2}$. If $\alpha$ and $\beta$
  belong to different sections then the order between $\alpha$ and $\beta$
  is determined by the order of the sections, and both orders have
  obviously the same probability.
  
  So suppose that $\alpha, \beta$ belong to the same section labeled $(g,h)$,
    where $g,h\in [2n+3]$.
  If $g \neq i$ or $h \neq j$, then it is easy to check that both orders
  between $\alpha$ and $\beta$ have the same probability.
  To see this, suppose first that $i \not\in \{g,h\}$. 
  Then $\alpha = t_{g,h,i}$ or $t_{g,h,2n+4-i}$ with equal probability. 
  If $\alpha = t_{g,h,i}$, then
  $\alpha < \beta$ because $\beta$ is 
  either $t_{g,h,j}$ or $t_{g,h,2n+4-j}$ (if $j \not\in \{g,h\}$),
  or $\beta = t_{g,h,n+2}$ (if $h=j$) 
  or $\beta = t_{g,h,n+1}$ or $t_{g,h,n+3}$ (if $g=j$);
  similarly, if $\alpha =t_{g,h,2n+4-i}$ then $\alpha > \beta$.
  Therefore, if $i \not\in \{g,h\}$, then there is equal probability
  that $\alpha < \beta$ and $\alpha > \beta$.
  Similarly, the same is true if $j \not\in \{g,h\}$.
  
  Suppose that $i \in \{g,h\}$ and $j \in \{g,h\}$. Since $i<j$ and $g <h$,
  we must have $i=g$ and $j=h$.
  In this case, $\beta = t_{i,j,n+2}$, and
  $\alpha = t_{i,j,n+1}$ or $\alpha = t_{i,j,n+3}$,
  hence $\alpha > \beta$ iff $\alpha = t_{i,j,n+3}$.
  
  The probability that $\alpha>\beta$ and $\alpha, \beta$ are not both in section
  $(i,j)$ is $$\frac{1}{2} \cdot \left( 1 - \frac{1}{ {n \choose 2}^2 } \right).$$
  The probability that $\alpha>\beta$ and $\alpha, \beta$ are both in section
  $(i,j)$ is $$\frac{1}{ {n \choose 2}} \cdot \left( \frac{1}{ 2{n \choose 2}} + {n \choose 2} 
  \left(q(i,j) -\frac{1}{2}\right) \right)= \frac{1}{ 2{n \choose 2}^2 } + \left(q(i,j) - \frac{1}{2}\right).$$
  Thus, the total probability that $\alpha>\beta$ is exactly $q(i,j)$ as desired.

This concludes the construction and the proof of the theorem. 


\section{Conclusions}

In this paper, we studied the complexity of the Bayesian Unit-Demand Item-Pricing problem with independent distributions. 
We showed that the decision problem is NP-complete even when the distributions are 
  of support size $3$ or when they are identical. We also presented a 
polynomial-time algorithm for distributions of support size $2$.

Several interesting open questions remain. Is there a PTAS for general distributions?
Note that our NP-hardness results do not preclude the existence of an FPTAS. 
Actually, by adapting techniques from \cite{CD11} we can give an FPTAS 
  for the case 
when the supports of the distributions are integers in a bounded interval. Moreover, we conjecture that the IID case can be solved in polynomial time
when the size of the support is constant.

A related question concerns the complexity of the randomized case (i.e., lottery pricing). We conjecture that this problem is intractable, but new ideas
are needed to prove this.


\bibliographystyle{alpha}
\bibliography{references}
\end{document}